\newtheorem{theorem}{Theorem}[section]
\newtheorem{proposition}[theorem]{Proposition}
\newtheorem{corollary}[theorem]{Corollary}
\newtheorem{conj}[theorem]{Conjecture}
\newtheorem{lemma}[theorem]{Lemma}
\newtheorem{definition}[theorem]{Definition}
\newcommand{\R}{\mathbb{R}} 
\renewcommand{\P}{\mathbb{P}}
\newcommand{\E}{\mathbb{E}} 
\newcommand{\telque}{\;:\;}
\newcommand{\FDR}{\mbox{FDR}}
\newcommand{\mbe}{\mathbb{E}}
\newcommand{\mbr}{\mathbb{R}}
\newcommand{\mbp}{\mathbb{P}}
\newcommand{\mbn}{\mathbb{N}}
\newcommand{\bp}{\mathbf{p}}
\newcommand{\mbf}{\mathbf}
\newcommand{\wh}[1]{{\widehat{#1}}}
\newcommand{\ol}[1]{\overline{#1}}
\newcommand{\ind}[1]{{\mbf{1}\{#1\}}}
\newcommand{\prob}[1]{\mbp\brac{#1}}
\newcommand{\brac}[1]{\left[#1\right]}
\newcommand{\set}[1]{\left\{#1\right\}}
\newcommand{\eps}{\varepsilon}
\newcommand{\LSU}{\mbox{LSU}}
\newcommand{\LSD}{\mbox{LSD}}
\newcommand{\SUD}{\mbox{SUD}_{\lambda}}
\newcommand{\SUDm}{\mbox{SUD}_{\lambda_m}}
\newcommand{\LSUD}{\mbox{LSUD}_{\lambda}}
\newcommand{\SU}{\mbox{SU}}
\newcommand{\SD}{\mbox{SD}}
\newcommand{\surm}{\mathcal{P}}
\newcommand{\sdrm}{\widetilde{\mathcal{P}}}
\newcommand{\sufm}{\mathcal{Q}}
\newcommand{\sdfm}{\widetilde{\mathcal{Q}}}
\newcommand{\FDP}{\mbox{FDP}}
\renewcommand{\l}{\ell}
\newcommand{\TSud}{\mathcal{U}}
\newcommand{\RM}{\mathrm{RM}(m,\pi_0,F)} 
\newcommand{\RMDU}{\mathrm{RM}(m,\pi_0,F\equiv 1)} 
\newcommand{\FM}{\mathrm{FM}(m,m_0,F)} 
\newcommand{\FMDU}{\mathrm{FM}(m,m_0,F\equiv 1)} 
\newcommand{\DU}{\mathrm{DU}(m,m_0)}
\newcommand{\toto}{\l}
\begin{document}
\renewcommand{\baselinestretch}{1.2}
\markright{
}
\markboth{\hfill{\footnotesize\rm GILLES BLANCHARD, THORSTEN DICKHAUS, ETIENNE ROQUAIN AND FANNY VILLERS
}\hfill}
{\hfill {\footnotesize\rm } \hfill}
\renewcommand{\thefootnote}{}
$\ $\par
\fontsize{10.95}{14pt plus.8pt minus .6pt}\selectfont
\vspace{0.8pc}
\centerline{\large\bf ON LEAST FAVORABLE CONFIGURATIONS }
\vspace{2pt}
\centerline{\large\bf FOR STEP-UP-DOWN TESTS}
\vspace{.4cm}
\centerline{Gilles Blanchard, Thorsten Dickhaus, Etienne Roquain and Fanny Villers
}
\vspace{.4cm}
\centerline{\it Potsdam University, Humboldt University and UPMC University}
\vspace{.55cm}
\fontsize{9}{11.5pt plus.8pt minus .6pt}\selectfont

\begin{quotation}
\noindent {\it Abstract:}
This paper investigates an open issue related to false discovery rate (FDR) control of step-up-down (SUD) 
multiple testing procedures. It has been established in earlier literature that for this type of procedure,
under some broad conditions, and in an asymptotical sense,  
the FDR is maximum when the signal strength under the alternative is maximum.
In other words, so-called ``Dirac uniform configurations" are asymptotically {\em least favorable} in this setting. 
It is known that this 
property also holds in a 
non-asymptotical sense (for any finite number of hypotheses), for
the two extreme versions of SUD procedures, namely step-up and step-down (with extra conditions for the step-down case).
It is therefore  very natural to conjecture that this non-asymptotical {\em least favorable configuration} 
property could more generally be true for all ``intermediate'' forms of SUD procedures.
We prove that this is, somewhat surprisingly, not the case.
The argument is based on the exact calculations proposed earlier by \citet{RV2010}, that we extend here by generalizing 
 Steck's recursion to the case of two populations. Secondly, we quantify the magnitude of this phenomenon by providing 
a nonasymptotic upper-bound and explicit vanishing rates as a function of the
total number of hypotheses.
\par

\vspace{9pt}
\noindent {\it Key words and phrases:}
False discovery rate, least favorable configuration, multiple testing, Steck's recursions, step-up-down.
\par
\end{quotation}\par

\fontsize{10.95}{14pt plus.8pt minus .6pt}\selectfont
\section{Introduction}\label{sec:intro}

In mathematical statistics, so-called least favorable parameter configurations (LFCs) play a pivotal role. 
For a given statistical decision problem over a parameter space $\Theta$
and a given decision rule $\delta$, we define
an LFC $\theta^*(\delta)$ as any element of $\Theta$ that maximizes the risk (expected loss) of
$\delta$ under this parameter, i.e.,
$$
\forall \theta \in \Theta,\: R(\theta, \delta) \leq R(\theta^*(\delta), \delta),
$$
where $R(\theta, \delta)$ denotes the risk 
of rule $\delta$ under $\theta$. When available, the knowledge of an LFCs allows one to obtain a bound on the risk 
over a possibly very large parameter space, including non- or semi-parametric cases where $\Theta$ has infinite dimensionality. Theoretical investigations of minimax properties can rely on the computation of an LFC. 
Such knowledge is also relevant for practice, because a user of the procedure can be provided with a performance guarantee if an LFC is known. In this case, even if the risk under the LFC cannot be computed in closed form, it can be approximated by a Monte-Carlo method simulating the distribution corresponding to the LFC.
Finally, if the parameter space $\Theta$ is partitioned into disjoint, restricted submodels,
it can be of interest to gain knowledge of the LFC of a decision rule $\delta$ separately over each
submodel, thus providing finer-grained information.


LFC considerations naturally occur in hypothesis testing problems. 
A classical example is that of one-sided tests over a one-dimensional parameter space 
admitting an isotonic likelihood ratio: it is well-known that the LFC
for the type I error probability is located at the boundary of the null hypothesis. This fact is used to derive critical values for uniformly more powerful tests in that setting.

The LFC problem is particularly delicate for multiple hypotheses testing and the latter has been investigated by many authors in previous literature \citep{FR2001,BY2001,LR2005,FDR2007,RW2007,GR2008,SH2008,FDR2009,FG2009,Gont2010}. In that setting, a family of $m \geq 2$ null hypotheses $H_1,\ldots,H_m$ is to be tested simultaneously under the scope of a common statistical model with parameter space $\Theta$, and some type I error criterion is used that accounts for multiplicity. 
For theoretical as well as practical applications, it is
relevant  to determine LFCs over the restricted parameter spaces $\Theta_{m,m_0}$ where exactly $m_0$ out of $m$ of the null hypotheses are 
true.
In this setting, LFC results can be derived straightforwardly only in special situations. In the present work, we restrict our attention to multiple testing procedures that depend on the observed data only through a collection of marginal $p$-values, each associated to an individual null hypothesis. This is a commonly used setting for multiple testing problems in high dimension.
Moreover, we consider procedures that reject exactly those null hypotheses having their 
$p$-value less than a certain common threshold $t^*$, which can possibly be data-dependent.
That is, $t^*$ may depend in a complex way of the entire family of $p$-values. We call
such procedures threshold-based for short.

In this setting, 
 LFCs crucially depend on the type I error
criterion considered. One frequently encountered family of such criteria is given through loss functions that
only depend on the number of type I errors, denoted
\begin{equation}
\label{eq:defV}
V_m=V_m(\theta,\delta) 
:= |\{1\leq i \leq m: H_i \text{~~is true for $\theta$ and gets rejected by $\delta$}\}|.
\end{equation}
In other words, the risk takes the form $R(\theta,\delta)=\E_\theta[\phi(V_m)]$.
Natural assumptions are that $t^*$ is a nonincreasing function in each $p$-value and that
$\phi$ is a nondecreasing function.
Then, by additionally assuming that the $p$-values are jointly independent,  it is known that the LFC over $\Theta_{m,m_0}$ 
is a {\em Dirac-uniform} (DU) distribution \citep[introduced by][]{FR2001}, 
i.\,e.\,, such that $p$-values corresponding
to true nulls are independent uniform variables, while $p$-values 
under alternatives follow a Dirac distribution with point mass $1$ in zero. 
This result is formally restated in Appendix~\ref{sec:FWER_LFC}.
For example, this LFC property holds true under the above assumptions for the $k$-family-wise error rate ($k$-FWER). For a given $\theta \in \Theta$, the $k$-FWER under  $\theta$ is defined by $\text{FWER}_{k,\theta} := \mathbb{P}_\theta(V_m \geq k)$. Strong control of the (1-)family-wise error rate, i.~e., ensuring that $\sup_\theta \text{FWER}_{1,\theta} \leq \alpha$ for a pre-defined level $\alpha \in (0, 1)$, is the usual type I error concept in traditional multiple hypotheses testing theory. 

However, over the last two decades, progress in application fields such as genomics, proteomics, neuroimaging, and astronomy has lead to massive multiple testing problems with very large systems of hypotheses \citep{DL2008,PNBL2005,Astro2001}. In this type of applications, ($k$-)FWER control is typically too strict a requirement, and a less stringent notion of type I error control is needed in order to ensure reasonable power of corresponding multiple tests. In particular, the false discovery rate (FDR) 
introduced by \citet{BH1995} has become a standard criterion for type I error control in large-scale multiple testing problems. The FDR is defined as the expected proportion of type I errors among all rejections.
Unfortunately, it does not fall into the class of type I error measures considered in the previous paragraph, so that the above result does not apply. Furthermore, because the average ratio of two dependent random variables is not necessarily increasing in the value of the numerator,
the LFC problem for the FDR criterion turns out to be a challenging issue -- even for simple classes of multiple tests and under independence assumptions.

In this work, we contribute to the theory of LFCs under the FDR criterion for so-called step-up-down multiple tests (SUD procedures, for short). These procedures constitute an important and wide subclass of threshold-based multiple testing procedures, wherein the threshold $t^*$ is obtained by comparing the reordered $p$-values to a fixed set of critical values \citep[see][]{TLD1998,Sar2002}. 
Furthermore, recent research has reinforced the interest of this type of procedures. For instance, 
\citet{FDR2009} have shown that step-up-down tests can be used is association with the so-called
asymptotically optimal rejection curve (AORC) to provide an asymptotically (as $m\rightarrow \infty$) valid FDR control which is additionally optimal in some specific sense.

Namely, the contributions of the paper are as follows:
\begin{itemize}
\item a survey of known LFC results for SUD procedures in specific model classes is provided in Section~\ref{sec:surveyLFC};
\item new results and surprising counterexamples for LFCs of SUD procedures are derived in Section~\ref{sec:newres}.
\item in Section~\ref{sec:exactformulas}, Steck's recursion is extended to the case of two populations and we provide a summary of the exact formulas for computing the FDR proposed by \citet{RV2010,RV2010sup}; these formulas are used to derive the counterexamples previously mentioned.
\end{itemize}

\section{Mathematical setting}

\subsection{Models}

Given a statistical model, we consider a finite set of $m\geq 2$ null hypotheses $H_1,\ldots,H_m$,
and a corresponding, fixed collection of tests with associated $p$-value family
$\mbf{p}:=(p_i, i\in\{1,...,m\})$. 
 For simplicity, we skip somewhat the formal definition of $p$-values and of the underlying statistical
model and consider directly a statistical model for the $p$-values, that is, a model for the 
joint distribution of $\mbf{p}$. 
In what follows, we denote by $\mathcal{F}$ the set containing c.d.f.'s from $[0,1]$ into $[0,1]$ 
that are continuous. 

\begin{itemize}
\item[\textbullet]
The $p$-value family $\mbf{p}$ follows the \textit{(two group) fixed mixture model} with parameters $m\geq 2$, $1\leq m_0\leq m $ and $F\in \mathcal{F}$, for which the corresponding distribution is denoted by  $\mbox{FM}{(m,m_0,F)}$, if $\mbf{p}=(p_i, i\in\{1,...,m\})$ is a family of mutually independent variables and for all $i$, 
$$p_i \sim \left\{\begin{array}{ll} U(0,1) & \mbox{ if } 1\leq i \leq m_0, \\  F &\mbox{ if } m_0+1\leq i \leq m,\end{array}\right.$$
where $U(0,1)$ denotes the uniform distribution on $(0,1)$.
\item[\textbullet]
The $p$-value family $\mbf{p}$ follows the \textit{(two group) random mixture model} with parameters $m\geq 2$, $\pi_0\in[0,1]$ and $F\in \mathcal{F}$, for which the corresponding distribution is denoted by  $\mbox{RM}{(m,\pi_0,F)}$, if there is an (unobserved) binomial random variable $m_0\sim\mathcal{B}(m,\pi_0)$ such that  $\mbf{p}$
follows the $\mbox{FM}{(m,m_0,F)}$ model conditionally on $m_0$. In that case, the $p$-values are i.i.d. with (unconditional) c.d.f. $G(t)=\pi_0 t + (1-\pi_0) F(t)$.
\end{itemize}

In the above definition, note that the true nulls are automatically assigned to the $m_0$ (random or not) first coordinates. This can be assumed without loss of generality, since we only consider procedures which only depend on $p$-values through their reordering in increasing order. 

A common additional assumption on $F$ is that $F(x)\geq x$ or that $F$ is concave. 
For instance, these assumptions are both satisfied in the two following standard examples:
\begin{itemize}
\item[-] 
Gaussian location model: $F(t)=\ol{\Phi}(\ol{\Phi}^{-1}(t)-\mu)$, for a given alternative mean $\mu>0$, where $\ol{\Phi}(z)=\P(Z\geq z)$ for $Z\sim\mathcal{N}(0,1)$. This corresponds to the alternative distribution of $p$-values
when testing for $\mu\leq 0$ under a Gaussian location shift model with unit variance.
\item[-] Dirac $\delta_0$ distribution: $F$ is identically equal to $1$, as introduced by \citet{FR2001}. The corresponding distribution in the FM model is called Dirac-uniform (DU) configuration (or distribution) and denoted by 
$\FMDU$ or simply $\DU$. We define similarly  $\RMDU$. Note that the Dirac-uniform configuration can be seen as an instance of the 
Gaussian c.d.f. for an alternative mean $\mu=\infty$.
\end{itemize}
In the existing literature, the Dirac-uniform distribution has often be considered as the first candidate for being an LFC of several global type I error rates (with or without a theoretical support) \citep[see, e.g.,][]{FDR2007,RW2007,SH2008}.

\subsection{Procedures} \label{sec_proced}

 In this paper, we consider the particular class of multiple testing procedures called step-up-down procedures, 
introduced by \citet{TLD1998}, see also \citet{Sar2002}.
First define a \textit{threshold} or \textit{critical value} collection as any nondecreasing sequence $\mbf{t}= (t_k)_{1\leq k \leq m} \in [0,1]^m$ (with $t_0=0$ by convention).
\begin{definition}
 Let us order the $p$-values $p_{(1)}\leq p_{(2)} \leq ...\leq p_{(m)}$ (with the convention $p_{(0)}=0$). 
For any threshold collection $\mbf{t}$, the \textit{step-up-down (SUD) procedure} with threshold 
collection $\mbf{t}$ and of order
$\lambda\in\{1,...,m\}$, denoted here by ${\SUD}(\mbf{t})$, rejects the $i$-th hypothesis if $p_i\leq t_{\hat{k}}$, with 

\begin{equation}\label{equ-defSUD}
\widehat{k}= \left\{\begin{array}{ll}\max\{k\in\{\lambda,\dots,m\}\telque \forall  k'\in\{\lambda,\dots,k\}, \:p_{(k')}\leq {t}_{k'}\} & \mbox{ if } p_{(\lambda)} \leq t_{\lambda};\\  \max\{k\in\{0,\dots,\lambda\}\telque \:p_{(k)}\leq {t}_{k}\}&\mbox{ if }p_{(\lambda)} > t_{\lambda}. \end{array} \right.
 \end{equation}
\end{definition}

In the sequel, for convenience, we identify procedures with their rejection sets, e.g., ${\SUD}(\mbf{t})=\{1\leq i \leq m\telque p_i\leq t_{\hat{k}}\}$.
An important remark is that the cases $\lambda=1$ and $\lambda=m$ correspond to the traditional step-down and step-up procedures, respectively. An illustration is provided in Figure~\ref{fig:sud}.

\begin{figure}[h!]
\includegraphics[scale=0.17]{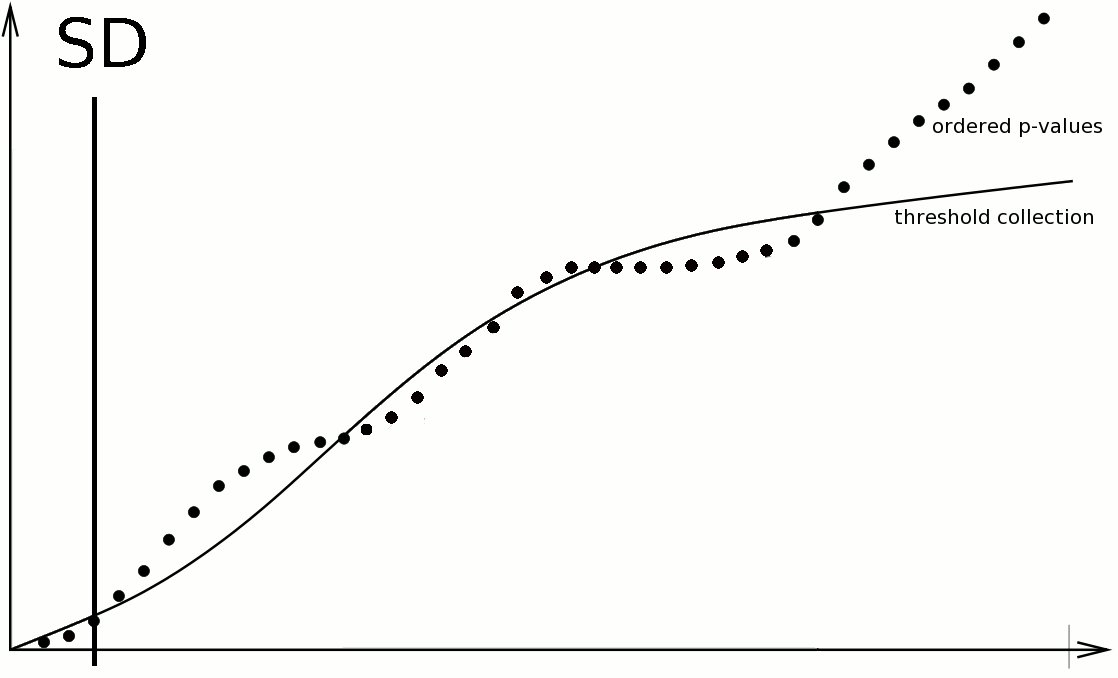}\includegraphics[scale=0.17]{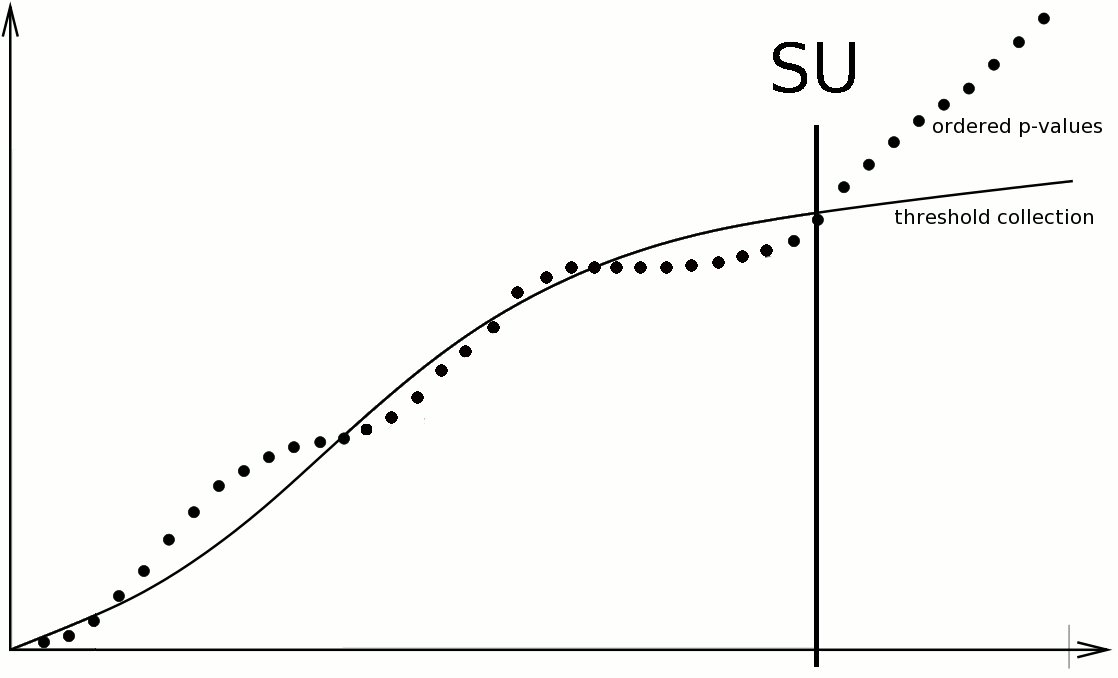}\\
\includegraphics[scale=0.17]{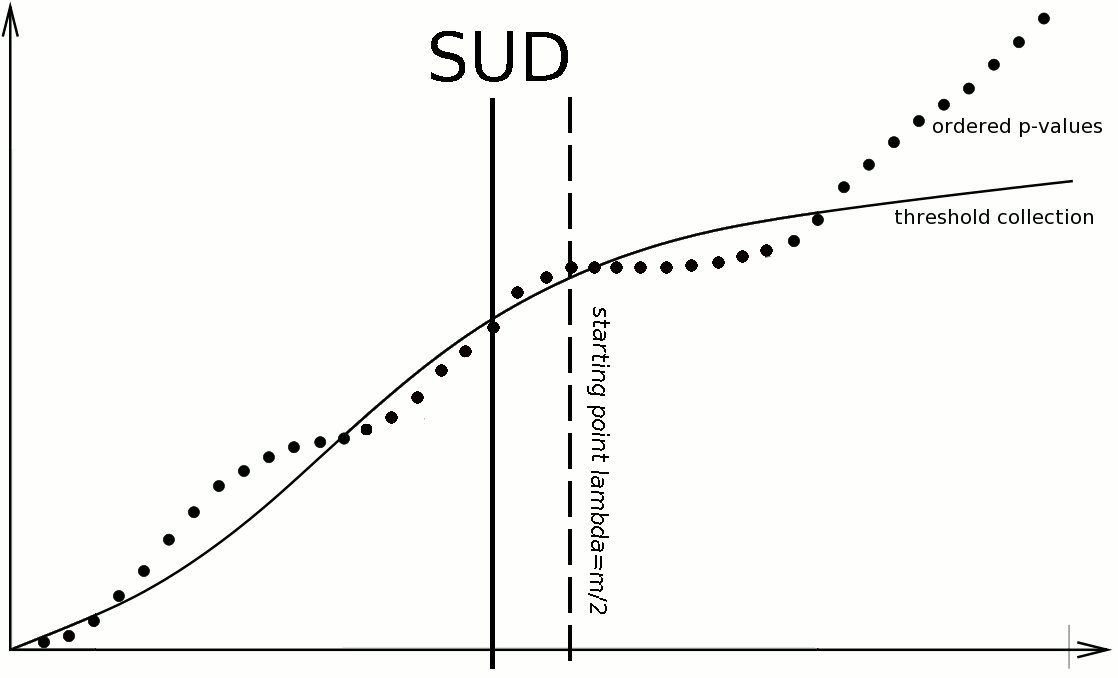}\includegraphics[scale=0.17]{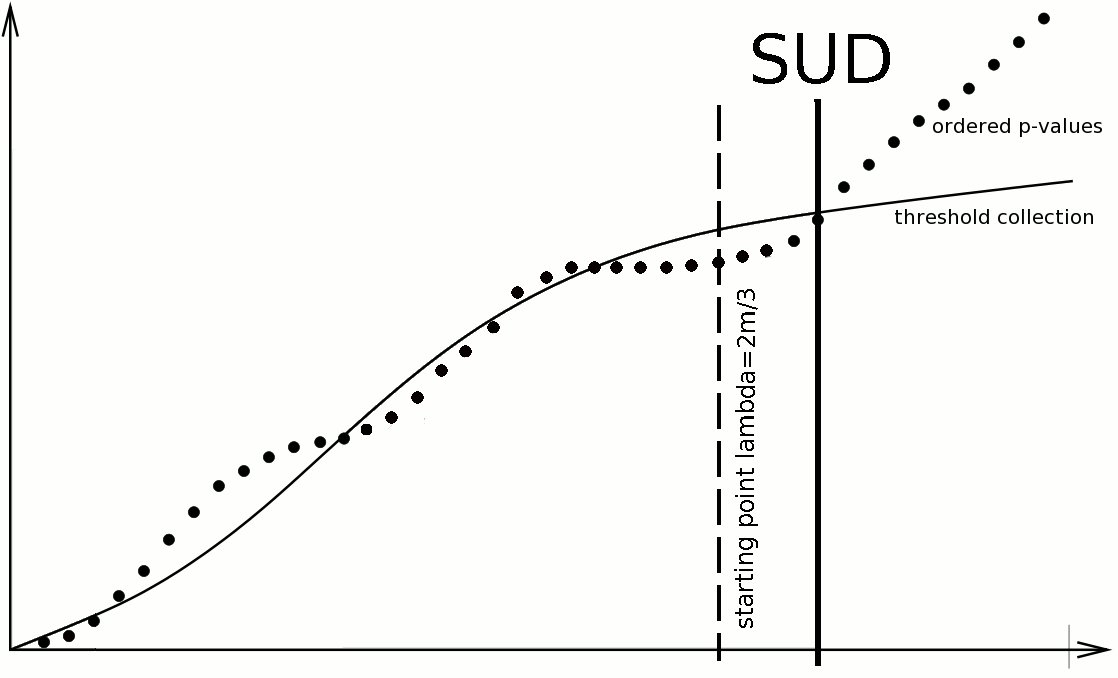}
\caption{Value of $\widehat{k}$ (vertical solid line), defined by \eqref{equ-defSUD},  for several procedures of the SUD type. The bottom-right SUD procedure (using $\lambda=2m/3$) coincides with the SU procedure for this realization of the $p$-value family.\label{fig:sud}}
\end{figure}

A classical choice for the threshold collection $\mbf{t}$ consists of Simes' \citeyearpar{Sim1986} critical values $t_k=\alpha k/m$ for a pre-specified level $\alpha\in(0,1)$. The corresponding step-up-down procedure is called  the \textit{linear step-up-down} procedure and is denoted by $\LSUD$.
In particular,  for $\lambda=1$ and $\lambda=m$, the procedure $\LSUD$ is simply denoted by $\LSD$ and $\LSU$, respectively. $\LSU$ corresponds to the famous linear step-up procedure of \citet{BH1995}. 

It is common to  consider threshold collections of the form
$t_k=\rho(k/m)$ for a function $\rho:[0,1]\rightarrow [0,1]$. This function
is generally assumed to satisfy the following assumptions:
\begin{align}\label{rhocond1}
&\mbox{$\rho:[0,1]\rightarrow [0,1]$ is continuous and non-decreasing;}\\&\mbox{$x\in(0,1]\rightarrow \rho(x)/x$ is non-decreasing.}\label{rhocond2}
\end{align}
The function $\rho$ is called the critical value function (while its inverse is generally called the
rejection curve, see e.g. \citealp{FDR2009}).
Observe that assumption \eqref{rhocond1} can always be assumed to hold when $m$ is fixed,
however it is of interest in the case of an asymptotical analysis as $m\rightarrow \infty$ (in which
case $\rho$ is assumed to be independent of $m$).
Assumption \eqref{rhocond2} on the other hand restricts the possible threshold collection also for  
any fixed $m$.  It will be often used in this paper. For a fixed finite $m$, assumptions 
\eqref{rhocond1} and \eqref{rhocond2} taken together are equivalent to ``$k\mapsto t_k/k$ is non-decreasing".


\subsection{False discovery rate and LFCs}

As introduced by \citet{BH1995}, the false discovery rate of a multiple testing procedure is defined as the averaged ratio of the number of erroneous rejections to the total number of rejections. In our setting, for a distribution $P$ being either $\FM$ or $\RM$, the FDR of a step-up-down procedure can be written as
\begin{align}
\label{equ_FDR_FM}
\FDR(\SUD(\mbf{t}),P)&=\E_{\mathbf{p}\sim P} \left[ \FDP(\SUD(\mbf{t}),m_0,\mbf{p}) \right],
\end{align}
for which the FDP is the false discovery proportion defined by
\begin{align}
\label{equ_FDP_FM}
\FDP(\SUD(\mbf{t}),m_0,\mbf{p})&= \frac{ |\{1\leq i \leq m_0 : p_i\leq  t_{\hat{k}} \}|}{ |\{1\leq i \leq m : p_i\leq  t_{\hat{k}} \}|\vee 1},
\end{align}
where $|\cdot|$ denotes the cardinality function and in which $m_0$ is either fixed or random whether $P$ is $\FM$ or $\RM$, respectively. For short, the quantity $\FDR(\SUD(\mbf{t}),\FM)$ is often denoted $\FDR(\SUD(\mbf{t}),m_0,F)$, or $\FDR(\SUD(\mbf{t}),F)$ when the context makes the interpretation unambiguous. Similarly,  $\FDR(\SUD(\mbf{t}),\RM)$ can be shortened as $\FDR(\SUD(\mbf{t}),\pi_0,F)$, or $\FDR(\SUD(\mbf{t}),F)$.

\begin{definition}\label{def:LFC}
Any $F'  \in \mathcal{F}$ is called a least favorable configuration (LFC) for the FDR of $\SUD(\mbf{t})$ in a fixed mixture model with $m_0$ true hypotheses out of $m$ and  over the class $\mathcal{F}$
if 
$$
\forall F\in \mathcal{F},\:\: \FDR(\SUD(\mbf{t}),m_0,F) \leq \FDR(\SUD(\mbf{t}),m_0,F').
$$
A similar definition holds for a random mixture model with $m$ hypotheses and 
proportion $\pi_0$ of true hypotheses. 
\end{definition}
The above definition can possibly be restricted to a subclass 
$\mathcal{F}' \subset \mathcal{F}$ (typically, the class of concave c.d.f.s).
This will be clearly specified in the context.

Obviously, if $F'$ is an LFC for the fixed mixture model for all values of $m_0$,
then it is also an LFC in the $\RM$ model for any value of $\pi_0$
(by integrating over  $m_0\sim\mathcal{B}(m,\pi_0)$).

\section{Survey of known LFCs under the FDR criterion}\label{sec:surveyLFC}

Recent results about LFCs for the FDR criterion
related to step-up-down type procedures are summarized in Figure~\ref{fig_FDR} and explained below. 
These results hold either under the fixed mixture model or the random mixture model, hence involve a maximization over distributions where the $p$-values are independent. (While the present paper is focused on this setting, 
let us mention here briefly, that LFCs for the FDR criterion under {\em arbitrary} dependencies 
have also been studied, see, e.g. \citealp{LR2005,GR2008}.)

First, let us consider the problem of the monotonicity of $\FDR(\mbox{SUD}_{\lambda}(\mbf{t}))$ in $\lambda$ (vertical arrows). 
Recently, it was proved that, whenever $F$ is concave, the FDR grows as the rejection set grows \citep[Theorem~4.1]{ZZD2011}. Interestingly, the rejection set $R$ can have a very general form: the only condition is that $|R|$ is a measurable function of the order statistics of the family of $p$-values under consideration. From \eqref{equ-defSUD} and since for any $\lambda\in\{1,\dots,m-1\}$, 
the rejection set of $\mbox{SUD}_{\lambda}(\mbf{t})$ is included in the one of $\mbox{SUD}_{\lambda+1}(\mbf{t})$, we obtain that for a concave $F$,  $$\FDR(\mbox{SUD}_{\lambda}(\mbf{t}))\leq \FDR( \mbox{SUD}_{\lambda+1}(\mbf{t})),$$ both for $\FM$ and $\RM$ models. This implies in particular that $\FDR(\mbox{SD}(\mbf{t}))\leq \FDR( \mbox{SU}(\mbf{t}))$ for a concave $F$.
Other studies establish similar inequalities, but with a condition on the threshold collection 
$\mbf{t}$, not on $F$. 
Precisely,  Theorem~4.3 of \citet{FDR2009} and Theorem~3.10 of \citet{Gont2010} establish that, when $k\mapsto t_k/k$ is nondecreasing, for any $\lambda\in\{1,\dots,m-1\}$, 
$$\FDR(\mbox{SUD}_{\lambda}(\mbf{t}))\leq \FDR( \mbox{SU}(\mbf{t})),$$
both for $\FM$ and $\RM$ models.
In particular, the fact that $\FDR( \mbox{SU}(\mbf{t}))$ dominates the FDR of  $ \mbox{SD}(\mbf{t})$ 
 is quite well established in multiple testing literature. Nevertheless, let us stress that this is no longer the case for ``atypical'' 
configurations of $F$ and $\mbf{t}$, as we state in Appendix~\ref{result-extrem}.

Secondly, let us consider the monotonicity of $\FDR(\mbox{SUD}_{\lambda}(\mbf{t}),F)$ in $F$. 
In the step-up case (i.e., $\lambda=m$), the situation is somewhat simple: Theorem~5.3 of \citet{BY2001} states that $F\leq F'$ implies  $\FDR(\mbox{SUD}_{\lambda}(\mbf{t}),F) \leq \FDR(\mbox{SUD}_{\lambda}(\mbf{t}),F')$ whenever $k\mapsto t_k/k$ is nondecreasing. Moreover, the inequality is reversed whenever  $k\mapsto t_k/k$ is nonincreasing.
In the step-down case (i.e., $\lambda=1$) and for a $\RM$ model, Theorem 4.1 of \citet{RV2010} states that the Dirac-uniform configuration ($F\equiv 1$) is an LFC under some complex condition on the threshold 
collection $\mbf{t}$, that is fulfilled by the linear threshold collection $t_k=\alpha k/m$, $\alpha\in(0,1)$ and over the class of concave c.d.f.'s.  However, for $\lambda\notin \{1,m\}$ (i.e., an ``intermediate" SUD procedure), finding LFC's is more delicate and the only known result is asymptotic, as $m$ tends to infinity.
Precisely, combining  Theorem~4.3 of \citet{FDR2009} and Lemma~3.7 of \citet{Gont2010}, we easily derive the following result: 
\begin{theorem}\label{th-Gont}[\citet{Gont2010}]
Consider a step-up-down procedure using a threshold collection of the form 
$t_k=\rho(k/m)$, where $\rho$ satisfies \eqref{rhocond1} and \eqref{rhocond2}.
Assume that the step-up-down procedure is performed at an order $\lambda=\lambda_m$ such that $\lambda_m/m\rightarrow \kappa\in[0,1]$.
Assume that $m_0/m\rightarrow\zeta\in[0,1]$ and that, 
under the $\DU$ distribution, the number of rejections of $\SUD(\mbf{t})$ satisfies that $|\SUD(\mbf{t})|/m$ converges in probability as $m$ grows to infinity.
Then, in the fixed mixture model $\FM$, we have for any $F\in\mathcal{F}$,

\begin{equation}\label{equ_asymp}
\limsup_m\big\{ \FDR(\SUDm(\mbf{t}),F) - \FDR(\SUDm(\mbf{t}),F\equiv 1)\big\}\leq 0,\end{equation}
either for all $\zeta\in[0,1]$ if $\kappa>0$ or for all $\zeta\in[0,1)$ if $\kappa=0$.
\end{theorem}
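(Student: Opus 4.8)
The plan is to reduce the claim for an arbitrary intermediate order $\lambda_m$ to the pure step-up case --- for which the least favorable configuration is already understood --- and then to show that, under the Dirac-uniform configuration, the intermediate procedure and the step-up procedure become indistinguishable in FDR as $m\to\infty$. Concretely, I would build the finite-$m$ chain
\[
\FDR(\SUDm(\mbf{t}),F)\leq \FDR(\SU(\mbf{t}),F)\leq \FDR(\SU(\mbf{t}),F\equiv 1)
\]
and then compare its right-hand side with $\FDR(\SUDm(\mbf{t}),F\equiv 1)$ in the limit.

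First I would justify the two finite-$m$ inequalities, both of which rest only on \eqref{rhocond2}. Since $\rho$ satisfies \eqref{rhocond2}, the collection $\mbf{t}$ obeys ``$k\mapsto t_k/k$ nondecreasing'', so Theorem~4.3 of \citet{FDR2009} gives the left inequality: the FDR of any SUD procedure is dominated by that of the step-up procedure with the same critical values. The right inequality is the monotonicity of the step-up FDR in $F$ recorded in the survey (Theorem~5.3 of \citet{BY2001}): under the same condition on $\mbf{t}$, and since every $F\in\mathcal{F}$ satisfies $F\leq 1$ pointwise (the constant $1$ being the alternative c.d.f.\ of the configuration $F\equiv 1$), enlarging the alternative c.d.f.\ can only increase $\FDR(\SU(\mbf{t}),\cdot)$. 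Crucially, neither step requires concavity of $F$, which is why the conclusion will hold for \emph{all} $F\in\mathcal{F}$. Subtracting $\FDR(\SUDm(\mbf{t}),F\equiv 1)$ from the endpoints of the chain yields, for every $m$,
\[
\FDR(\SUDm(\mbf{t}),F)-\FDR(\SUDm(\mbf{t}),F\equiv 1)\leq \FDR(\SU(\mbf{t}),F\equiv 1)-\FDR(\SUDm(\mbf{t}),F\equiv 1).
\]

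The crux is then to show that the right-hand side tends to $0$, i.e.\ that the step-up and the intermediate step-up-down procedures share the same asymptotic FDR under the Dirac-uniform configuration. This is where Lemma~3.7 of \citet{Gont2010} enters, used together with the standing hypothesis that $|\SUDm(\mbf{t})|/m$ converges in probability under $\DU$. Under $\DU$ all alternative $p$-values equal $0$ and are systematically rejected, so the data-dependent threshold $t_{\hat k}$ is driven by a crossing point of the critical value function $\rho$ with the limiting empirical c.d.f.\ $t\mapsto (1-\zeta)+\zeta t$ of the $p$-values. For any positive limiting order $\kappa$, the intermediate procedure locks onto the same relevant crossing as the step-up procedure, so their limiting thresholds --- and hence their limiting FDRs --- coincide; the convergence hypothesis guarantees that these deterministic limits exist and, the FDP being bounded by $1$, that convergence of the threshold transfers to convergence of the FDR. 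The case $\kappa=0$ is genuinely more delicate, since a pure step-down threshold is governed by the \emph{smallest} admissible crossing; this still matches the step-up limit as long as $\zeta<1$, because the positive mass $1-\zeta$ of $p$-values accumulating at $0$ forces the relevant crossing to be positive and common to both procedures, whereas at $\zeta=1$ the crossing degenerates. This is precisely the configuration excluded in the statement.

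Taking $\limsup_m$ in the last display and inserting this asymptotic equivalence gives \eqref{equ_asymp}. The main obstacle is the asymptotic-equivalence step under $\DU$: identifying the limiting rejection threshold as the correct crossing point of $\rho$ with the limiting mixture c.d.f., proving that for $\kappa>0$ (or $\kappa=0$ with $\zeta<1$) this crossing is the same for the step-up-down and the step-up procedures, and upgrading threshold convergence to convergence of the FDR. All the other inequalities are immediate from results stated earlier, so the entire difficulty is concentrated in this limiting analysis, which is the part borrowed from \citet{Gont2010}.
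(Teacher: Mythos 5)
Your finite-$m$ chain $\FDR(\SUD(\mbf{t}),F)\leq \FDR(\SU(\mbf{t}),F)\leq \FDR(\SU(\mbf{t}),F\equiv 1)$ is correct, but the step it forces you into --- showing that $\FDR(\SU(\mbf{t}),F\equiv 1)-\FDR(\SUD(\mbf{t}),F\equiv 1)\rightarrow 0$ --- is false under the hypotheses of the theorem, and this is where the argument breaks. Conditions \eqref{rhocond1} and \eqref{rhocond2} do not make the equation $G^{DU}_\zeta(\rho(u))=u$ have a unique relevant solution, so the step-up and the intermediate step-up-down procedures need not lock onto the same crossing. The AORC of \eqref{rhoAORC} is the standard counterexample, and it satisfies both conditions, so the theorem is meant to cover it: since $\rho(1)=1$ one has $t_m=1$, the pure step-up rejects \emph{all} hypotheses for every realization, and $\FDR(\SU(\mbf{t}),F\equiv 1)=m_0/m\rightarrow\zeta$; meanwhile, for $\zeta\in(\alpha,1)$ and $\lambda_m/m\rightarrow\kappa<1$, the SUD converges under $\DU$ to the lower crossing $(1-\zeta)/(1-\alpha)$ and $\FDR(\SUD(\mbf{t}),F\equiv 1)\rightarrow\alpha$. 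Both rejection proportions converge in probability, so the theorem's hypothesis (which concerns $|\SUD(\mbf{t})|/m$ itself and transfers nothing to the step-up procedure) is met, yet your chain only yields $\limsup\leq\zeta-\alpha>0$, not $\leq 0$. Your sentence ``the intermediate procedure locks onto the same relevant crossing as the step-up procedure'' is precisely the unjustified (and generally wrong) claim.

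The route the paper has in mind avoids the detour through $\SU(\mbf{t})$ altogether. Theorem~4.3 of \citet{FDR2009} is used in its sharp form, $\FDR(\SUD(\mbf{t}),m_0,F)\leq \frac{m_0}{m}\,\E_{\mathrm{DU}(m,m_0-1)}\!\left[\rho(\hat u)/\hat u\right]$, i.e.\ an upper bound valid for every $F\in\mathcal{F}$ expressed as a Dirac-uniform expectation evaluated at the SUD's \emph{own} rejection proportion $\hat u$ (this is exactly how it is invoked in the proof of Theorem~\ref{thm}). Lemma~3.7 of \citet{Gont2010}, combined with the assumed convergence of $|\SUD(\mbf{t})|/m$ under $\DU$, then shows that $\FDR(\SUD(\mbf{t}),m_0,F\equiv 1)$ converges to the same limit as that upper bound: at any crossing $u^*$ one has $\rho(u^*)=(u^*-(1-\zeta))/\zeta$, so $\zeta\rho(u^*)/u^*$ coincides with the limiting DU false discovery proportion $(u^*-(1-\zeta))/u^*$, \emph{whichever} crossing the SUD selects. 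That invariance is exactly what your reduction to the step-up procedure cannot provide; to repair your proof you would have to replace $\FDR(\SU(\mbf{t}),F\equiv 1)$ by this DU upper bound attached to $\SUD(\mbf{t})$ itself.
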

However, for a finite $m$, and $\lambda\notin \{1,m\}$ no result is known about LFC's to our knowledge. This is the point of the paper and is symbolized by the question mark in the middle of Figure~\ref{fig_FDR}. 

Finally, let us consider the linear SUD procedure, that is, the SUD procedure using the threshold collection
$t_k=\alpha k/m$, $\alpha\in(0,1)$ (corresponding to $\rho(x)=\alpha x$).
Since both LSU and LSD procedures satisfy that DU is an LFC and since an SUD procedure can be expressed as a combination of an SU and an SD, we might make the following conjecture, which is the starting point of this paper. 
\begin{conj}\label{main-conj}
For any $m\geq 2$, the Dirac-uniform configuration ($F\equiv1$) is a least favorable configuration for the FDR of the linear step-up-down procedure, in the $\RM$ and $\FM$ models.
\end{conj}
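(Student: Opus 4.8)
The plan is to attempt to establish the conjecture by reducing it to a monotonicity statement in $F$. In the pointwise order on c.d.f.'s, $F\equiv 1$ is the largest element of $\mathcal{F}$ (it concentrates the alternative $p$-values at $0$, i.e.\ it maximises the signal strength), so it is natural to try to prove that $F\mapsto \FDR(\SUD(\mbf{t}),m_0,F)$ is nondecreasing for this order; this would immediately yield the LFC property. It also suffices to treat the fixed mixture model, the random mixture statement then following by integration over $m_0\sim\mathcal{B}(m,\pi_0)$ as noted after Definition~\ref{def:LFC}.

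First I would set up a coupling in which the $m-m_0$ alternative $p$-values are lowered towards $0$ one coordinate at a time (equivalently, $F$ is increased), and attempt to show that each such move can only increase the FDR. Writing the false discovery proportion as $V/(R\vee 1)$, with $V$ the number of rejected true nulls and $R$ the total number of rejections, lowering a single alternative $p$-value has two competing effects: it increases (weakly) the total count $R$, which tends to \emph{decrease} the ratio, while it can also raise the realised threshold index $\widehat{k}$ and hence the value $t_{\widehat{k}}$, letting additional true nulls be rejected and \emph{increasing} $V$. The whole difficulty is concentrated in controlling the net sign of these two competing effects in expectation.

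The second, more structural route exploits the remark motivating the conjecture, that a SUD procedure is a combination of an SU and an SD procedure. Concretely I would condition on the event $\{p_{(\lambda)}\leq t_\lambda\}$: on this event $\SUD(\mbf{t})$ behaves as a step-up scan started at level $\lambda$, and on its complement as a step-down scan started at level $\lambda$. On the step-up part I would invoke Theorem~5.3 of \citet{BY2001}, which for the linear critical values ($t_k/k=\alpha/m$ constant, hence nondecreasing) gives that $F\leq F'$ implies $\FDR(\SU(\mbf{t}),F)\leq \FDR(\SU(\mbf{t}),F')$; on the step-down part I would invoke Theorem~4.1 of \citet{RV2010}, giving that $\DU$ is an LFC for the linear $\SD$ over concave $F$. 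The exact formulas obtained by extending Steck's recursion to two populations (Section~\ref{sec:exactformulas}) would serve to make this conditional decomposition explicit and to try to combine the two conditional bounds.

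The hard part, and the place where I expect the argument to break down, is the cross-contribution at the regime boundary $\lambda$. The split into the step-up and step-down regimes is itself governed by the data through $\{p_{(\lambda)}\leq t_\lambda\}$, and this event moves as $F$ varies: strengthening the alternative (raising $F$ towards $F\equiv 1$) not only increases the conditional FDR within each regime but also transports probability mass \emph{across} the boundary, from the step-down regime into the step-up regime. Because the false discovery proportion changes discontinuously across this boundary (the numerator $V$ and the denominator $R$ jump together), the favourable monotonicity available on each piece need not survive this transport, and the naive sum of the two conditional bounds does not close. I would therefore expect that pushing the two-population recursion through for small $m$ reveals configurations in which an intermediate $F$ yields a strictly larger FDR than $\DU$ --- so that this natural line of attack fails and, in fact, the conjecture turns out to be false.
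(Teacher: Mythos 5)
Your conclusion is the right one: Conjecture~\ref{main-conj} is false, and the paper's own treatment of this statement is a \emph{disproof}, not a proof. Your proposed route to that disproof --- push the two-population generalization of Steck's recursion (Section~\ref{sec:exactformulas}) through for small $m$ and look for a configuration beating the Dirac-uniform one --- is exactly the paper's route. Your diagnosis of \emph{why} the positive arguments fail is also sound, and is in fact more explicit than anything the paper states: the decomposition \eqref{SUD-comb} into step-up and step-down pieces is governed by the data-dependent event $\{p_{(\lambda)}\leq t_\lambda\}$, whose probability shifts as $F$ varies, so the per-piece monotonicity results of \citet{BY2001} and \citet{RV2010} do not combine across the boundary.

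The gap is that you stop at ``I would expect'': a disproof requires an explicit counterexample, and you never produce one. The paper's Proposition~\ref{prop:disproof} supplies it: $m=10$, $\alpha=0.5$, $\lambda\in\{4,5,6,7\}$, with $m_0=7$ (or $\pi_0=7/10$) and $F(x)=x$. Note that this witness is not an ``intermediate'' $F$ as you anticipate, but the identity c.d.f., i.e., zero signal under the alternative --- although Figure~\ref{fig:LSUD-LFC} shows that intermediate Gaussian alternatives violate the conjectured inequality as well. Since the violation is established numerically through several nested recursions, the paper also double-checks it by independent Monte-Carlo simulation; a completed version of your argument would need to do the same, or else give a rigorous sign computation of $\FDR(\LSUD(\mbf{t}),F)-\FDR(\LSUD(\mbf{t}),F\equiv 1)$ for an explicit $F$, which neither you nor the paper attempts analytically. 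Finally, the paper complements the counterexample with Theorem~\ref{thm} and Corollary~\ref{thmasymp}, quantifying that the violation is at most of order $\frac{1}{1-\zeta_m}\sqrt{(\log m)/m}$ --- so DU is ``almost'' an LFC --- a component absent from your proposal.
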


Obviously, a similar conjecture might be formulated for a (non-linear) step-up-down procedure using $\rho$ satisfying \eqref{rhocond1} and \eqref{rhocond2}.

 \begin{figure}[htbp]
\begin{center}
\includegraphics[scale=0.7]{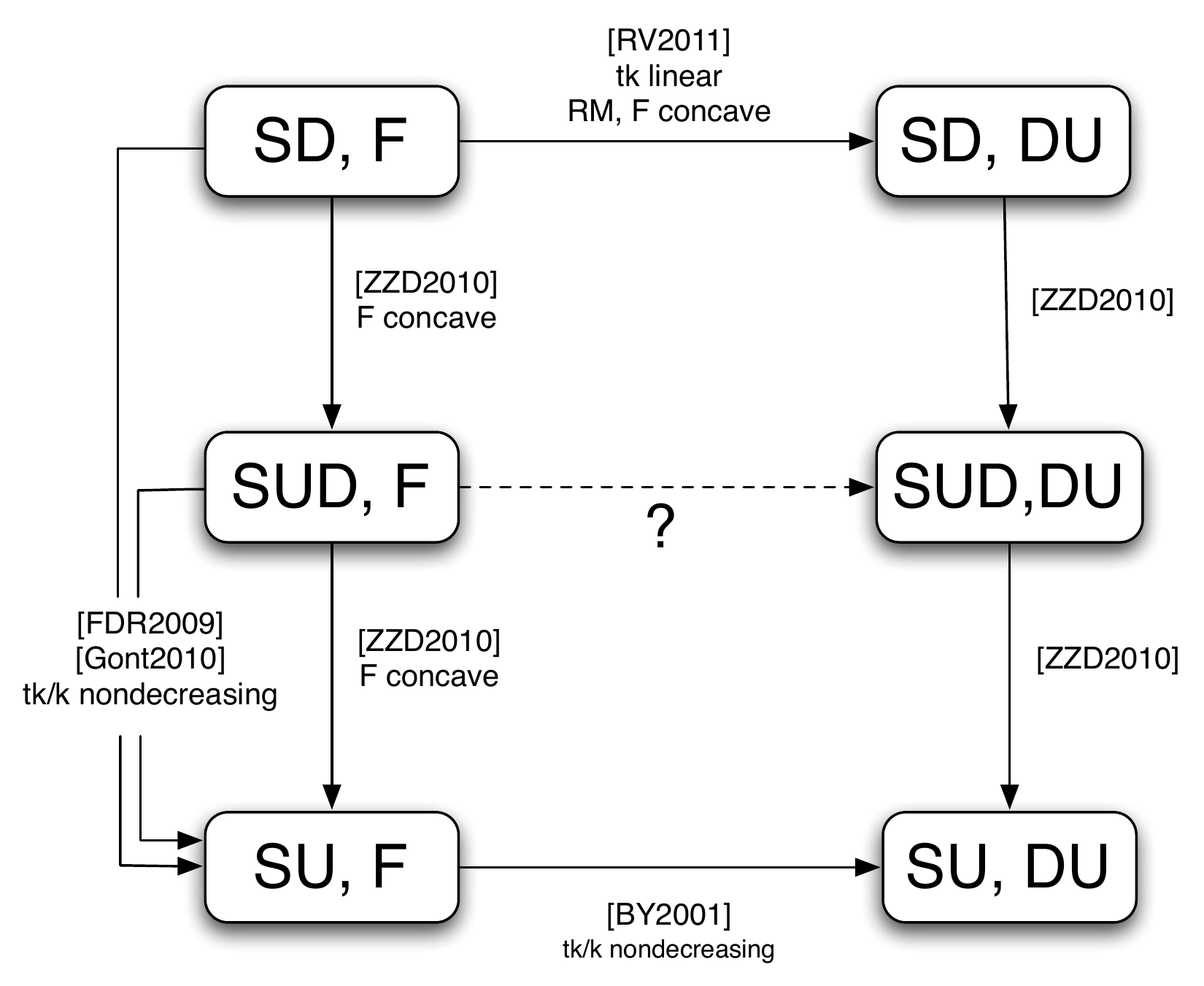}
\caption{An arrow ``A  $\rightarrow$ B" means ``$\FDR(A) \leq \FDR(B)$". 
These results hold for the fixed mixture (FM) model except when ``RM" is written. The brackets are a shortened reference to the corresponding 
literature, see main text for more details.}
\label{fig_FDR}
\end{center}
\end{figure}

\section{Investigating Conjecture~\ref{main-conj}}\label{sec:newres}


\subsection{Disproving the conjecture}

The exact calculations described in Section~\ref{sec:exactformulas} allow to compute the value of $\FDR(\mbox{LSUD}_{\lambda}(\mbf{t}))$ exactly. 
This shows the following (numerical) result.

\begin{proposition}\label{prop:disproof}
For $m =10$, consider the linear step-up-down procedure $\LSUD$ at  level $\alpha=0.5$ and for $\lambda \in \left\{ 4,5,6,7\right\}$. Then, we have 
\begin{align}
 \FDR(\LSUD,F) > \FDR(\LSUD,F\equiv1) , \label{equ-disproof}
  \end{align}
in either of the two following cases:
\begin{itemize}
\item[\textbullet] in the $\FM$ model, with $m_0=7$ and $F(x)=x$;
\item[\textbullet] in the $\RM$ model, with $\pi_0=7/10$ and $F(x)=x$.
\end{itemize}
\end{proposition}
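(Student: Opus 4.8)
The statement is an exact numerical fact rather than a structural inequality, so the plan is to \emph{compute} both sides of \eqref{equ-disproof} for the prescribed parameters using the two-population Steck recursion and the exact FDR formulas of Section~\ref{sec:exactformulas}, and then to check that the resulting numbers satisfy the strict inequality for each $\lambda\in\{4,5,6,7\}$. The observation that makes the computation (and the counterexample itself) transparent is that the two configurations being compared are two \emph{extreme} choices of $F$: the right-hand side $F\equiv 1$ is the Dirac-uniform configuration (strongest possible signal, all alternative $p$-values equal to $0$), whereas on the left-hand side $F(x)=x$ means the alternative $p$-values are themselves uniform (weakest possible signal). Thus the claim is the a priori counterintuitive assertion that the FDR can be strictly larger under the weakest signal than under the strongest one, which is exactly what refutes Conjecture~\ref{main-conj}.

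For the left-hand side I would first exploit the simplification available when $F(x)=x$: then all $m$ $p$-values are i.i.d.\ $U(0,1)$, so the rejection number $R:=\widehat{k}$ is a function of the order statistics alone, and, conditionally on the ordered values, the assignment of values to the labelled coordinates is a uniform permutation. Hence, given $R=r\geq 1$, the number of true-null rejections is hypergeometric with mean $r\,m_0/m$, so that $\E[\FDP\mid R=r]=m_0/m$ and therefore
\begin{equation*}
\FDR(\LSUD,m_0,F(x)=x)=\frac{m_0}{m}\,\prob{R\geq 1}.
\end{equation*}
With $m_0=7$, $m=10$ this reduces the left-hand side to $0.7\,\prob{R\geq 1}$, where $\prob{R\geq 1}=1-\prob{\widehat{k}=0}$ is obtained by applying the single-population Steck recursion to $10$ uniform order statistics against the critical values $t_k=0.05\,k$. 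The same formula covers the $\RM$ case, since integrating $m_0/m$ over $m_0\sim\mathcal{B}(m,\pi_0)$ replaces $m_0/m$ by $\pi_0=0.7$ while $\prob{R\geq 1}$ is unchanged.

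For the right-hand side ($F\equiv 1$) I would use that the $m-m_0=3$ alternative $p$-values equal $0$ and are therefore always rejected, whence $R\geq 3$, the number of false rejections equals $R-3$, and $\FDP=(R-3)/R$; the distribution of $R=\widehat{k}$ is that of the SUD rule applied to three leading zeros followed by $7$ uniform order statistics, which is precisely the quantity delivered by the two-population recursion of Section~\ref{sec:exactformulas}. Evaluating $\E[1-3/R]$ (and its average over $m_0\sim\mathcal{B}(10,0.7)$ for the $\RM$ case) then produces the right-hand side, and the final step is to substitute the numbers and verify the strict inequality for each $\lambda$. I expect the main obstacle to be precisely that there is no monotonicity argument to fall back on: the effect is genuinely small and order-dependent, so the conclusion rests entirely on carrying out the exact evaluations reliably, which is why the machinery of Section~\ref{sec:exactformulas}, rather than any soft comparison of $F(x)=x$ with $F\equiv 1$, is indispensable.
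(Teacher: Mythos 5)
Your proposal is correct and takes essentially the same route as the paper, which likewise proves Proposition~\ref{prop:disproof} purely by exact numerical evaluation of both sides of \eqref{equ-disproof} via the formulas of Section~\ref{sec:exactformulas} (cross-checked by Monte-Carlo), there being no soft monotonicity argument available. Your exchangeability observation that $\FDR(\LSUD,m_0,F(x)=x)=(m_0/m)\,\P(\widehat{k}\geq 1)$ is a nice simplification of the left-hand side that the paper does not spell out, but it does not change the nature of the argument.
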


This disproves Conjecture~\ref{main-conj} and shows that finding the LFC for SUD is more difficult than for SU and SD separately.
More generally, Figure~\ref{fig:LSUD-LFC} reports some obtained values in the Gaussian case $F=\ol{\Phi}(\ol{\Phi}^{-1}(\cdot)-\mu)$ for different values of the alternative mean $\mu$. 
We observe that the result for $\FM$ (left panel) or $\RM$ (right panel) are qualitatively the same: there is a range of values for $\lambda\in\{2,\dots,m-1\}$ for which the FDR is larger for a smaller $\mu$. However, this phenomenon seems to decay when $m$ becomes larger, see Figure~\ref{fig:LSUD-LFC} for $m=100$. Also, when $\alpha$ decreases, the phenomenon still occurs but its amplitude decays. 
\begin{figure}[h!]
\begin{tabular}{ccc}
& Fixed mixture $m_0/m=0.7$ & Random mixture $\pi_0=0.7$\\ 
\rotatebox{90}{\hspace{2.5cm} $m=10$} & \includegraphics[scale=0.35]{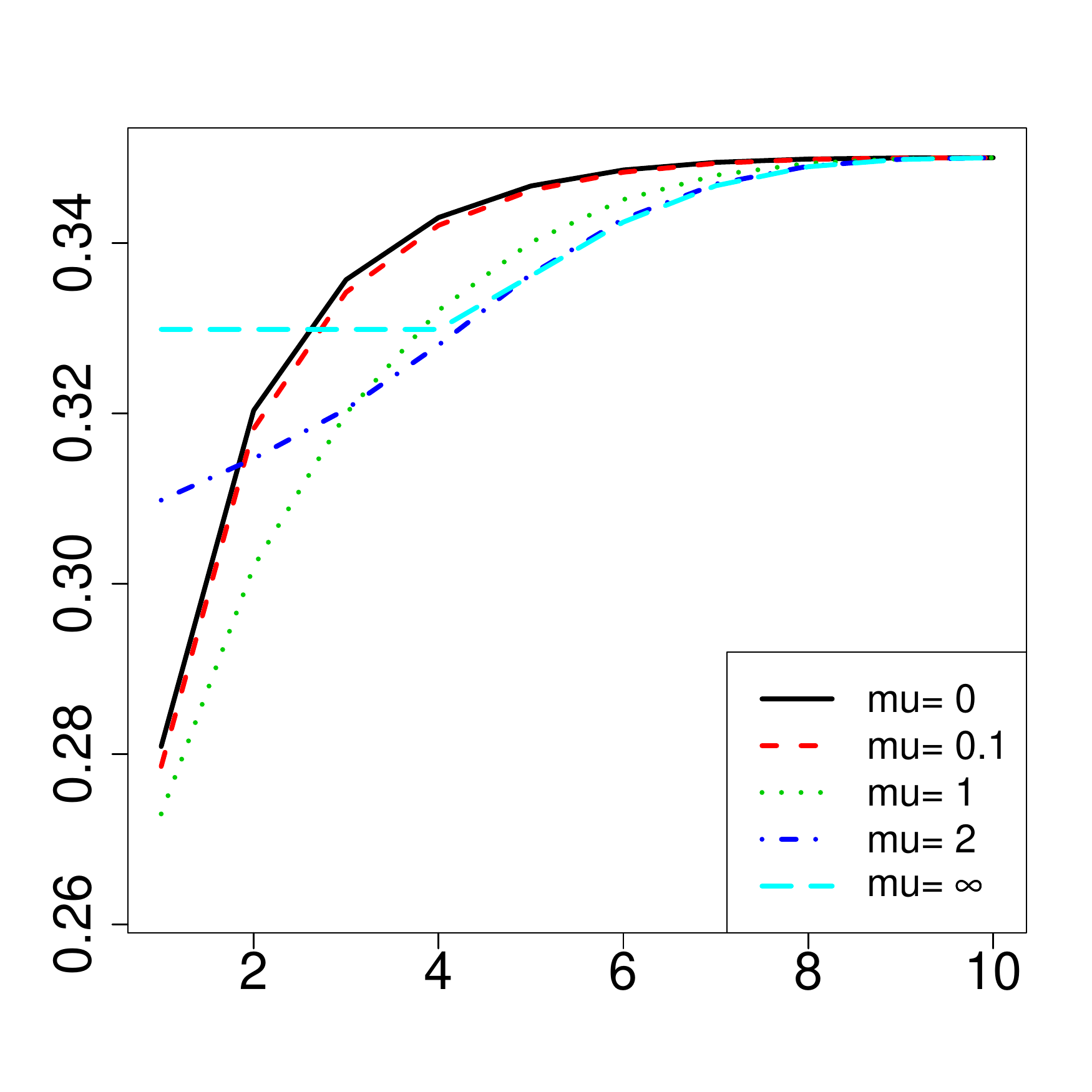}&\includegraphics[scale=0.35]{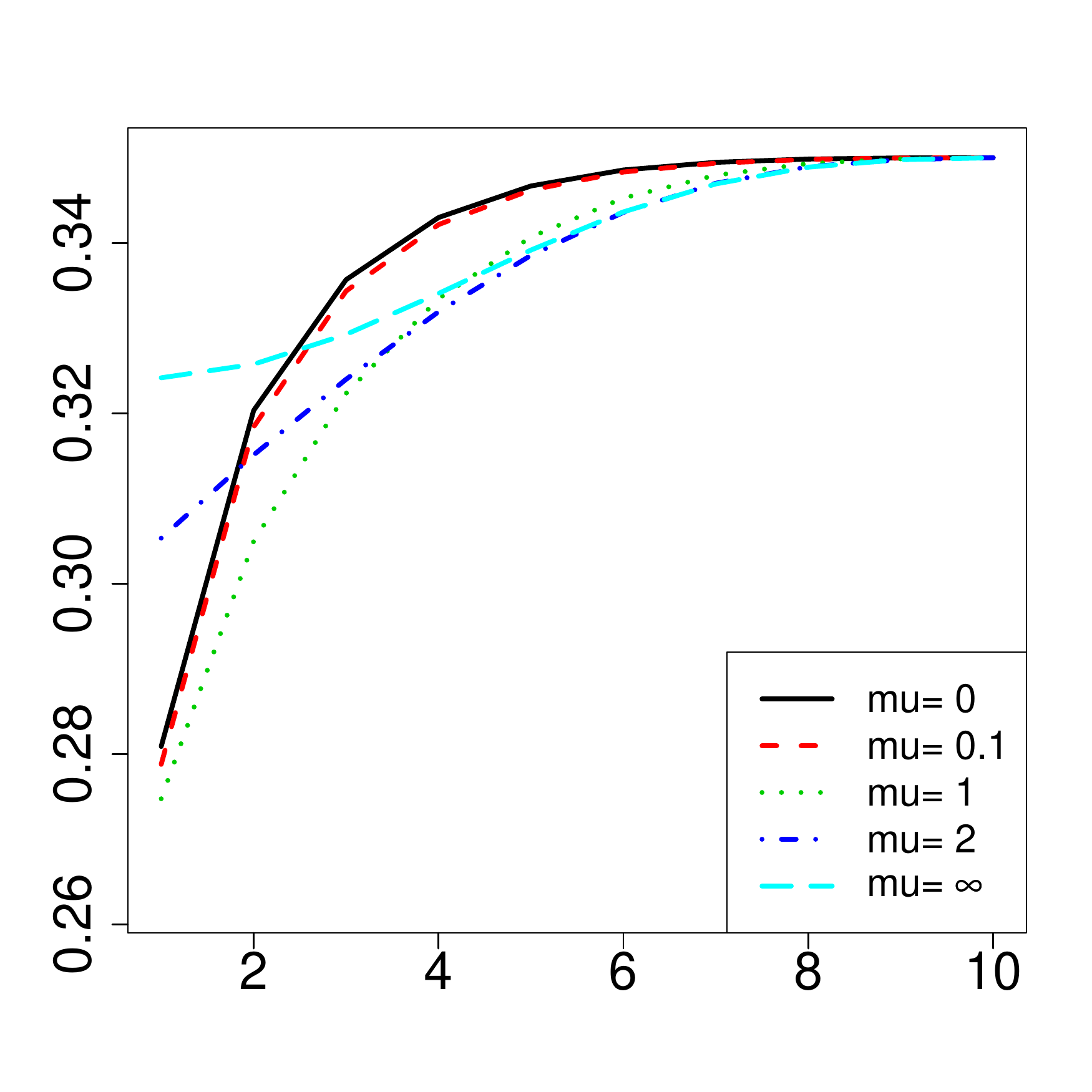}\\
\rotatebox{90}{\hspace{2.5cm} $m=100$} & \includegraphics[scale=0.35]{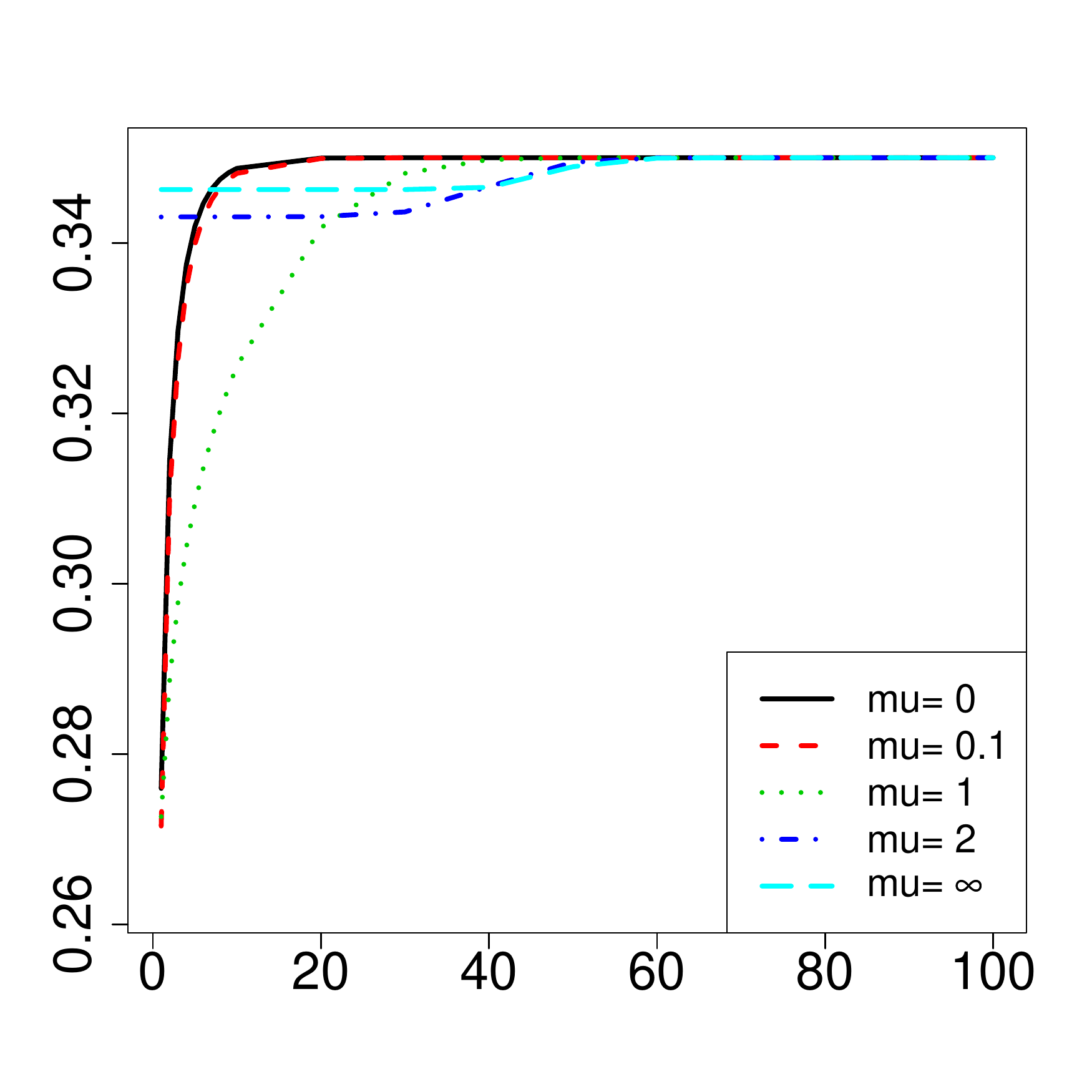}&\includegraphics[scale=0.35]{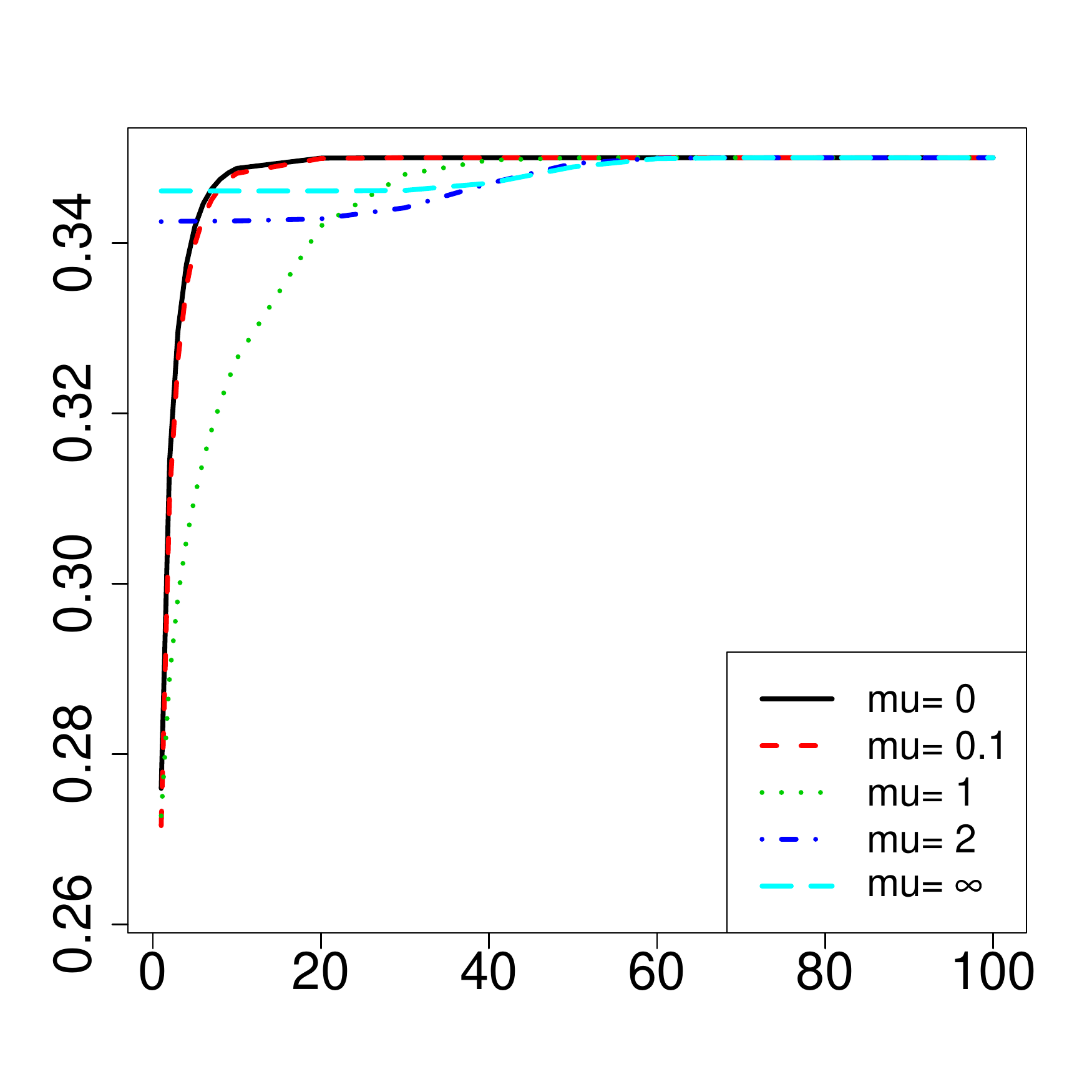}
\end{tabular}
\caption{The LFC of LSUD is not always DU. $\FDR(\LSUD)$ as a function of the order $\lambda\in\{1,\dots,m\}$
. Left: fixed mixture; Right: random mixture. One sided Gaussian location model with parameter $\mu$.  $\alpha=0.5$. \label{fig:LSUD-LFC}}
\end{figure}
To alleviate the concern that this somewhat unexpected phenomenon could be due to numerical inaccuracies
in the computation of the exact formulas (which involve several nested recursions), the
reported results were double-checked via extensive and independent Monte-Carlo simulations,
which confirmed the validity of the reported curves.

\subsection{Nonasymptotic bound}\label{sec:finite}

In the present section, we investigate the amplitude of the
phenomenon observed in the previous section as a function of the number of hypotheses. In other words,
we derive a more explicit and non-asymptotic version of the limit
appearing in \eqref{equ_asymp}. For this, we consider a perturbation
analysis of the SUD procedure as defined in \eqref{equ-defSUD},
under the Dirac-uniform model, and
when the empirical c.d.f. of the $p$-values is $\delta$-close
to the population c.d.f. (which happens with large probability).
In order to state the result in a compact form, we first
introduce the following notation for an SUD threshold in a continuous setting.

\begin{definition}
Let $\rho:[0,1]\rightarrow [0,1]$ satisfying \eqref{rhocond1}. 
For any non-decreasing function
$G:[0,1] \rightarrow [0,1]$, and $\toto \in [0,1]$, we define
\begin{equation} \label{equ-Udef}
\TSud(\toto,G) = 
\begin{cases}
\min \set{u \in [\toto, 1] \telque G(\rho(u)) \leq u } & \text{ if }
G(\rho(\toto)) \geq \toto;\\
\max \set{u \in [0,\toto] \telque G(\rho(u)) \geq u } & \text{ if }
G(\rho(\toto)) < \toto.
\end{cases}
\end{equation}
 \end{definition}
Observe that in the above definition, the infimum and supremum are
well-defined since the considered sets are non-empty; 
using that $G$ is non-decreasing, it can be seen that
$\TSud(\toto,G)$ is a fixed point of the function $G\circ\rho$
(so that the infimum is indeed a minimum and the supremum, a maximum).
Unfortunately, the number of rejections $ \wh{k}$ of the SUD procedure as defined in \eqref{equ-defSUD} 
does not always satisfy $ \wh{k}/m=\TSud(\lambda/m,\hat{\mathbb{G}}_m)$ (because of the step-down part, see Figure~\ref{fig:contrex-SUD} in Section~\ref{proof-lemma-fund}).
Nevertheless, the following lemma is proved in Section~\ref{proof-lemma-fund}.

\begin{lemma}\label{lemma-fund}
With the above notation,  if the threshold collection $\mbf{t}$ is defined as $t_k = \rho(k/m)$, we have
\begin{equation}\label{equ-fund}
\TSud(\lambda/m,\hat{\mathbb{G}}_m) \leq \wh{k}/m \leq \TSud(\lambda/m,(\hat{\mathbb{G}}_m+m^{-1})\wedge 1),
\end{equation}
where $\wh{k}$ is defined by \eqref{equ-defSUD}
and $\hat{\mathbb{G}}_{m}(x):=m^{-1}\sum_{i=1}^{m} \ind{p_i\leq x}$ is the empirical
c.d.f. of the $p$-values.
\end{lemma}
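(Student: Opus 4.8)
The plan is to translate the combinatorial definition of $\wh{k}$ in \eqref{equ-defSUD} into a statement about the single scalar function $g(u) := \hat{\mathbb{G}}_m(\rho(u))$, and then to compare its grid-point crossings with the identity to the fixed points of $g$ (resp.\ of its shift) that define $\TSud$. The starting observation is the identity, valid for each $k\in\{0,\dots,m\}$,
\[
p_{(k)}\leq t_k = \rho(k/m) \iff \hat{\mathbb{G}}_m(\rho(k/m))\geq k/m \iff g(k/m)\geq k/m,
\]
since $\hat{\mathbb{G}}_m(x)\geq k/m$ exactly when at least $k$ of the $p$-values lie at or below $x$. Writing $A:=\{k: g(k/m)\geq k/m\}$ (note $0\in A$), formula \eqref{equ-defSUD} reads: if $\lambda\in A$ then $\wh{k}=\max\{k\geq\lambda:\{\lambda,\dots,k\}\subseteq A\}$, and if $\lambda\notin A$ then $\wh{k}=\max\{k\leq\lambda:k\in A\}$. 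The condition $g(\lambda/m)\geq\lambda/m$ that distinguishes these two cases is precisely the condition distinguishing the two branches of $\TSud(\lambda/m,\hat{\mathbb{G}}_m)$ in \eqref{equ-Udef}, so the case analysis for the lower bound can follow the same split.

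For the lower bound I would use the fact recorded just after \eqref{equ-Udef} that $\TSud(\lambda/m,\hat{\mathbb{G}}_m)$ is a fixed point of $g$. Since $\hat{\mathbb{G}}_m$ takes values only in $\{0,1/m,\dots,1\}$, any fixed point $u_0=g(u_0)$ is automatically a grid point $u_0=k^*/m$, and $k^*\in A$. In the step-down branch ($\lambda\notin A$) one has $k^*\leq\lambda$, so $k^*\in\{k\leq\lambda:k\in A\}$ and hence $k^*\leq\wh{k}$. In the step-up branch ($\lambda\in A$) one has $k^*\geq\lambda$; the only way the bound could fail is $k^*>\wh{k}$ (the case $\wh{k}=m$ being trivial), and minimality of $\TSud$ together with $g((\wh{k}+1)/m)<(\wh{k}+1)/m$ (which holds because $\wh{k}+1\notin A$) forces $k^*\leq\wh{k}+1$; but $k^*=\wh{k}+1$ is impossible, since a fixed point requires equality whereas the inequality is strict. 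This strict-versus-nonstrict dichotomy at the crossing point is the delicate point of the whole argument.

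For the upper bound I would rerun the analysis with $\hat{\mathbb{G}}_m$ replaced by $\hat{\mathbb{G}}_m^+:=(\hat{\mathbb{G}}_m+m^{-1})\wedge 1$ and $g^+:=\hat{\mathbb{G}}_m^+\circ\rho=(g+m^{-1})\wedge 1\geq g$. Here the branch of $\TSud(\lambda/m,\hat{\mathbb{G}}_m^+)$ may no longer coincide with the discrete case, so I would treat both. When $\lambda\in A$ (so $g^+(\lambda/m)\geq\lambda/m$ and the ``up'' branch is active for $g^+$ too), I would show $g^+(u)>u$ for every $u\in[\lambda/m,\wh{k}/m)$: writing $j=\lfloor mu\rfloor\in\{\lambda,\dots,\wh{k}-1\}\subseteq A$ gives $g(u)\geq g(j/m)\geq j/m$, whence $g^+(u)\geq (j+1)/m>u$; consequently the first down-crossing $\TSud(\lambda/m,\hat{\mathbb{G}}_m^+)$ is $\geq\wh{k}/m$. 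When $\lambda\notin A$, either the shift puts $g^+(\lambda/m)\geq\lambda/m$, in which case the ``up'' branch gives $\TSud(\lambda/m,\hat{\mathbb{G}}_m^+)\geq\lambda/m>\wh{k}/m$ for free (as $\wh{k}<\lambda$), or $g^+(\lambda/m)<\lambda/m$ and the ``down'' branch applies, where $g^+(\wh{k}/m)\geq g(\wh{k}/m)\geq\wh{k}/m$ shows that $\wh{k}/m$ itself lies in the set whose maximum is $\TSud(\lambda/m,\hat{\mathbb{G}}_m^+)$.

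The main obstacle is exactly the issue flagged after \eqref{equ-Udef}: because of the step-down part, $\wh{k}/m$ need not equal $\TSud(\lambda/m,\hat{\mathbb{G}}_m)$, and this is what forces the two-sided sandwich with the $m^{-1}$ inflation. Concretely, the discreteness of the empirical c.d.f.\ means that the grid-point ``$\leq$'' tests defining $\wh{k}$ can differ from the continuous crossings by one empirical atom, which the shift $\hat{\mathbb{G}}_m\mapsto\hat{\mathbb{G}}_m+m^{-1}$ absorbs. I expect the bookkeeping of strict versus non-strict inequalities at the crossing (especially excluding $k^*=\wh{k}+1$ in the lower bound, and handling the half-open interval $[\lambda/m,\wh{k}/m)$ in the upper bound) to require most of the care, whereas the algebraic content is light once the reduction to the grid-valued fixed points of $g$ is in place.
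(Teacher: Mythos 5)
Your proof is correct and takes essentially the same route as the paper's: the same translation $p_{(k)}\leq t_k \Leftrightarrow \hat{\mathbb{G}}_m(\rho(k/m))\geq k/m$, the same use of the fact that $\TSud$ is a grid-valued fixed point of $G\circ\rho$, and the same case split on $p_{(\lambda)}$ versus $t_\lambda$ combined with the $m^{-1}$ shift. The only difference is organizational: the paper establishes the exact identities $\wh{k}/m=\TSud(\lambda/m,\hat{\mathbb{G}}_m)$ (when $p_{(\lambda)}>t_\lambda$) and $(\wh{k}+1)/m=\TSud(\lambda/m,(\hat{\mathbb{G}}_m+m^{-1})\wedge 1)$ (when $p_{(\lambda)}\leq t_\lambda$ and $\wh{k}<m$) and leaves the two remaining inequalities to the monotonicity of $\TSud(\lambda/m,\cdot)$, whereas you verify the lower and upper bounds directly in both branches, being in fact somewhat more explicit than the paper about the cross-branch cases (e.g.\ excluding $k^*=\wh{k}+1$ via the strict inequality, and handling the possible branch flip of $\TSud$ under the $+m^{-1}$ shift).
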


We now state our main result.

\begin{theorem} \label{thm}
Consider a threshold collection $\mbf{t}$ of the form 
$t_k=\rho(k/m)$, where $\rho:[0,1]\rightarrow [0,1]$ satisfies \eqref{rhocond1} and  \eqref{rhocond2}. 
Let $\zeta\in (0,1)$ be an arbitrary constant.
For $\delta\in(0,1)$, define
\[
u^+_\delta = \TSud(\lambda,(G^{DU}_\zeta + \delta)\wedge 1)
\qquad \text{ and } \qquad
u^-_\delta= \TSud(\lambda,(G^{DU}_\zeta - \delta)\vee 0),
\]
where $G^{DU}_\zeta(x):=(1-\zeta) + \zeta x$
(see Figure~\ref{fig:finitesample} for an illustration).
Let us define the remaining term: for any $y\in(0,1)$,
  \begin{align}
 \eps(\delta,m,\zeta,y):=\frac{\rho(u^+_\delta)-\rho(u^-_\delta)}{u^+_\delta} + \frac{4}{1-\zeta} e^{-2 m \left(\delta-y-\frac{1}{m}\right)_+^2 (1-y/\zeta)_+ }  \label{equ-remainingterm}.
  \end{align}
Then, for any $F\in\mathcal{F}$ and $\lambda\in\{1,\dots,m\}$ the following holds.
\begin{itemize}
\item[\textbullet] In the $\FM$ model with $0<m_0<m$ and $\nu=\max_{k\in\{m_0-1,m_0\}}\{|k/m-\zeta|\}\in [0,1]$, we have
  \begin{align}
 \FDR(\SUD(\mbf{t}),m_0,F) \leq& \:\FDR(\SUD(\mbf{t}),m_0,F\equiv1) + \frac{m_0}{m} \eps(\delta,m,\zeta,\nu); \label{equ-final}
  \end{align}
\item[\textbullet] In the $\RM$ model with $\pi_0=\zeta$, we have for any $\gamma\in(0,1)$,
  \begin{align}
 \FDR(\SUD(\mbf{t}),\pi_0,F) \leq& \:\FDR(\SUD(\mbf{t}),\pi_0,F\equiv1) + \pi_0 \eps(\delta,m,\zeta,\gamma) + 4 e^{-2 m (\gamma-1/m)_+^2}. \label{equ-final-RM}
  \end{align}
 \end{itemize}
\end{theorem}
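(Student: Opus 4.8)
The plan is to establish the $\FM$ bound \eqref{equ-final} first, the $\RM$ bound \eqref{equ-final-RM} then following by conditioning on $m_0\sim\mathcal{B}(m,\pi_0)$. I would work under the coupling in which the $m_0$ true nulls are a single family $U_1,\dots,U_{m_0}$ of i.i.d.\ uniforms shared by the two models, while the $m-m_0$ alternatives equal $0$ under $\DU$ and are i.i.d.\ $\sim F$ under the $F$-model. The starting point is the exact identity
\[
\FDP(\SUD(\mbf{t}),m_0,\mbf{p})=\frac{m_0}{m}\,\frac{\hat F_0\paren{\rho(\wh{k}/m)}}{\wh{k}/m},
\]
where $\hat F_0$ is the empirical c.d.f.\ of $U_1,\dots,U_{m_0}$; this uses that the number of rejections equals $\wh{k}$ and that $\hat{\mathbb{G}}_m(\rho(\wh{k}/m))=\wh{k}/m$. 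Writing $\psi(u):=\rho(u)/u$, which is non-decreasing by \eqref{rhocond2}, and denoting by $\FDP^F,\FDP^{DU}$ the two coupled proportions and by $u^F:=\wh{k}^{F}/m$, $u^{DU}:=\wh{k}^{DU}/m$ the associated stopping fractions, the whole comparison is recast as controlling $u\mapsto\hat F_0(\rho(u))/u$ at $u^F$ and $u^{DU}$.

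Next I would localise these two fractions. Under the coupling $\hat{\mathbb{G}}^{F}_m\le\hat{\mathbb{G}}^{DU}_m$ pointwise on $(0,1]$ (each alternative is $\ge0$), and $\hat{\mathbb{G}}^{DU}_m=(1-m_0/m)+(m_0/m)\hat F_0$. On the good event
\[
E:=\set{\sup_{x\in[0,1]}\abs{\hat F_0(x)-x}\le \delta-\nu-1/m},
\]
the deterministic offset $\nu=\max_{k\in\set{m_0-1,m_0}}\abs{k/m-\zeta}$ and the uniform deviation confine $\hat{\mathbb{G}}^{DU}_m$ to the band $[\,G^{DU}_\zeta-\delta,\,G^{DU}_\zeta+\delta\,]$, while $F\le1$ gives the one-sided bound $\hat{\mathbb{G}}^{F}_m\le G^{DU}_\zeta+\delta$. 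Feeding these into Lemma~\ref{lemma-fund} and using that $G\mapsto\TSud(\toto,G)$ is non-decreasing (immediate from \eqref{equ-Udef}, since enlarging $G$ enlarges $G\circ\rho$ and shifts the relevant fixed point upward), I obtain on $E$ that $u^F\le u^+_\delta$ and $u^-_\delta\le u^{DU}\le u^+_\delta$, the extra $1/m$ inside $E$ being exactly what absorbs the $1/m$ slack of Lemma~\ref{lemma-fund}.

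The comparison then proceeds by replacing $\hat F_0(\rho(u))$ by $\rho(u)$ (two-sided control of $\hat F_0$ on $E$) and combining with $u^F\le u^+_\delta$, $u^{DU}\ge u^-_\delta$ and the monotonicity of $\psi$; the aim is
\[
\FDP^F-\FDP^{DU}\le\frac{m_0}{m}\paren{\psi(u^+_\delta)-\psi(u^-_\delta)}\le\frac{m_0}{m}\,\frac{\rho(u^+_\delta)-\rho(u^-_\delta)}{u^+_\delta},
\]
the last step using $u^-_\delta\le u^+_\delta$. On $E^c$ I would use $\FDP^F-\FDP^{DU}\le1$ and control $\P(E^c)$ by a Dvoretzky--Kiefer--Wolfowitz/Hoeffding estimate on the $m_0\approx\zeta m$ uniforms; budgeting $\nu$ and the $1/m$ into the deviation yields the term $\tfrac{4}{1-\zeta}e^{-2m(\delta-\nu-1/m)_+^2(1-\nu/\zeta)_+}$ of \eqref{equ-remainingterm} after taking expectations and multiplying by $m_0/m$, giving \eqref{equ-final}. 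For \eqref{equ-final-RM} I would condition on $m_0\sim\mathcal{B}(m,\pi_0)$, apply \eqref{equ-final} with $\nu=\abs{m_0/m-\pi_0}$ on $\set{\abs{m_0/m-\pi_0}\le\gamma}$ and the bound $\FDP\le1$ on its complement, whose probability is $\le4e^{-2m(\gamma-1/m)_+^2}$ by a binomial Hoeffding inequality.

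The hard part will be the third step, i.e.\ taming the ratio. Since the denominator $\wh{k}/m$ can be as small as $1/m$, the displayed inequality genuinely fails on individual realisations in $E$: if only a handful of rejections occur and one is a true null, $\FDP^F$ jumps to $1$ while $\FDP^{DU}$ stays moderate, so the naive replacement of $\hat F_0(\rho(u))$ by $\rho(u)$ is illegitimate for small $u^F$. The natural remedy, signalled by the appearance of $m_0-1$ in the definition of $\nu$, is to linearise the numerator by a leave-one-out step, writing $\FDR(\SUD(\mbf{t}),m_0,F)=m_0\,\E[\ind{p_1\le t_{\wh{k}}}/(\wh{k}\vee1)]$ for a single true null $p_1$ and analysing $\wh{k}$ through the empirical c.d.f.\ of the remaining $m-1$ $p$-values (which carry $m_0-1$ nulls, whence the set $\set{m_0-1,m_0}$). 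Establishing that this leave-one-out fraction is controlled, on $E$, by $\psi$ on $[u^-_\delta,u^+_\delta]$ uniformly over the conditioning value of $p_1$ and over all $F\in\mathcal{F}$ is the crux; it is exactly here that assumption \eqref{rhocond2} and the two-sided control of $\hat F_0$ must be combined with care, with the free parameter $y$ (set to $\nu$, resp.\ $\gamma$) trading the band width $\rho(u^+_\delta)-\rho(u^-_\delta)$ against the exponential tail.
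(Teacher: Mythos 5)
Your overall architecture matches the paper's: a good event on which the empirical c.d.f.\ of the true nulls is uniformly $\delta$-close to the identity, DKW/Massart to bound its complement, monotonicity of $G\mapsto\TSud(\lambda,G)$ combined with Lemma~\ref{lemma-fund} to localise $\wh{k}/m$ in $[u^-_\delta,u^+_\delta]$, and a Hoeffding step for the binomial $m_0$ in the $\RM$ case. However, there is a genuine gap at exactly the point you flag as ``the crux'': the uniform-over-$F$ upper bound on $\FDR(\SUD(\mbf{t}),m_0,F)$. Your pathwise coupling only yields the one-sided localisation $u^F\leq u^+_\delta$ (no lower bound on $u^F$ under the $F$-model), and, as you correctly observe, the pathwise comparison of the two FDPs fails when few rejections occur under $F$. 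You then propose the leave-one-out identity $\FDR=m_0\,\E[\ind{p_1\leq t_{\wh{k}}}/(\wh{k}\vee 1)]$ as a remedy but do not carry out the resulting analysis; the proposal therefore stops short of proving the inequality on which everything else rests.

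The paper closes this gap not by a new argument but by invoking Theorem~4.3 of \citet{FDR2009}, which states that for $k\mapsto t_k/k$ nondecreasing one has, for \emph{every} $F\in\mathcal{F}$,
\begin{equation*}
\FDR(\SUD(\mbf{t}),m_0,F)\;\leq\;\frac{m_0}{m}\,\E_{\mathrm{DU}(m,m_0-1)}\brac{q(\hat{u})},\qquad q(x)=\rho(x)/x .
\end{equation*}
This single inequality is the completed form of your leave-one-out step: it eliminates $F$ entirely from the upper bound and replaces it by an expectation under the Dirac-uniform model with $m_0-1$ true nulls (which is also the real reason $m_0-1$ appears in the definition of $\nu$). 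Once it is available, only the DU model needs to be localised: the upper bound follows from $q$ nondecreasing and $\hat{u}\leq u^+_\delta$ on the good event, and the lower bound $\FDR(\SUD(\mbf{t}),m_0,F\equiv 1)\geq \frac{m_0}{m}\rho(u^-_\delta)/u^+_\delta-\mathrm{corr.}$ follows from $\E_{\DU}[\hat{\mathbb{G}}_{m_0}(\rho(u^-_\delta))]=\rho(u^-_\delta)$ together with $u^+_\delta\geq 1-\zeta$ (this is also where the $\tfrac{1}{1-\zeta}$ in $\eps$ originates, not from bounding $\FDP^F-\FDP^{DU}$ by $1$ on the bad event as you suggest). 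To complete your proof you must either cite this external result or prove your leave-one-out domination in full; without one of the two, inequality \eqref{equ-final} is not established.
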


 \begin{figure}[htbp]
\begin{center}
\begin{tabular}{cc}
Linear rejection curve ($\alpha=0.5$)& AORC  ($\alpha=0.2$)\\
\includegraphics[scale=0.35]{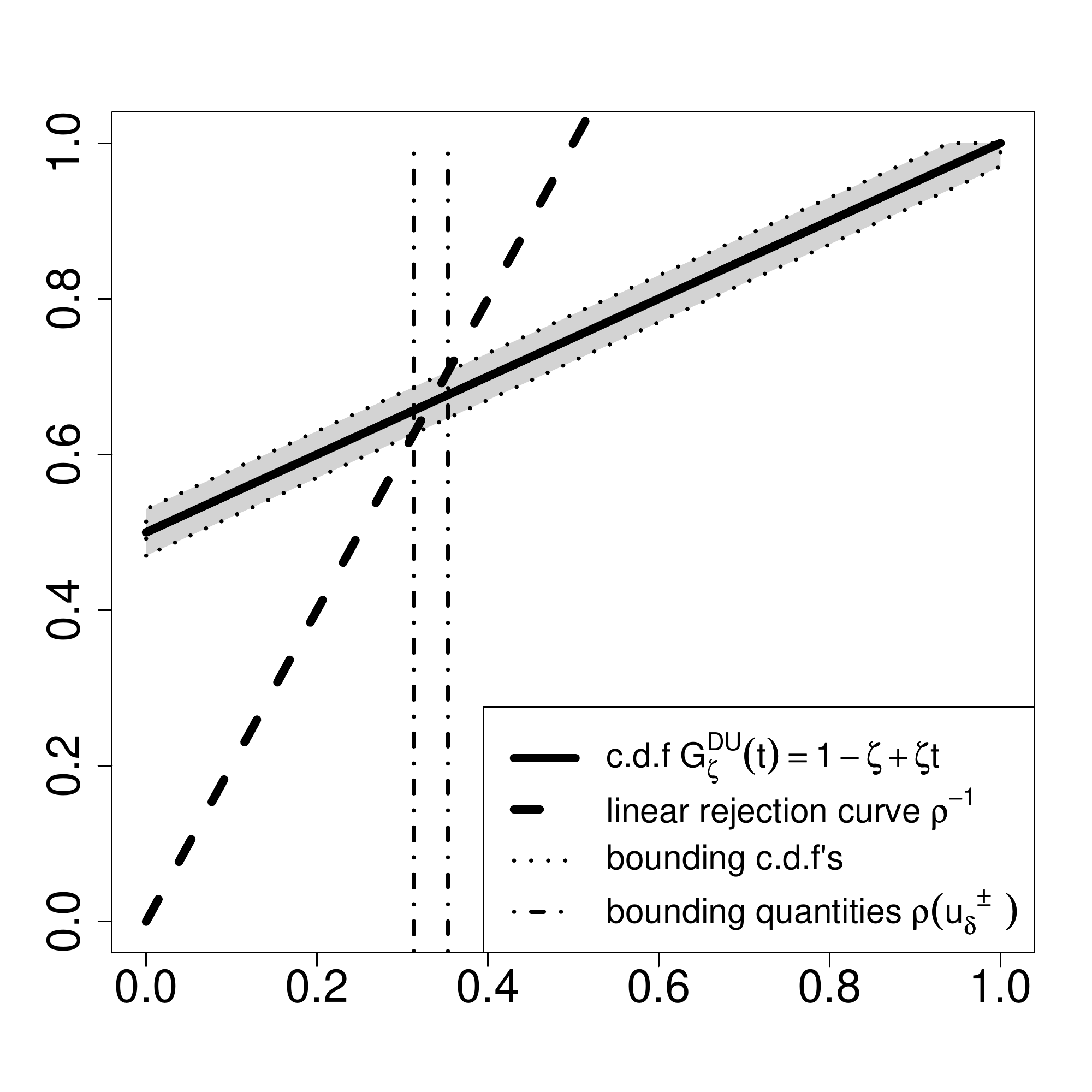}&\includegraphics[scale=0.35]{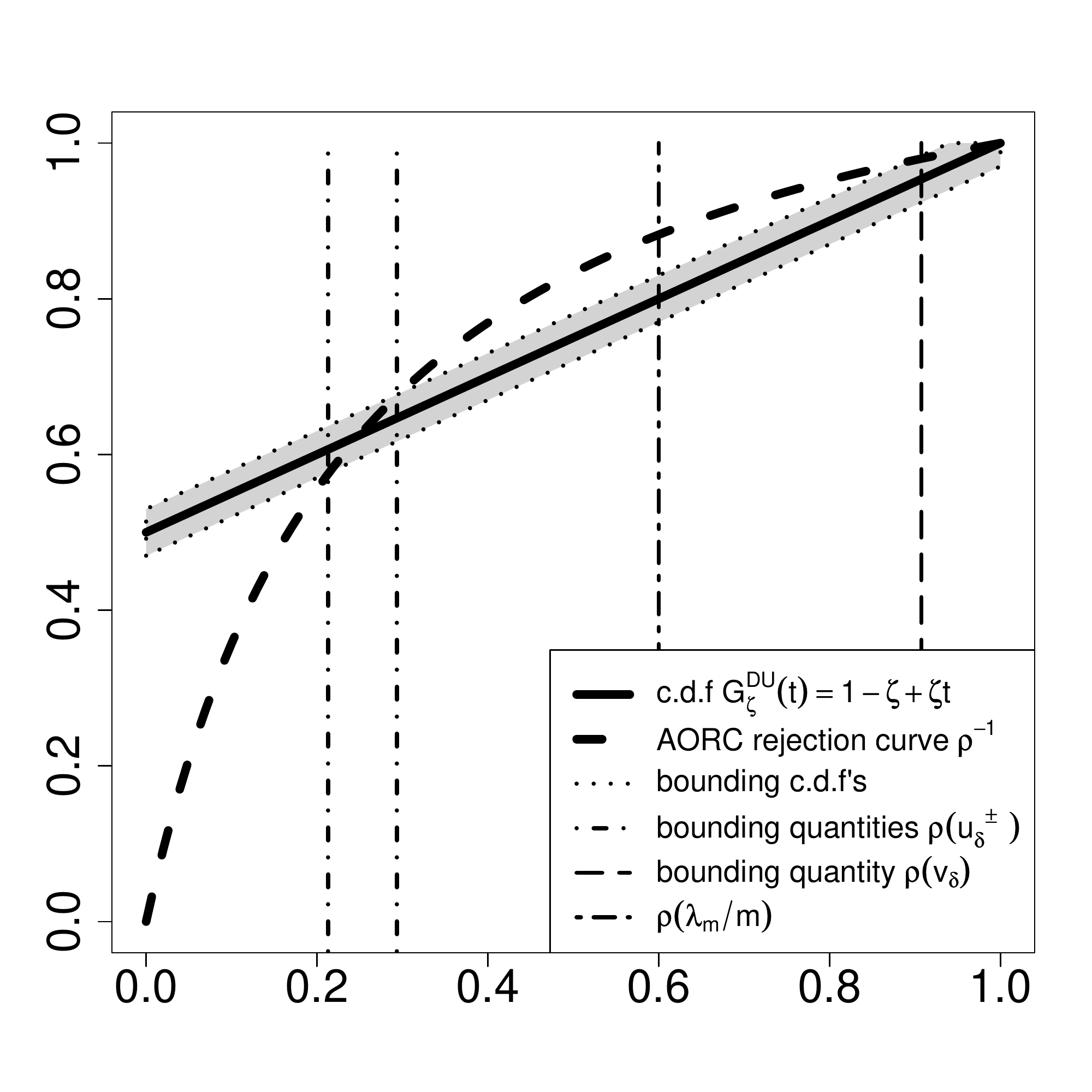}
\end{tabular}
\caption{Illustration for $u^{+}_\delta$ and $u^-_\delta$ in the case of the LSUD and the SUD based on AORC. Here the $X$-axis is on the ``threshold scale" $t=\rho(u)$. $\zeta=0.5$; $\delta=0.03$. The area between $(G^{DU}_\zeta - \delta)\vee 0$ and $(G^{DU}_\zeta + \delta)\wedge 1$ is displayed in gray.}
\label{fig:finitesample}
\end{center}
\end{figure}

Theorem~\ref{thm} is proved in Section~\ref{sec:proof-mainthm}. This result is illustrated in the two following examples.
\begin{enumerate}
\item Let us first apply Theorem~\ref{thm} in the particular case where $\rho(x)=\alpha x$ is the linear critical value function, see the left panel of Figure~\ref{fig:finitesample}. In that case, we have  $u^-_\delta= \frac{1-\zeta-\delta}{1-\alpha\zeta}\vee 0$ and $u^+_\delta= \frac{1-\zeta+\delta}{1-\alpha\zeta} \wedge 1$. Hence, $$(\rho(u^+_\delta)-\rho(u^-_\delta))/u^+_\delta \leq \frac{2\alpha\delta}{1-\zeta+\delta}.$$
As a result, \eqref{equ-final} and \eqref{equ-final-RM} hold by replacing $(\rho(u^+_\delta)-\rho(u^-_\delta))/u^+_\delta$ by $\frac{2\alpha\delta}{1-\zeta+\delta}$ inside the remaining term $\eps(\delta,m,\zeta,y)$. 
\item Second, for $\zeta>\alpha$, let us use
 \begin{align}\label{rhoAORC}\rho(u)=\alpha u/(1-u(1-\alpha)) \mbox{, that is, }\rho^{-1}(t)=t/(\alpha+t(1-\alpha)) .
\end{align}
The rejection curve $\rho^{-1}$, displayed in the right panel of Figure~\ref{fig:finitesample}, is called the ``asymptotically optimal rejection curve"  (AORC). It was introduced by 
\citet{FDR2009} for the purpose of improving the power of the linear critical value function. When applying Theorem~\ref{thm} with this choice of $\rho$, the calculation of $u^-_\delta $ and $u^+_\delta $ depends on the position of the parameter $\lambda_m/m$ on $[0,1]$ and $\rho(u^+_\delta)-\rho(u^-_\delta)$ may not vanish when $\delta$ becomes small,  see  Figure~\ref{fig:finitesample}.
Fortunately, when $\rho(\lambda_m/m)$ (dotted-long-dashed line) is smaller than the quantity $\rho(v_\delta)$  (dashed line), the two points $\rho(u^-_\delta)$ and $\rho(u^+_\delta)$ (dotted-dashed lines) are expected to be close as $\delta$ becomes small and the bound given in  Theorem~\ref{thm} will vanish.
The exact expressions of $u^-_\delta $, $u^+_\delta $ and $v_\delta$ can be easily derived by solving the corresponding quadratic equations. For short, we only report the equivalent as $\delta$ tends to $0$:
\begin{align*}
v_\delta&=1-\delta\alpha/(\zeta-\alpha)+O(\delta^2) \\
u^+_\delta& =(1-\zeta)/(1-\alpha) + \delta \zeta/(\zeta-\alpha)+O(\delta^2)\\
u^-_\delta&=(1-\zeta)/(1-\alpha) - \delta \zeta/(\zeta-\alpha)+O(\delta^2).
\end{align*}
Since $ \rho'((1-\zeta)/(1-\alpha)) = \alpha/\zeta^2$, we have $\rho(u^+_\delta)-\rho(u^-_\delta) = 2\alpha \delta / (\zeta^2-\alpha\zeta)+O(\delta^2)$. 
As a result, assuming $\zeta>\alpha$ and $\lambda/m<v_\delta$, we can derive that \eqref{equ-final} and \eqref{equ-final-RM} hold and that quantity $ (\rho(u^+_\delta)-\rho(u^-_\delta))/u^+_\delta$ is equivalent to $ \frac{ 2\alpha(1-\alpha)}{(\zeta^2-\alpha\zeta)(1-\zeta)}\delta$ (as $\delta$ tends to zero)  in the remaining term $\eps(\delta,m,\zeta,y)$.

\end{enumerate}

 \subsection{Convergence rate when \texorpdfstring{$m\rightarrow\infty$}{m to infinity}}
 
We can now use Theorem~\ref{thm} in an asymptotic sense and for specific critical value functions,
in order to obtain an explicit bound on the  convergence rate of the limit appearing in \eqref{equ_asymp}.

 \begin{corollary} \label{thmasymp}
Let $\alpha\in(0,1)$. Consider a threshold collection of the form 
$t_k=\rho(k/m)$, where $\rho$ is either $\rho(x)=\alpha x$ (linear) or given by \eqref{rhoAORC}, that is, associated 
with the AORC.
Consider the SUD procedure of threshold collection $\mbf{t}$ and of order $\lambda=\lambda_m$ possibly depending on $m$. 
Consider either the $\FM$ model with $m_0=\lfloor \zeta_m m\rfloor$ or $\RM$ model with $\pi_0 = \zeta_m$, for some $\zeta_m\in (\alpha,1)$ possibly depending on $m$.
Assume that 
$\frac{1}{1-\zeta_m}\sqrt{(\log m)/m} = o(1)$. For the AORC case, assume moreover 
$\frac{1}{1-\lambda_m/m} \sqrt{(\log m)/m} = o(1)$.
Then we have  for any $F\in\mathcal{F}$,
\begin{equation}\label{equ_asymp_rate}
\left( \FDR(\SUDm(\mbf{t}),F) - \FDR(\SUDm(\mbf{t}),F\equiv 1) \right)_+= O\left(\frac{1}{1-\zeta_m}\sqrt{\frac{\log m}{m}}\right).\end{equation}
\end{corollary}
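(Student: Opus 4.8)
The plan is to instantiate Theorem~\ref{thm} along a vanishing deviation radius $\delta=\delta_m$ of order $\sqrt{(\log m)/m}$, and to control the two contributions to the remaining term $\eps(\delta,m,\zeta,y)$ of \eqref{equ-remainingterm} separately: the \emph{bias} contribution $(\rho(u^+_\delta)-\rho(u^-_\delta))/u^+_\delta$, stemming from the perturbation of the SUD fixed point, and the \emph{deviation} contribution carried by the exponential factors. Concretely, I would take $\delta_m=C\sqrt{(\log m)/m}$ for a fixed constant $C$ to be tuned at the end, with the slack parameters chosen as $y=\nu=O(1/m)$ in the $\FM$ bound \eqref{equ-final} and $\gamma=\delta_m/2$ in the $\RM$ bound \eqref{equ-final-RM}.

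For the bias term I would simply reuse the two explicit computations given just below Theorem~\ref{thm}. In the linear case $\rho(x)=\alpha x$, one has $(\rho(u^+_\delta)-\rho(u^-_\delta))/u^+_\delta\leq 2\alpha\delta/(1-\zeta+\delta)\leq 2\alpha\delta/(1-\zeta)$, which is $O\big(\frac{1}{1-\zeta_m}\sqrt{(\log m)/m}\big)$ once $\delta=\delta_m$ is inserted. In the AORC case \eqref{rhoAORC}, the same ratio is equivalent, as $\delta\to0$, to $\frac{2\alpha(1-\alpha)}{(\zeta^2-\alpha\zeta)(1-\zeta)}\delta$, which again has the announced order provided $\zeta_m$ stays bounded away from $\alpha$ (so that the constant in $O(\cdot)$, depending on $\alpha$, is harmless). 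The assumption $\frac{1}{1-\zeta_m}\sqrt{(\log m)/m}=o(1)$ ensures $\delta_m\to0$, legitimizing the use of the $\delta\to0$ equivalents; in the AORC case I must moreover verify that the branch condition $\lambda_m/m<v_\delta$ holds for $m$ large, which follows from $1-v_{\delta_m}\sim\frac{\alpha}{\zeta-\alpha}\delta_m=\Theta\big(\sqrt{(\log m)/m}\big)$ together with the extra hypothesis $\frac{1}{1-\lambda_m/m}\sqrt{(\log m)/m}=o(1)$, i.e. $1-\lambda_m/m$ decays strictly slower than $\sqrt{(\log m)/m}$.

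For the deviation terms I would check that the exponents scale like $\log m$ with the right sign. Since $\delta_m\gg1/m$ and the slack is negligible, one gets $(\delta_m-y-1/m)_+=\delta_m(1-o(1))$ in the $\FM$ case and $\geq(\delta_m/2)(1-o(1))$ in the $\RM$ case, while $(1-y/\zeta)_+=1-o(1)$ because $\zeta_m\geq\alpha>0$ keeps $y/\zeta=o(1)$. Hence the exponential factor $\frac{4}{1-\zeta}e^{-2m(\delta-y-1/m)_+^2(1-y/\zeta)_+}$ is at most $\frac{4}{1-\zeta_m}m^{-2C^2(1-o(1))}$, and the stand-alone term $4e^{-2m(\gamma-1/m)_+^2}$ of \eqref{equ-final-RM} is $O\big(m^{-(C^2/2)(1-o(1))}\big)$. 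Taking $C$ large enough (e.g. $C=1$) makes both $o\big(\sqrt{(\log m)/m}\big)$, hence also $o\big(\frac{1}{1-\zeta_m}\sqrt{(\log m)/m}\big)$.

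It then remains to combine the pieces: substituting these bounds into \eqref{equ-final} and \eqref{equ-final-RM} and using $m_0/m\leq1$, $\pi_0\leq1$, the right-hand sides equal $\FDR(\SUDm(\mbf{t}),F\equiv1)$ plus a term that is $O\big(\frac{1}{1-\zeta_m}\sqrt{(\log m)/m}\big)$; taking the positive part of the FDR difference yields \eqref{equ_asymp_rate}. I expect the main obstacle to be the uniform balancing step: one must select a single scale $\delta_m=C\sqrt{(\log m)/m}$ that simultaneously keeps the bias $O(\delta_m/(1-\zeta_m))$ and forces every exponential below $\sqrt{(\log m)/m}$, uniformly along sequences with $\zeta_m\to1$ (and, for the AORC, $\lambda_m/m\to1$). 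Establishing that $(1-y/\zeta)_+$ and $(\delta_m-y-1/m)_+/\delta_m$ remain bounded below, and that $\lambda_m/m<v_{\delta_m}$ persists, under only the stated $o(1)$ hypotheses, is the part that genuinely requires care rather than routine estimation.
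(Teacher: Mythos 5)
Your proposal is correct and follows essentially the same route as the paper: instantiate Theorem~\ref{thm} with $\delta_m\propto\sqrt{(\log m)/m}$ (the paper defines $\delta_m$ implicitly so that the exponential term equals exactly $1/m$, whereas you fix $\delta_m=C\sqrt{(\log m)/m}$ and verify the exponentials are negligible — a cosmetic difference), then invoke the explicit computations of Section~\ref{sec:finite} to get the bias term of order $\delta_m/(1-\zeta_m)$, checking the AORC branch condition $\lambda_m/m<v_{\delta_m}$ via the extra hypothesis. Your explicit treatment of the deviation exponents and the caveat about $\zeta_m$ staying away from $\alpha$ are in fact slightly more careful than the paper's own (very terse) argument.
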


Corollary~\ref{thmasymp} is proved in Section~\ref{proof:thmasymp} and is an easy consequence of Theorem~\ref{thm} when taking $\delta=\delta_m$ (and $\gamma=\gamma_m$) suitably tending to zero.
Assuming $\zeta_m>\alpha$ is not an important restriction because when $\zeta_m\leq \alpha$, controlling the FDR is a trivial problem: the procedure rejecting all the hypotheses has an FDR (and even an FDP) smaller than $\zeta_m\leq \alpha$.

While focusing on the linear and AORC rejection curve, the conclusion of Corollary~\ref{thmasymp} is substantially more informative than Theorem~\ref{th-Gont}: it evaluates what is the order of the error when considering that the DU is an LFC of an SUD test. For $\zeta_m=\zeta\in(0,1)$ fixed with $m$, note that the rate of convergence in \eqref{equ_asymp_rate} is equal to the parametric rate, up to a  $\log m$ factor. Furthermore, the constant in the $O(\cdot)$ can be derived explicitly by using the bound from the previous section. For $\zeta_m$ tending  to $1$ (not too quickly, ``fairly" sparse case), the convergence rate is slower.
 
As a counterpart,  assumptions of Corollary~\ref{thmasymp} are more restrictive than those of Theorem~\ref{th-Gont}. In particular, they exclude 
the case where 
$\zeta_m$ tends to $1$ faster than $\sqrt{(\log m)/m}$ (``highly" sparse case).
This is a limitation of the methodology used to prove
the nonasymptotic results. This problem may possibly be fixed by adapting the proof of Lemma~3.7 of \citet{Gont2010} to a nonasymptotic setting, but this falls outside of the intended scope of this paper.

\section{Exact formulas}\label{sec:exactformulas}

In this section, we gather some of the formulas derived by \citet{RV2010,RV2010sup} to calculate the joint distribution of the number of false discoveries and the number of discoveries. Moreover, we complement this work by giving a new recursion that makes these formulas fully usable.
These calculations are used to state Proposition~\ref{prop:disproof}.

\subsection{A new Steck's recursion}

For any $k\geq 0$ and any threshold collection $\mbf{t}= (t_1,\dots,t_k)$, we denote
\begin{equation}
\Psi_k(\mbf{t}) = \Psi_k(t_{1},\dots,t_{k}) =\prob{U_{(1)}\leq t_{1}, \dots, U_{(k)}\leq t_k}.\label{equ_psi}
\end{equation}
 where $(U_i)_{1\leq i\leq k}$ is a sequence of $k$ variables i.i.d. uniform on $(0,1)$ and
with the convention $\Psi_0(\cdot)=1$.
In practice, quantity \eqref{equ_psi} can be evaluated using  standard 
 Steck's recursion
$\Psi_k(\mbf{t})  = (t_k)^k- \sum_{j=0}^{k-2} { {k}\choose{j}} (t_k-t_{j+1})^{k-j} \Psi_{j}(t_1,\dots,t_j)$
  \citep[p. 366--369]{SW1986}.

Next, we generalize the latter to the case of two populations. Define for $0\leq k_0\leq k$  and any threshold 
collection $\mbf{t}= (t_1,\dots,t_k)$,
\begin{equation}
\Psi_{k,k_0,F}(t_1,\dots,t_k)=\prob{U_{(1)}\leq t_{1}, \dots, U_{(k)}\leq t_k},\label{equ_psi2}
\end{equation}
 where $(U_i)_{1\leq i\leq k}$ is a sequence of $k$ variables such that $(U_i)_{1\leq i\leq k_0}$ are i.i.d. uniform on $(0,1)$, independently of $(U_i)_{k_0+1\leq i\leq k}$ i.i.d. of c.d.f. $F$ and with the convention $\Psi_{0,0,F}(\cdot)=1$.
The computation of $\Psi_{k,k_0,F}$ is more difficult than $\Psi_{k}$ because it involves non i.i.d. variables. To our knowledge the existing formulas for computing $\Psi_{k,k_0,F}$ have a complexity exponential with $k$ \citep{glueck2008-2}. Here, we propose a substantially less complex computation, by generalizing Steck's recursions as follows.
 
\begin{proposition}\label{prop:newrecursions}
The following recursion holds: for $0\leq k_0\leq k$,
\begin{align}
\Psi_{k,k_0,F}(t_1,\dots,t_k)  =& (t_k)^{k_0} F(t_k)^{k-k_0}- \sum_{\substack{0\leq j_0\leq j\leq k-2\\ j_0\leq k_0\\ j-j_0\leq k-k_0}}  { {k_0}\choose{j_0}} { {k-k_0}\choose{j-j_0}}\nonumber\\
& \times (t_k-t_{j+1})^{k_0-j_0}(F(t_k)-F(t_{j+1}))^{k-k_0-j+j_0}  \Psi_{j,j_0,F}(t_1,\dots,t_j)\label{gen-steck},
\end{align}
with the convention $0^0=1$.
\end{proposition}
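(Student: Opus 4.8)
The plan is to generalize the classical combinatorial proof of Steck's recursion, splitting the combined sample carefully according to population membership. Write $A=\{U_{(1)}\le t_1,\dots,U_{(k)}\le t_k\}$ for the event defining $\Psi_{k,k_0,F}(t_1,\dots,t_k)$, and let $B=\{U_{(k)}\le t_k\}$ be the event that all $k$ variables fall below $t_k$. Since the $k_0$ variables that are uniform and the $k-k_0$ variables with c.d.f. $F$ are mutually independent, the event $B$ factorizes across the two groups, so that $\Proba(B)=(t_k)^{k_0}F(t_k)^{k-k_0}$; this will produce the leading term of \eqref{gen-steck}, and the whole argument consists in expanding $\Proba(B)$ as a sum in which $\Psi_{k,k_0,F}$ is the term corresponding to ``no constraint violated''.

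First I would partition $B$ by the location of the first violated order-statistic constraint. For $j\in\{0,\dots,k\}$ set $C_j=\{U_{(1)}\le t_1,\dots,U_{(j)}\le t_j,\ U_{(j+1)}>t_{j+1}\}\cap B$, with the convention that $C_k=A$ carries no upper constraint. On $B$, either all constraints hold (giving $C_k=A$) or there is a smallest index $j+1$ at which $U_{(j+1)}>t_{j+1}$, so $\{C_j\}_{0\le j\le k}$ is a genuine partition of $B$; moreover $C_{k-1}=\emptyset$ since $U_{(k)}>t_k$ and $U_{(k)}\le t_k$ are incompatible, which is precisely why the sum stops at $k-2$. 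Next I would refine each $C_j$, for $0\le j\le k-2$, according to the populations of the $j$ smallest variables. The key structural observation, which relies on the monotonicity $t_j\le t_{j+1}$ of the threshold collection, is that on $C_j$ one has $U_{(j)}\le t_j\le t_{j+1}<U_{(j+1)}$, so the value $t_{j+1}$ cleanly separates the sample into the $j$ smallest variables (all $\le t_j$) and the $k-j$ largest (all in $(t_{j+1},t_k]$). Letting $j_0$ be the number of uniform variables among the $j$ smallest, I would write $C_j=\bigsqcup_{j_0}C_{j,j_0}$ and evaluate $\Proba(C_{j,j_0})$ by independence and exchangeability: there are $\binom{k_0}{j_0}$ choices of uniform variables and $\binom{k-k_0}{j-j_0}$ choices of $F$-variables forming the small block, whose own order statistics coincide with $U_{(1)},\dots,U_{(j)}$ and hence contribute $\Psi_{j,j_0,F}(t_1,\dots,t_j)$; the $k-j$ large variables contribute independently $(t_k-t_{j+1})^{k_0-j_0}$ from the uniforms and $(F(t_k)-F(t_{j+1}))^{k-k_0-j+j_0}$ from the $F$-variables. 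The admissible ranges $j_0\le k_0$ and $j-j_0\le k-k_0$ are exactly the summation constraints in \eqref{gen-steck}, and equating $\Proba(B)=\sum_j\sum_{j_0}\Proba(C_{j,j_0})$ with the term $j=k$ isolated yields the identity after rearrangement.

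The main obstacle is not any single computation but the bookkeeping that legitimizes this factorization. I must check that on $C_j$ the set of small variables is intrinsically determined by the configuration (so that the different subset choices index \emph{disjoint} sub-events and nothing is over- or under-counted), which is exactly the point where the nondecreasing property of $\mbf{t}$ is used; and I must verify that the constraint induced on the small block is precisely the event defining $\Psi_{j,j_0,F}$, with no spurious extra condition ``$\le t_{j+1}$'', since that is already implied by $U_{(j)}\le t_j$. The degenerate cases $j=0$, $j_0=0$, and $j=j_0$ are absorbed by the conventions $t_0=0$ and $\Psi_{0,0,F}=1$ together with the usual binomial conventions, while the convention $0^0=1$ covers thresholds at which a factor such as $t_k-t_{j+1}$ or $F(t_k)-F(t_{j+1})$ is raised to the power zero.
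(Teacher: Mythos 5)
Your proof is correct and follows essentially the same route as the paper's own argument: expand $\Proba\left(U_{(k)}\le t_k\right)=(t_k)^{k_0}F(t_k)^{k-k_0}$ by partitioning according to the first violated constraint $U_{(j+1)}>t_{j+1}$ (noting $j=k-1$ is impossible), then according to the population composition of the $j$ smallest variables, and factor by independence to get $\binom{k_0}{j_0}\binom{k-k_0}{j-j_0}\,\Psi_{j,j_0,F}(t_1,\dots,t_j)\,(t_k-t_{j+1})^{k_0-j_0}(F(t_k)-F(t_{j+1}))^{k-k_0-j+j_0}$. The paper carries out exactly this computation, merely indexing the inner sum by the subset $X$ of indices forming the small block rather than directly by $j_0$, and your additional remarks on why the block structure is well defined (monotonicity of $\mbf{t}$, disjointness of the sub-events) are the correct justifications of the steps the paper leaves implicit.
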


This is proved  in Section~\ref{proof:newrecursions}. Note that the case $k=k_0$ reduces to the standard (one population) Steck's recursion.

\subsection{FDR formulas}

Using the $\Psi_k$'s and $\Psi_{k,k_0}$'s, let us define the following useful quantities: for any threshold 
collection $\mbf{t}=(t_k)_{1\leq k \leq m}$, $F\in\mathcal{F}$; for any $\pi_0\in[0,1]$, $k\geq 0$, $k\leq m$, $0\leq j\leq k$,  we let
\begin{align}
\surm_{m,\pi_0,F}(\mbf{t},k,j) &= {{m} \choose {j}} {{m-j} \choose {k-j}} \pi_0^j\pi_1^{k-j} (t_k)^j (F(t_k))^{k-j}\nonumber\\
&\times\Psi_{m-k}(1-G(t_{m}),\dots,1-G(t_{k+1})) \label{equ_for_jointdistsu_rm};\\
\sdrm_{m,\pi_0,F}(\mbf{t},k,j) &= {{m} \choose {j}} {{m-j} \choose {k-j}} \pi_0^j\pi_1^{k-j}  (1-G(t_{k+1}))^{m-k} \nonumber\\
&\times \Psi_{k,j,{F}}(t_1,\dots,t_{k})\label{equ_for_jointdistsd_rm},
\end{align}
where $G(t)=\pi_0 t + (1-\pi_0) F(t)$. For any $m_0\in\{0,\dots,m\}$, $k\geq 0$, $k\leq m$, $j\leq m_0$, $k-j\leq m-m_0$, we let
\begin{align}
\sufm_{m,m_0,F}(\mbf{t},k,j) &= {{m_0} \choose {j}} {{m-m_0} \choose {k-j}} (t_k)^j (F(t_k))^{k-j}\nonumber\\
&\times\Psi_{m-k,m_0-j,\ol{F}}(1-t_{m},\dots,1-t_{k+1}) \label{equ_for_jointdistsu_fm};\\
\sdfm_{m,m_0,F}(\mbf{t},k,j) &= {{m_0} \choose {j}} {{m-m_0} \choose {k-j}} (1-t_{k+1})^{m_0-j} (1-F(t_{k+1}))^{m-m_0-k+j}\nonumber\\
&\times\Psi_{k,j,{F}}(t_1,\dots,t_{k}),\label{equ_for_jointdistsd_fm}
\end{align}
where  $\ol{F}(t)=1-F(1-t)$. The following results have been proved by \citet{RV2010,RV2010sup}.

\begin{theorem}[Roquain and Villers (2011)]\label{th-RV}
Consider any threshold collection $\mbf{t}$ and the quantities defined by \eqref{equ_for_jointdistsu_rm}, \eqref{equ_for_jointdistsd_rm}, \eqref{equ_for_jointdistsu_fm} and \eqref{equ_for_jointdistsd_fm}.
Then the following holds:
\begin{itemize}
\item[(i)] In the $\RM$ model, for any $\pi_0\in[0,1]$, $F\in\mathcal{F}$, $0 \leq k \leq m$, $0\leq j\leq k$, 
\begin{align}
\P(|R\cap \{1,\dots,m_0\}| = j, |R| = k )= \left\{\begin{array}{cc} \surm_{m,\pi_0,F}(\mbf{t},k,j) & \mbox{ for $R=\SU(\mbf{t})$,}\\\sdrm_{m,\pi_0,F}(\mbf{t},k,j) & \mbox{ for $R=\SD(\mbf{t})$.} \end{array} \right.
\label{proba-RM}
\end{align}
\item[(ii)]
In the $\FM$ model, for any $m_0\in\{0,\dots,m\}$,  $F\in\mathcal{F}$, $0 \leq k \leq m$, $0\vee (k-m+m_0) \leq j\leq m_0\wedge k$, 
\begin{align}
\P(|R\cap \{1,\dots,m_0\}| = j, |R| = k )= \left\{\begin{array}{cc} \sufm_{m,m_0,F}(\mbf{t},k,j) & \mbox{ for $R=\SU(\mbf{t})$,}\\\sdfm_{m,m_0,F}(\mbf{t},k,j) & \mbox{ for $R=\SD(\mbf{t})$.} \end{array} \right.
\label{proba-FM}
\end{align}
\end{itemize}
\end{theorem}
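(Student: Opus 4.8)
The plan is to prove both identities by characterizing the event $\{|R\cap\{1,\dots,m_0\}|=j,\,|R|=k\}$ purely in terms of the order statistics of $\mbf{p}$, and then to factorize the resulting probability into an independent product of a contribution coming from the $k$ rejected $p$-values and a contribution coming from the $m-k$ retained ones. First I would record the elementary structural facts implied by \eqref{equ-defSUD} together with the monotonicity of $\mbf{t}$. For $R=\SU(\mbf{t})$ (i.e.\ $\lambda=m$) one has $\wh{k}=\max\{k'\telque p_{(k')}\leq t_{k'}\}$, and I would check that $|R|=\wh{k}$ with $R$ equal to the set of the $\wh{k}$ smallest $p$-values; the point is that $\{|R|=k\}$ is exactly $\{p_{(k)}\leq t_k\}\cap\bigcap_{k'>k}\{p_{(k')}>t_{k'}\}$, and that the retained $p$-values all exceed $t_k$ since $t_{k'}\geq t_k$ for $k'>k$. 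For $R=\SD(\mbf{t})$ (i.e.\ $\lambda=1$) one has similarly $\{|R|=k\}=\bigcap_{k'\leq k}\{p_{(k')}\leq t_{k'}\}\cap\{p_{(k+1)}>t_{k+1}\}$, and again the $k$ rejected $p$-values are the $k$ smallest.

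Next I would condition on the identity of the rejected set. In the $\FM$ model the $m_0$ null $p$-values (uniform) and the $m-m_0$ alternative ones ($F$-distributed) are independent, so by exchangeability within each group the probability that the rejected set consists of a prescribed collection of $j$ nulls and $k-j$ alternatives does not depend on the particular choice, producing the factor $\binom{m_0}{j}\binom{m-m_0}{k-j}$. In the $\RM$ model each coordinate is independently tagged null (with probability $\pi_0$) or alternative (with probability $\pi_1$), which additionally yields $\binom{m}{j}\binom{m-j}{k-j}\pi_0^j\pi_1^{k-j}$ and lets the retained coordinates be treated as i.i.d.\ with the unconditional mixture c.d.f.\ $G$. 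Once the rejected set $R$ is fixed, the structural facts above show that $R$ lies entirely below $R^c$, so the global order statistics coincide with the concatenation of the block order statistics and the event splits as an independent product over $R$ and $R^c$.

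For the rejected block I would then compute: in the SU case the only surviving constraint is that each rejected $p$-value be $\leq t_k$, giving $(t_k)^j(F(t_k))^{k-j}$; in the SD case the constraint is the ordered chain $p^R_{(k')}\leq t_{k'}$ for $k'=1,\dots,k$, which is by definition the two-population quantity $\Psi_{k,j,F}(t_1,\dots,t_k)$ of \eqref{equ_psi2}. For the retained block, the SD case requires only that every retained $p$-value exceed $t_{k+1}$, giving $(1-G(t_{k+1}))^{m-k}$ in the $\RM$ model and $(1-t_{k+1})^{m_0-j}(1-F(t_{k+1}))^{m-m_0-k+j}$ in the $\FM$ model, where $m_0-j$ and $m-m_0-k+j$ count the retained nulls and alternatives respectively.

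The genuinely delicate contribution is the retained block in the SU case, which carries the ordered ``from above'' condition $p^{R^c}_{(\ell)}>t_{k+\ell}$ for $\ell=1,\dots,m-k$; handling it is the step I expect to be the crux. The device is the order-reversing reflection $u\mapsto 1-u$ (equivalently $u\mapsto 1-G(u)$ in the $\RM$ model): it turns these constraints into standard ``from below'' constraints of $\Psi$-type, while mapping a uniform law to a uniform law and $F$ to $\ol{F}(t)=1-F(1-t)$. Tracking the index reversal carefully then identifies the retained SU block with $\Psi_{m-k}(1-G(t_m),\dots,1-G(t_{k+1}))$ in the $\RM$ model and with $\Psi_{m-k,m_0-j,\ol{F}}(1-t_m,\dots,1-t_{k+1})$ in the $\FM$ model. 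Assembling the three blocks reproduces \eqref{equ_for_jointdistsu_rm}--\eqref{equ_for_jointdistsd_fm}, hence \eqref{proba-RM}--\eqref{proba-FM}. The remaining obstacles are essentially bookkeeping: verifying that the rejected set is exactly the $k$ smallest $p$-values (so that the global and block order statistics match and the factorization is legitimate) and getting the index reversal in the reflection step exactly right.
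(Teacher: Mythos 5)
The paper itself does not prove this theorem: it is imported from Roquain and Villers (2011) (``The following results have been proved by...''), so there is no in-paper proof to compare against. Your argument is correct and is, as far as I can tell, essentially the original derivation: identify $\{|R|=k\}$ with the order-statistic events $\{p_{(k)}\leq t_k\}\cap\bigcap_{k'>k}\{p_{(k')}>t_{k'}\}$ (step-up) and $\bigcap_{k'\leq k}\{p_{(k')}\leq t_{k'}\}\cap\{p_{(k+1)}>t_{k+1}\}$ (step-down), sum over the identity of the rejected set to produce the combinatorial and $\pi_0^j\pi_1^{k-j}$ factors, factorize into independent blocks, and resolve the step-up ``retained'' block by the reflection $u\mapsto 1-u$ (whence $\ol{F}$ and the reversed thresholds $1-t_m,\dots,1-t_{k+1}$). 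The one step you defer to ``bookkeeping'' deserves to be made explicit, since it is exactly where the monotonicity of $\mbf{t}$ is used: for a candidate rejected set $A$ with $|A|=k$, the block conditions $\max_{i\in A}p_i\leq t_k$ and $\min_{i\in A^c}p_i>t_{k+1}\geq t_k$ by themselves already force $A$ to index the $k$ smallest $p$-values, so the event $\{R=A\}$ equals the intersection of the two block events with no residual cross-block ordering constraint, and the product form of the probability follows from independence of the two blocks. With that observation spelled out, the proof is complete (up to the usual null-set care with ties and with the probability integral transform for the mixture c.d.f.\ $G$).
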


Classically, any step-up-down procedure can be written as a combination of a step-down and a step-up procedure \citep{Sar2002}:
 \begin{equation}\label{SUD-comb}
 \SUD(\mbf{t})= \left\{\begin{array}{ll} 
  \SU((t_\lambda\wedge t_j)_{1\leq j \leq m})& \mbox{ if $|\SU((t_\lambda\wedge t_j)_{1\leq j \leq m})|< \lambda$,}\\
 \SD((t_\lambda\vee t_j)_{1\leq j \leq m})& \mbox{ if $| \SD((t_\lambda\vee t_j)_{1\leq j \leq m})|\geq \lambda$.}
\end{array}\right.
\end{equation}
Moreover, since $\{|\SU((t_\lambda\wedge t_j)_{1\leq j \leq m})|< \lambda\}=\{p_{(\lambda)}>t_\lambda\}$ and $\{| \SD((t_\lambda\vee t_j)_{1\leq j \leq m})|\geq \lambda\}=\{p_{(\lambda)}\leq t_\lambda\}$ the two cases in \eqref{SUD-comb} form a partition of the probability space. 
This yields the following explicit FDR computations: 

\begin{corollary}
let $\lambda\in\{1,\dots,m\}$ and consider any threshold collection $\mbf{t}$. Then the following holds:
\begin{itemize}
\item[(i)] In the  model $\RM$, for any $\pi_0\in[0,1]$,  $F\in\mathcal{F}$, we have
\begin{align}
\FDR(\SUD(\mbf{t}))=& \sum_{k=1}^{\lambda-1} \sum_{j=0}^{k} \frac{j}{k}  \surm_{m,\pi_0,F}(\mbf{t}\wedge t_\lambda,k,j)  \nonumber\\
&+\sum_{k=\lambda}^m \sum_{j=0}^{k} \frac{j}{k}  \sdrm_{m,\pi_0,F}(\mbf{t}\vee t_\lambda,k,j).
\label{form-SUD-RM}
\end{align}
\item[(ii)] In the  model $\FM$, for any $m_0\in\{0,\dots,m\}$,  $F\in\mathcal{F}$, we have
\begin{align}
\FDR(\SUD(\mbf{t}))=& \sum_{k=1}^{\lambda-1} \sum_{j=0\vee (k-m+m_0)}^{m_0\wedge k} \frac{j}{k}  \sufm_{m,m_0,F}(\mbf{t}\wedge t_\lambda,k,j)  \nonumber\\
&+\sum_{k=\lambda}^m \sum_{j=0\vee (k-m+m_0)}^{m_0\wedge k} \frac{j}{k}  \sdfm_{m,m_0,F}(\mbf{t}\vee t_\lambda,k,j).\label{form-SUD-FM}
\end{align}
\end{itemize}
\end{corollary}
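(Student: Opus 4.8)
The plan is to combine the pathwise decomposition \eqref{SUD-comb} of the SUD procedure with the joint distribution formulas of Theorem~\ref{th-RV}. Writing $R=\SUD(\mbf{t})$, $j=|R\cap\{1,\dots,m_0\}|$ for the number of false discoveries and $k=|R|$ for the total number of discoveries, the FDR is by definition $\E[\FDP]$ with $\FDP=j/k$ (and $\FDP=0$ when $k=0$), see \eqref{equ_FDP_FM}. First I would use the partition of the probability space stated immediately before the corollary, namely $\{p_{(\lambda)}>t_\lambda\}=\{|\SU(\mbf{t}\wedge t_\lambda)|<\lambda\}$ together with its complement $\{p_{(\lambda)}\leq t_\lambda\}=\{|\SD(\mbf{t}\vee t_\lambda)|\geq\lambda\}$, in order to split $\E[\FDP]$ into two contributions.

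On the event $\{p_{(\lambda)}>t_\lambda\}$, the identity \eqref{SUD-comb} gives $R=\SU(\mbf{t}\wedge t_\lambda)$, and moreover $|R|<\lambda$ there. The key observation is that $\{|\SU(\mbf{t}\wedge t_\lambda)|=k\}\subseteq\{p_{(\lambda)}>t_\lambda\}$ for every $k<\lambda$, so restricting the summation range of $k$ to $\{1,\dots,\lambda-1\}$ automatically enforces the conditioning on this event. Consequently, summing $\frac{j}{k}$ against the joint law of $(|\SU(\mbf{t}\wedge t_\lambda)\cap\{1,\dots,m_0\}|,|\SU(\mbf{t}\wedge t_\lambda)|)$ over $k\in\{1,\dots,\lambda-1\}$ reproduces exactly $\E[\FDP\,\ind{p_{(\lambda)}>t_\lambda}]$. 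Symmetrically, on $\{p_{(\lambda)}\leq t_\lambda\}$ one has $R=\SD(\mbf{t}\vee t_\lambda)$ with $|R|\geq\lambda$, and the inclusion $\{|\SD(\mbf{t}\vee t_\lambda)|=k\}\subseteq\{p_{(\lambda)}\leq t_\lambda\}$ for $k\geq\lambda$ shows that summing over $k\in\{\lambda,\dots,m\}$ reproduces $\E[\FDP\,\ind{p_{(\lambda)}\leq t_\lambda}]$.

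It then remains to substitute the explicit joint distributions. In the $\RM$ model, Theorem~\ref{th-RV}(i) identifies $\P(|R'\cap\{1,\dots,m_0\}|=j,|R'|=k)$ with $\surm_{m,\pi_0,F}(\cdot,k,j)$ when $R'=\SU$ and with $\sdrm_{m,\pi_0,F}(\cdot,k,j)$ when $R'=\SD$, evaluated at the modified threshold collections $\mbf{t}\wedge t_\lambda$ and $\mbf{t}\vee t_\lambda$ respectively; plugging these into the two sums yields \eqref{form-SUD-RM}. In the $\FM$ model, Theorem~\ref{th-RV}(ii) plays the same role, now with the admissible range $0\vee(k-m+m_0)\leq j\leq m_0\wedge k$ for the inner index, and produces \eqref{form-SUD-FM}. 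The contributions with $k=0$ vanish since no rejection forces $\FDP=0$, which is why the outer sums may start at $k=1$.

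The step requiring the most care is the bookkeeping that the two disjoint ranges $\{1,\dots,\lambda-1\}$ and $\{\lambda,\dots,m\}$ correctly implement the partition with no double counting. Concretely, one must check, using the two set identities recalled above, both that $\SU(\mbf{t}\wedge t_\lambda)$ never rejects $\lambda$ or more hypotheses on $\{p_{(\lambda)}>t_\lambda\}$ and that $\SD(\mbf{t}\vee t_\lambda)$ always rejects at least $\lambda$ on the complementary event. Once this is granted, the corollary follows immediately from the law of total expectation combined with Theorem~\ref{th-RV}.
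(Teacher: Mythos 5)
Your proposal is correct and follows the same route as the paper: the pathwise decomposition \eqref{SUD-comb} together with the set identities $\{|\SU(\mbf{t}\wedge t_\lambda)|<\lambda\}=\{p_{(\lambda)}>t_\lambda\}$ and $\{|\SD(\mbf{t}\vee t_\lambda)|\geq\lambda\}=\{p_{(\lambda)}\leq t_\lambda\}$, which make the two ranges $k<\lambda$ and $k\geq\lambda$ implement the partition, followed by substitution of the joint laws from Theorem~\ref{th-RV}. The paper leaves these steps implicit, so your write-up is simply a more detailed version of the same argument.
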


In the model $\RM$, it turns out that $\FDR(\SUD(\mbf{t}))$ has an expression only involving the $\Psi_k$s, and not the $\Psi_{k,j,{F}}$ \citep[Section~5.2]{RV2010}. Although it has a somewhat less intuitive form, it is better than \eqref{form-SUD-RM} from a computational point of view. 

\section{Discussion}

Our aim in this paper was to address the question ``is the Dirac-uniform distribution an LFC for an intermediate step-up-down procedure (that uses a standard threshold collection)?" 
In a nutshell, the answer we found is ``no, but almost".
We provided a rigorous quantification 
of what ``almost" means, using 
an alternative approach to the asymptotic results of \citet{Gont2010} 
that entails nonasymptotic bounds and explicit convergence rates.
In practical situations, evaluating such bounds can allow to determine whether we can consider that the FDR is maximum when the signal strength is maximum.

Returning to equations \eqref{equ_FDR_FM} and \eqref{equ_FDP_FM}, an additional question,
particularly relevant in practice, 
is how appropriate it is to base the multiple type I error criterion solely
on control of the {\em expectation} of the random variable FDP. 
We notice that Theorem \ref{th-RV} may also be used to study this issue
by computing exactly the point mass function of the FDP under arbitrary configurations
for the alternative, cf. Section \ref{fdp-formulas} in the appendix.
Based on this, we investigated
to what extent the distribution of the FDP concentrates around its expectation for a simple
Gaussian location model with parameter $\mu$. 
Figure \ref{fig:FDPconc} was obtained from these exact formulas for
$m = 100$ and varying values of $\pi_0$ and $\mu$. 
Note that the unrealistically large choice of $\alpha = 1/2$
has only been used for reasons of readability of the figures;
similar plots also obtained when choosing $\alpha$ smaller
(the variance of the FDP actually increases with smaller $\alpha$, because this entails  a smaller
number of rejections). 
On inspection of these graphs, it becomes apparent 
that -- even though joint independence of the $p$-values holds -- 
the distribution of the FDP is not concentrated around the corresponding FDR in the 
following two situations: (i) The effect size $\mu$ is close to zero (weak signal) or (ii) 
the proportion $\pi_0$ of true null hypotheses is close to $1$ (sparse signal). 
Thus, controlling the FDR alone does not guarantee a small FDP for a specific experiment at hand
in these cases. 
For
 a well-defined dependency structure induced by exchangeable test statistics,
theoretical arguments for $m$ tending to infinity support
the observation that the distribution of the FDP often does not degenerate in the limit, see \citep{FDR2007,DR2011}.
For the jointly independent case and in the cases (i) or (ii) above, 
this phenomenon has not been theoretically studied 
to the best of our knowledge.
The latter can possibly be investigated by extending the asymptotic approach of \citet{Neu2008} to the case where $\mu$ and $\pi_0$ are allowed to depend on $m$.

\begin{figure}[h!]
\begin{tabular}{cccc}
&$\mu=0.01$& $\mu=5/\sqrt{m}=0.5$&$\mu=\infty$\\  
\rotatebox{90}{\hspace{1cm} $\pi_0=0.2$ 
} & \includegraphics[scale=0.23]{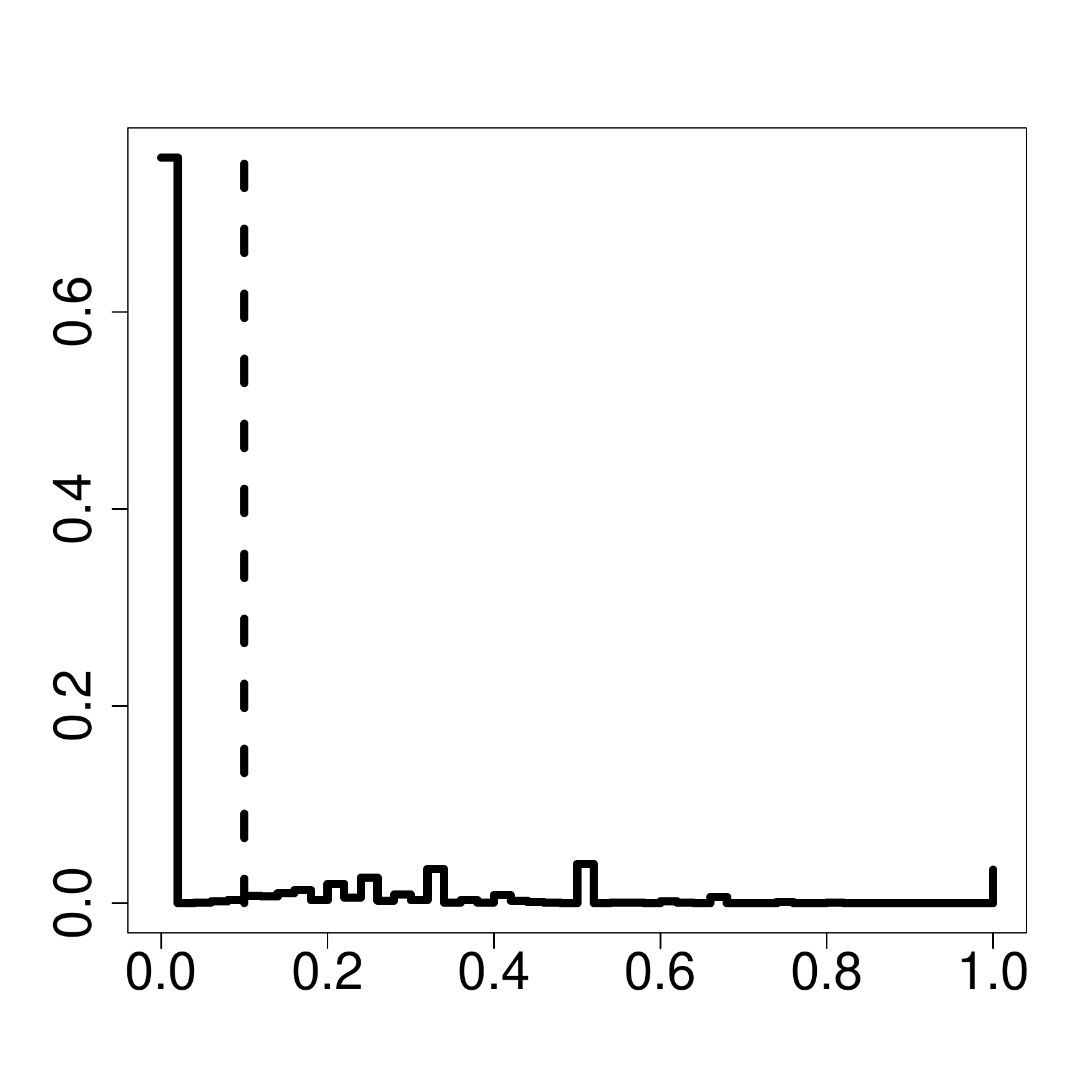} & \includegraphics[scale=0.23]{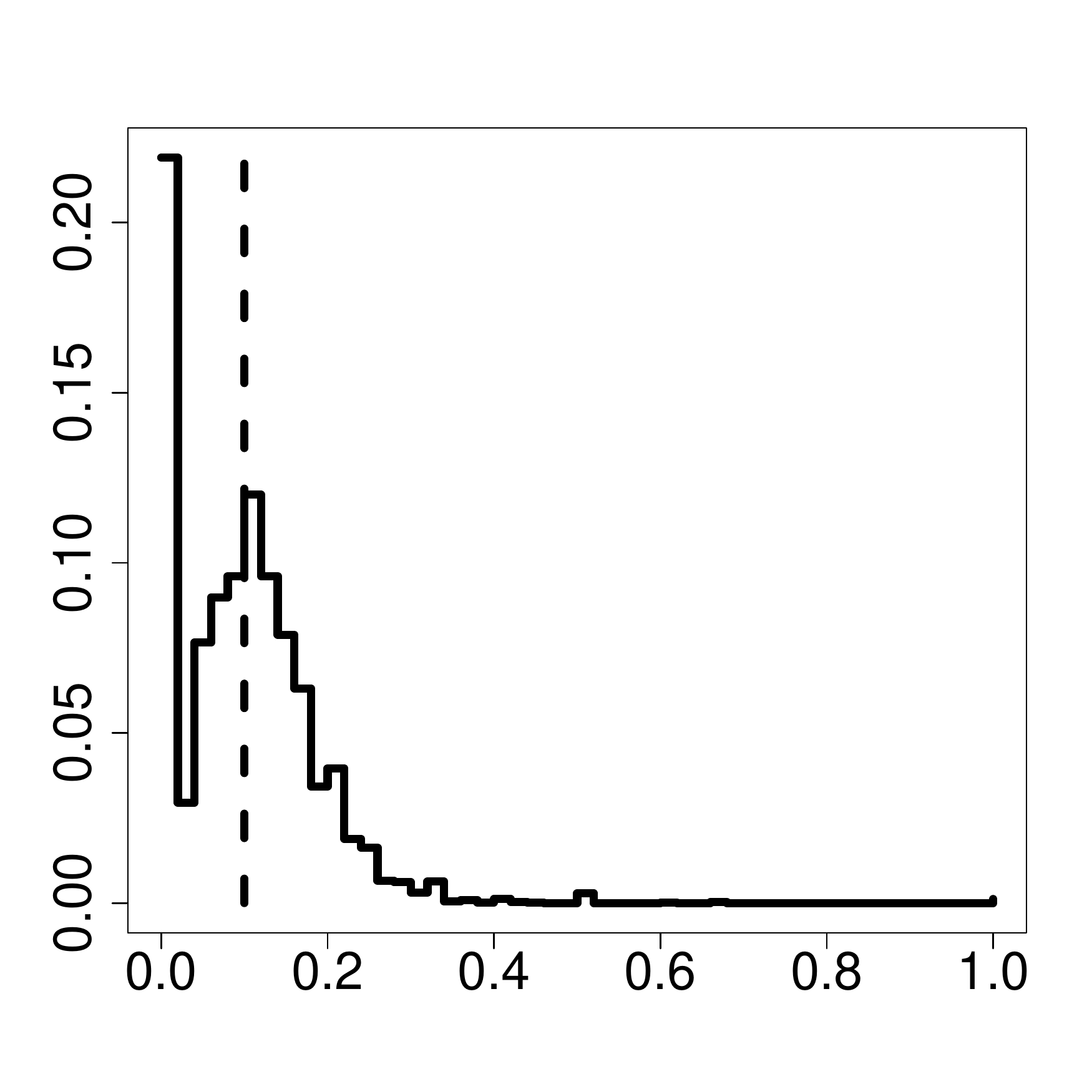} & \includegraphics[scale=0.23]{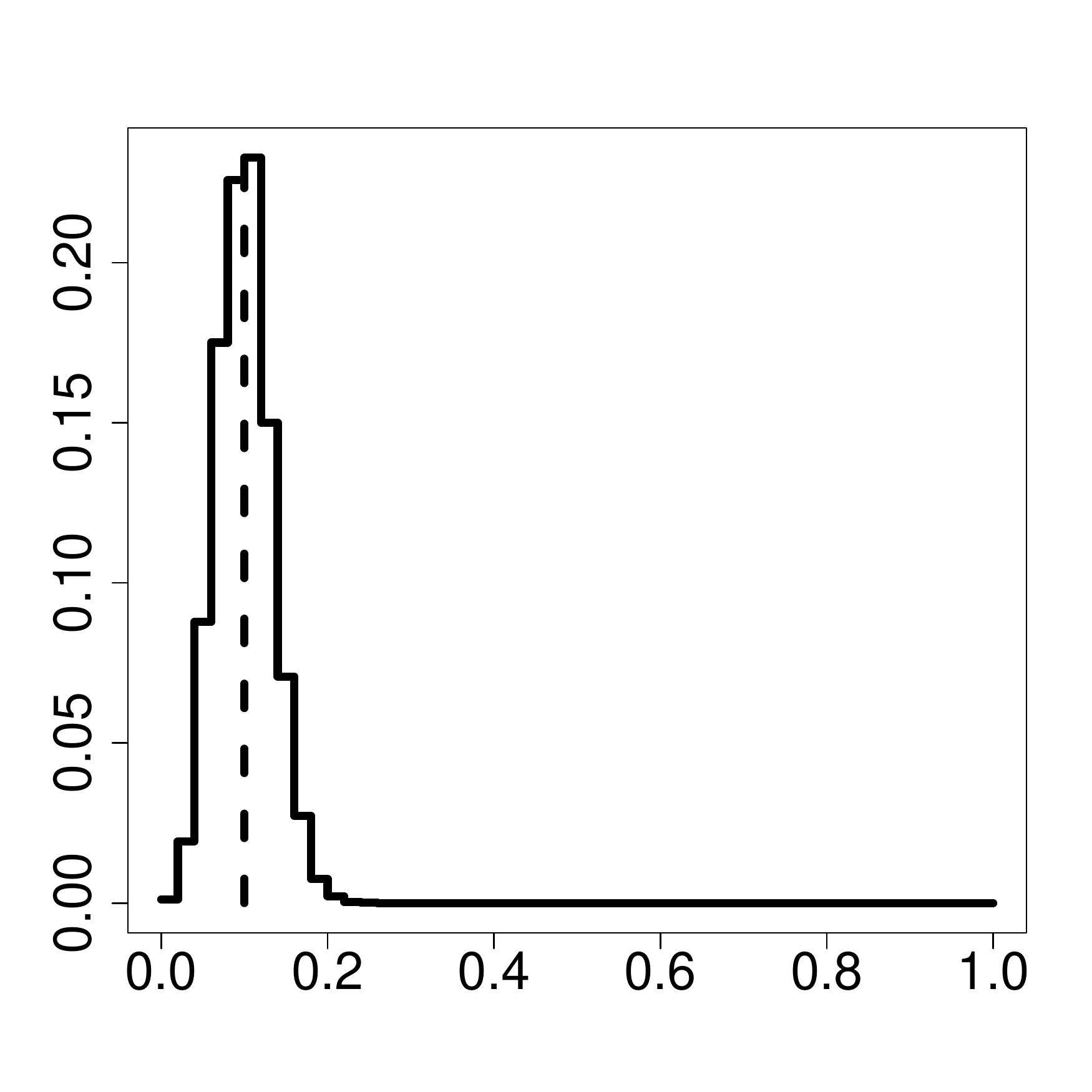} \\
\rotatebox{90}{\hspace{1cm}$\pi_0=0.5$  
} & \includegraphics[scale=0.23]{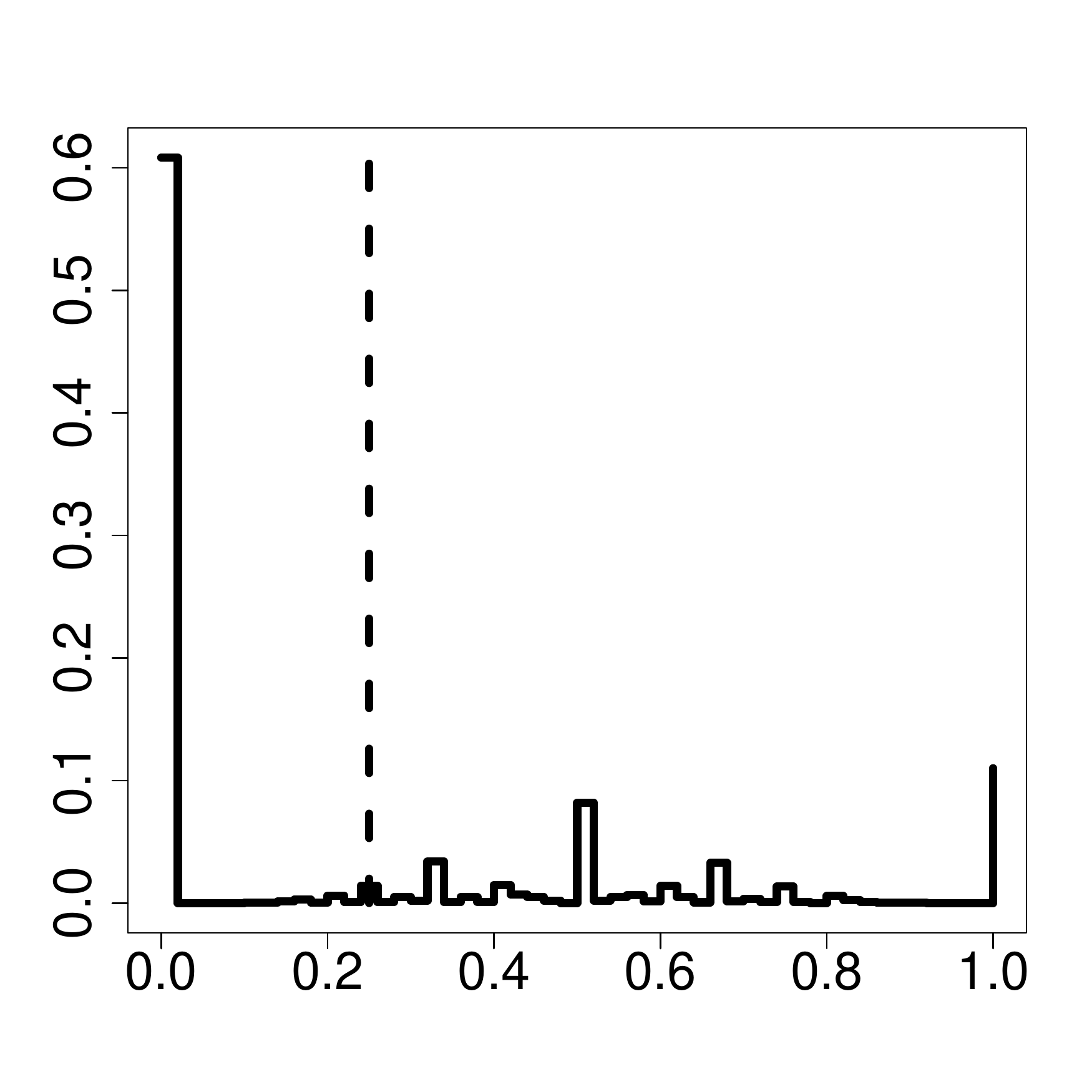} & \includegraphics[scale=0.23]{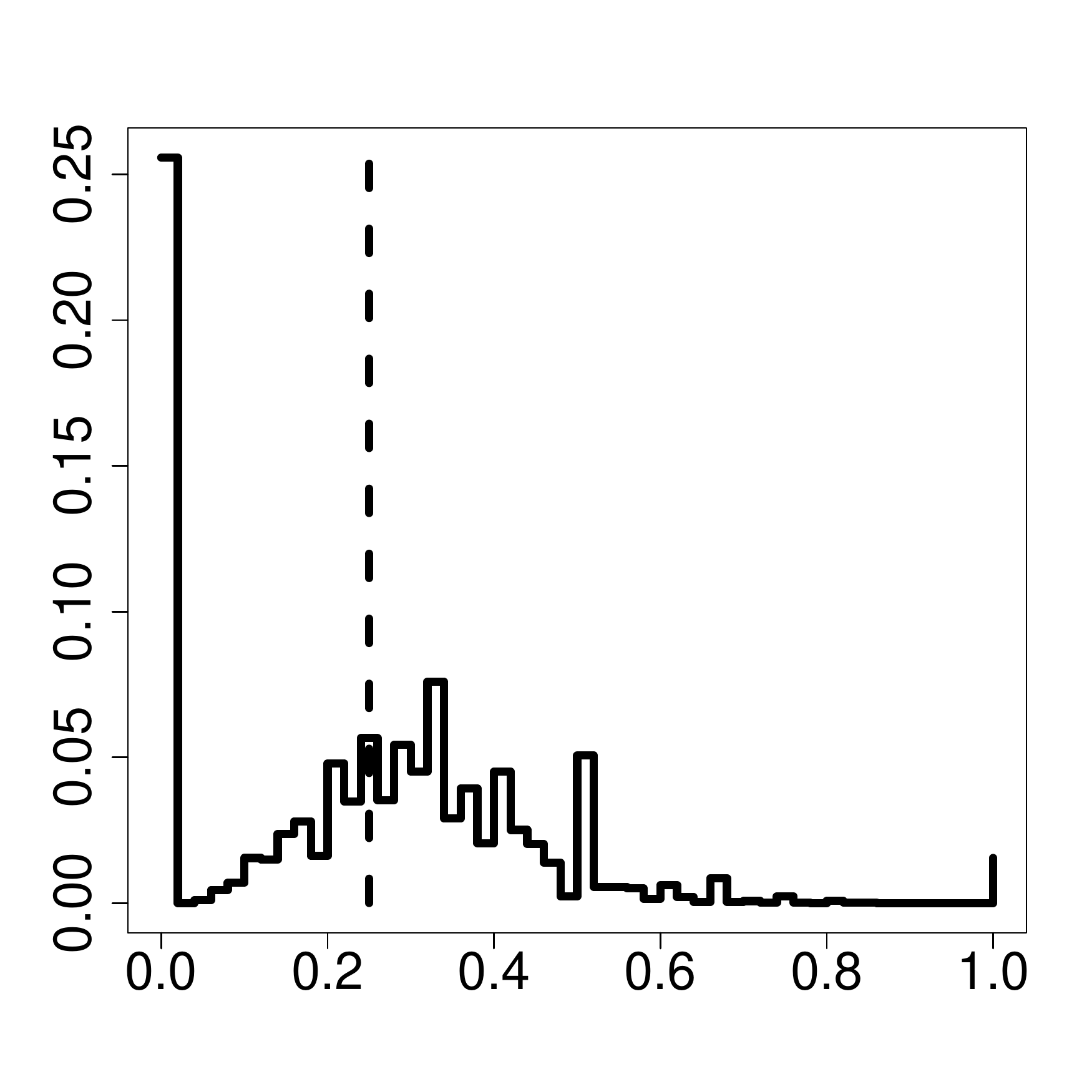} & \includegraphics[scale=0.23]{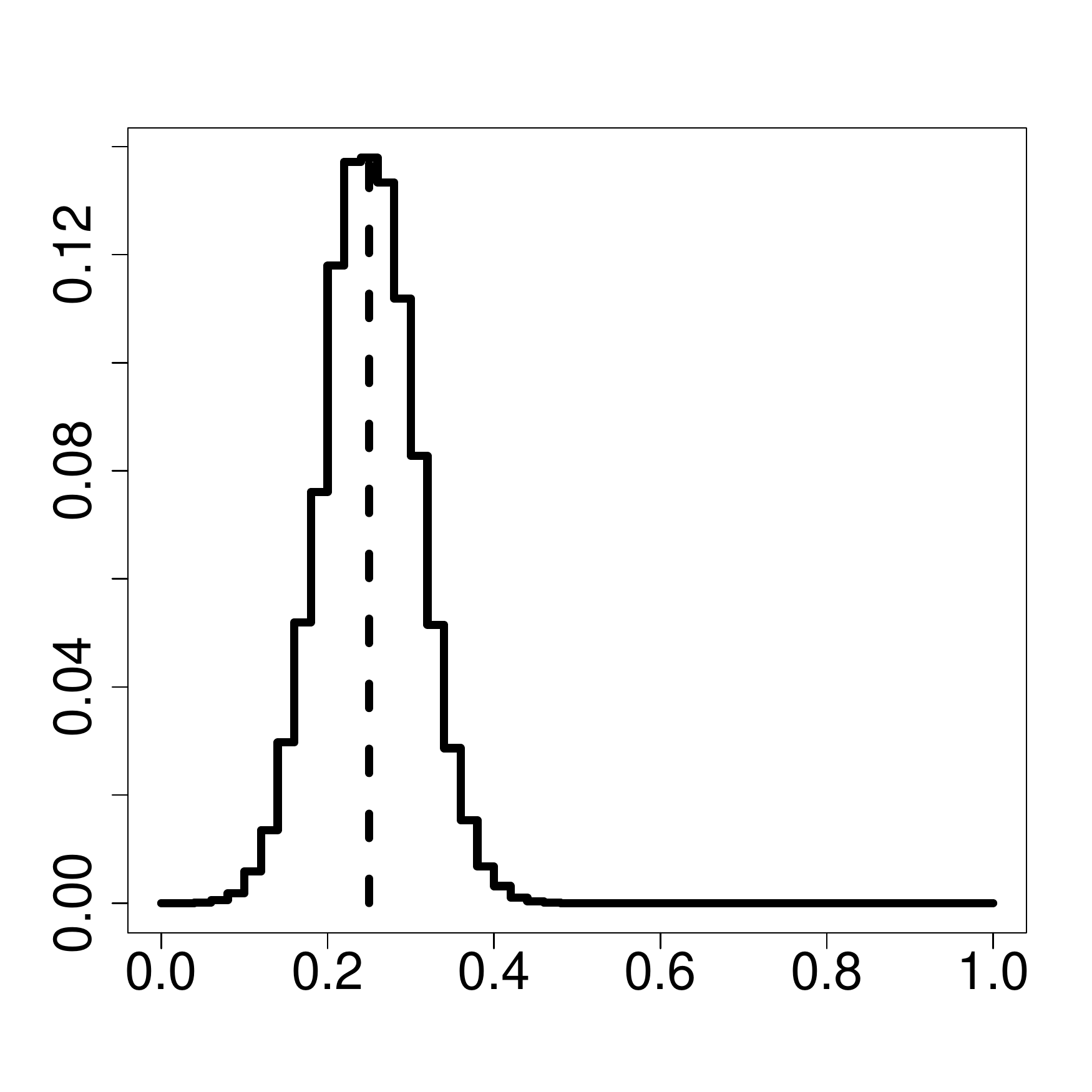} \\
\rotatebox{90}{\hspace{1cm}$\pi_0=0.95$  
} & \includegraphics[scale=0.23]{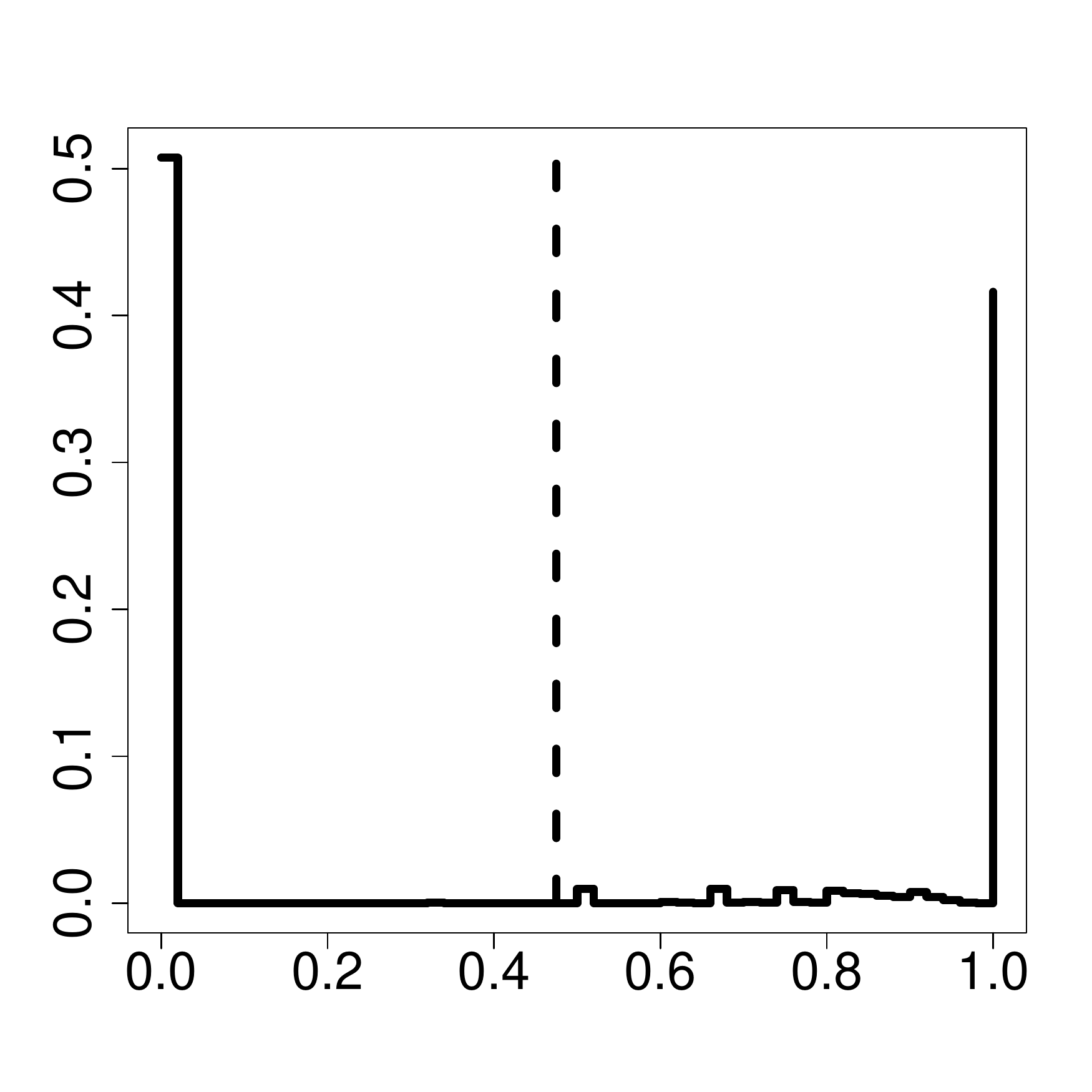} & \includegraphics[scale=0.23]{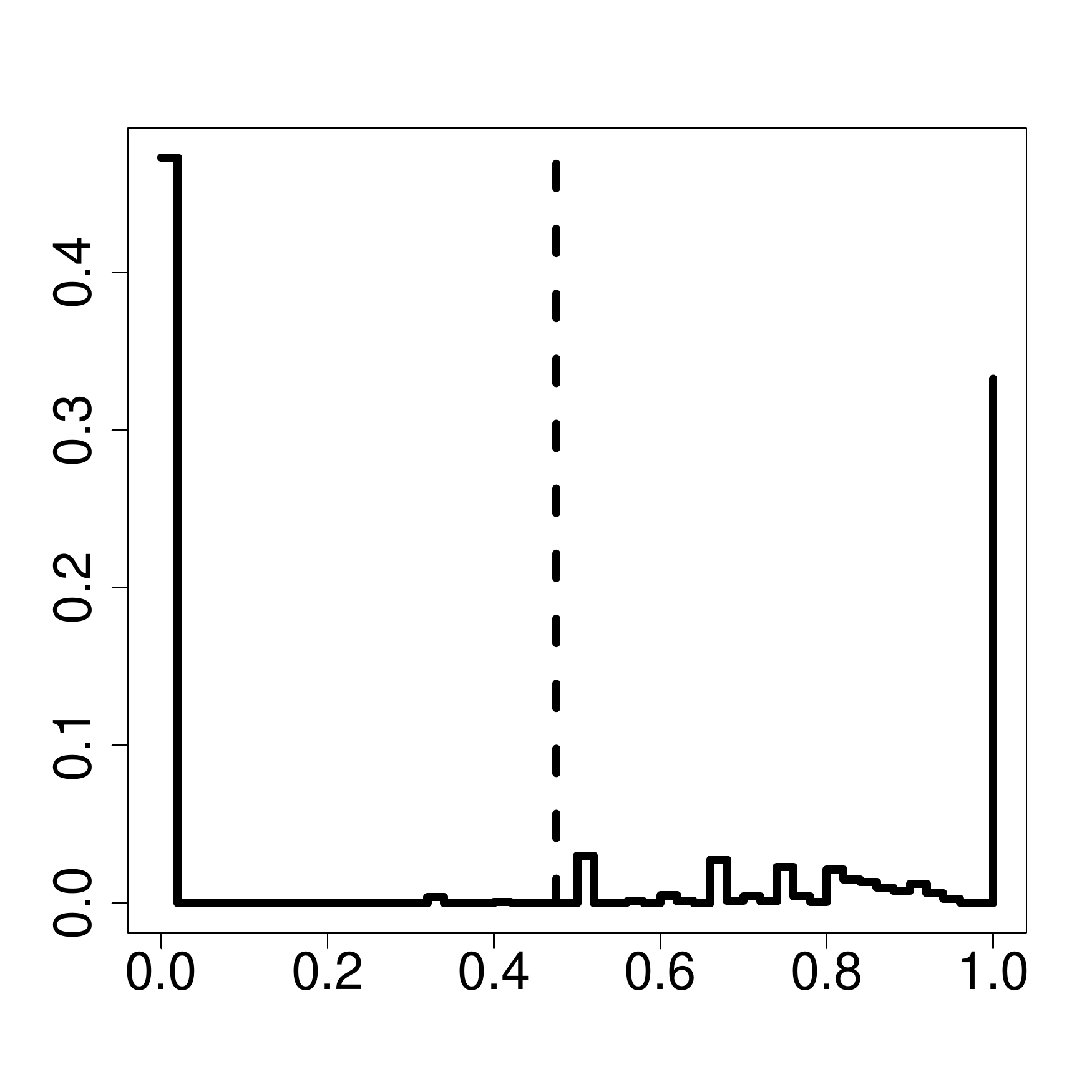} & \includegraphics[scale=0.23]{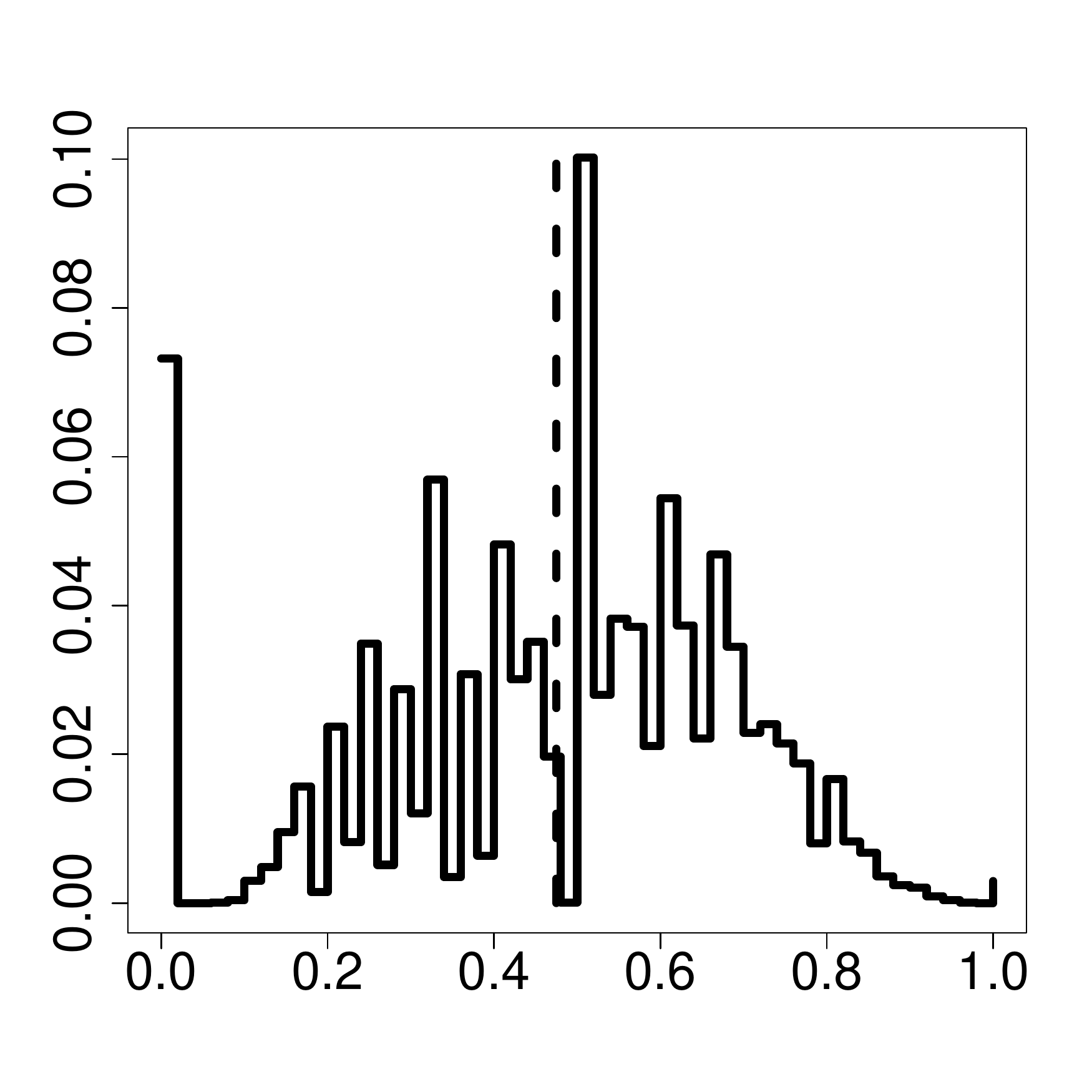} 
\end{tabular}
\caption{
Exact probability $\P(\FDP(\LSU)\in [i/50,(i+1)/50))$ for $0\leq i \leq 50$. The value of $\FDR(\LSU)=\pi_0\alpha$ is displayed by the vertical  dashed line.  Random mixture model. $\alpha=0.5$; $m=100$. One sided location Gaussian c.d.f. with parameter $\mu$. \label{fig:FDPconc}}
\end{figure}

Taking these considerations into account, control of the false discovery exceedance (i.e., of the probability that the FDP exceeds a given threshold) has recently been proposed in the 
literature \citep[see, e.g.][for a review]{farcomeni2008}. Controlling the false discovery exceedance control again brings forth the question of the corresponding LFC:
are Dirac-uniform configurations least favorable for, e.g., $\P(\FDP(\LSU)>x)$?
Some non-reported figures show
that this is not the case for any $x$. Hence, finding LFCs for the  false discovery exceedance stays 
an open avenue for future research.

\section{Proofs}\label{sec:proofs}

\subsection{Proof of Proposition~\ref{prop:newrecursions}}\label{proof:newrecursions}

We follow the proof of the regular Steck's recursion \citep[p. 366--369]{SW1986}. By using the convention $U_{(0)}=t_0=0$ and by considering the smallest $j$ for which $U_{(j+1)}> t_{j+1}$, we can write
\begin{align*}
& \P(U_{(k)}\leq t_k)-\prob{U_{(1)}\leq t_{1}, \dots, U_{(k)}\leq t_k} \\
& =  \sum_{j=0}^{k-2} \P(\forall i\leq j, U_{(i)}\leq t_i,  U_{(j+1)}> t_{j+1}, U_{k}\leq t_k) \\
&=  \sum_{j=0}^{k-2} \: \sum_{X\subset \{1,\dots,k\}, |X|=j} \P(\forall i\leq j, U_{(i)}\leq t_i,  \forall i \notin X, t_{j+1} \leq U_i\leq t_k).
\end{align*}
Hence, if $U_{(i:X)}$ denotes the $i$-th smallest member of the set $\{U_i,i\in X\}$, we obtain
\begin{align*}
& \P(U_{(k)}\leq t_k)-\prob{U_{(1)}\leq t_{1}, \dots, U_{(k)}\leq t_k} \\
=&  \sum_{j=0}^{k-2} \: \sum_{X\subset \{1,\dots,k\}, |X|=j} \P(\forall i\leq j, U_{(i:X)}\leq t_i) \P( \forall i \notin X, t_{j+1} \leq U_i\leq t_k)\\
=&  \sum_{j=0}^{k-2} \sum_{j_0=0}^{j} \: \sum_{X\subset \{1,\dots,k\}, |X|=j}\:  \ind{ |X \cap \{1,\dots,k_0\}|=j_0} \Psi_{j,j_0,F}(t_1,\dots,t_j) \\
&\times \P( \forall i \notin X, t_{j+1} \leq U_i\leq t_k).
\end{align*}

\subsection{Proof of Lemma~\ref{lemma-fund}}\label{proof-lemma-fund}
In this proof, we denote $\hat{\mathbb{G}}'_m=(\hat{\mathbb{G}}_m+m^{-1})\wedge 1$ for short.
Let us first note that for any $k\in\{1,\dots,m\}$, ``$p_{(k)}\leq t_k$" is equivalent to ``$\hat{\mathbb{G}}_{m}(\rho(k/m))\geq k/m$". 
We now distinguish two cases:
\begin{itemize}
\item[-] Step-up case:  assume $p_{(\lambda)} > t_\lambda$, that is, $\hat{\mathbb{G}}_{m}(\rho(\lambda/m))< \lambda/m$. Let us prove that $ \wh{k}/m=\TSud(\lambda/m,\hat{\mathbb{G}}_m)$. Since $\TSud(\toto,G)$ is a fixed point of the function $G\circ\rho$, we have $\TSud(\toto,G)\in\{0,1/m,\dots,m/m\}$. Hence, 
\begin{align*}
\widehat{k}/m&=  \max\{k/m\in\{0,\dots,\lambda/m\}\telque \:\hat{\mathbb{G}}_{m}(\rho(k/m))\geq k/m\} \\
&= \max\{u \in [0,\lambda/m] \telque \:\hat{\mathbb{G}}_{m}(\rho(u))\geq u\},
\end{align*}
and we can conclude.
\item[-] Step-down case: assume $p_{(\lambda)} \leq t_\lambda$, that is, $\hat{\mathbb{G}}_{m}(\rho(\lambda/m))\geq \lambda/m$. First assume that $\wh{k}<m$ and prove that $ (\wh{k}+1)/m=\TSud(\lambda/m,\hat{\mathbb{G}}'_m)$. 
On the one hand,
\begin{align}
(\widehat{k}+1)/m
&=\min\{k/m\in\{(\lambda+1)/m,\dots,m/m\}\telque \:\hat{\mathbb{G}}_{m}(\rho(k/m))< k/m \}\nonumber\\
&=\min\{k/m\in\{\lambda/m,\dots,m/m\}\telque \:\hat{\mathbb{G}}_{m}(\rho(k/m))< k/m \}\nonumber\\
&=\min\{k/m\in\{\lambda/m,\dots,m/m\}\telque \:\hat{\mathbb{G}}_{m}'(\rho(k/m)) \leq  k/m \},\label{equ-inter-1}
\end{align}
because $m\hat{\mathbb{G}}_{m}(\rho(k/m))$ is an integer.
On the other hand, since $\hat{\mathbb{G}}'_{m}(\rho(\lambda/m))\geq \lambda/m$, we have
\begin{align}
\TSud(\lambda/m,\hat{\mathbb{G}}'_m)&=\min \set{u \in [\lambda/m, 1] \telque \hat{\mathbb{G}}'_m(\rho(u))\leq u }\nonumber\\
&=\min \set{u \in \{\lambda/m,\dots,m/m\} \telque \hat{\mathbb{G}}'_m(\rho(u))\leq u }\label{equ-inter-2},
\end{align}
because $\TSud(\lambda/m,\hat{\mathbb{G}}'_m)\in\{0,1/m,\dots,m/m\}$.
Combining \eqref{equ-inter-1} and \eqref{equ-inter-2} yields the result. 
Second, in the case where $\wh{k}=m$, then for any $k/m\in\{\lambda/m,\dots,m/m\}$, we have $\hat{\mathbb{G}}_{m}(\rho(k/m))\geq k/m$. Hence, for all $k/m\in\{\lambda/m,\dots,(m-1)/m\}$, $\hat{\mathbb{G}}'_{m}(\rho(k/m))> k/m$ 
which entails $\TSud(\lambda/m,\hat{\mathbb{G}}'_m)=1$. Hence, $ \wh{k}/m=\TSud(\lambda/m,\hat{\mathbb{G}}'_m)$ in that case.
Finally, the inequality $\wh{k}/m\leq \TSud(\lambda/m,\hat{\mathbb{G}}'_m)$ always holds.
\end{itemize}

\begin{figure}[h!]
\begin{tabular}{ccc}
 $\lambda=8$ (SU part) & $\lambda=4$ (SD part) \\
  \includegraphics[scale=0.34]{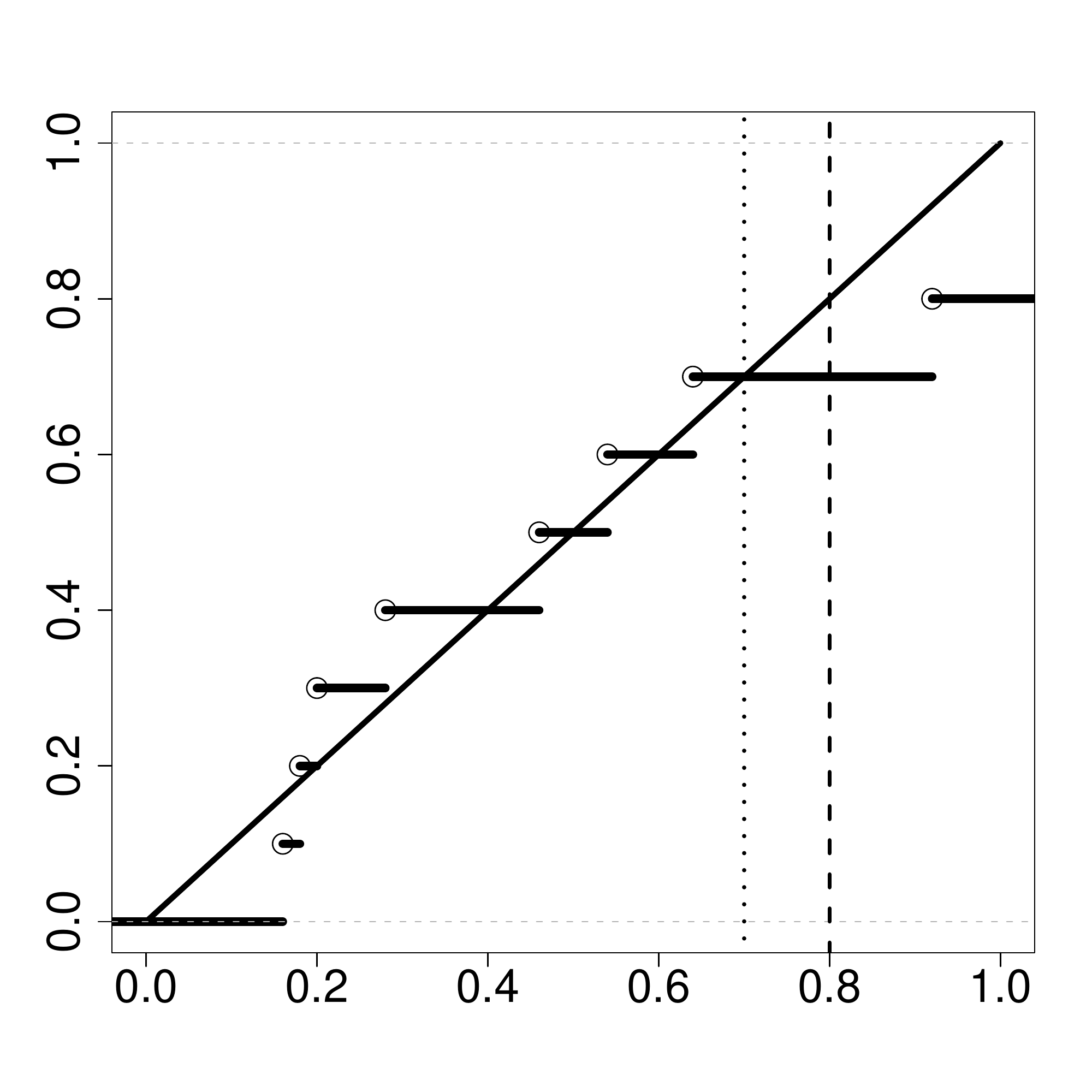}&\includegraphics[scale=0.34]{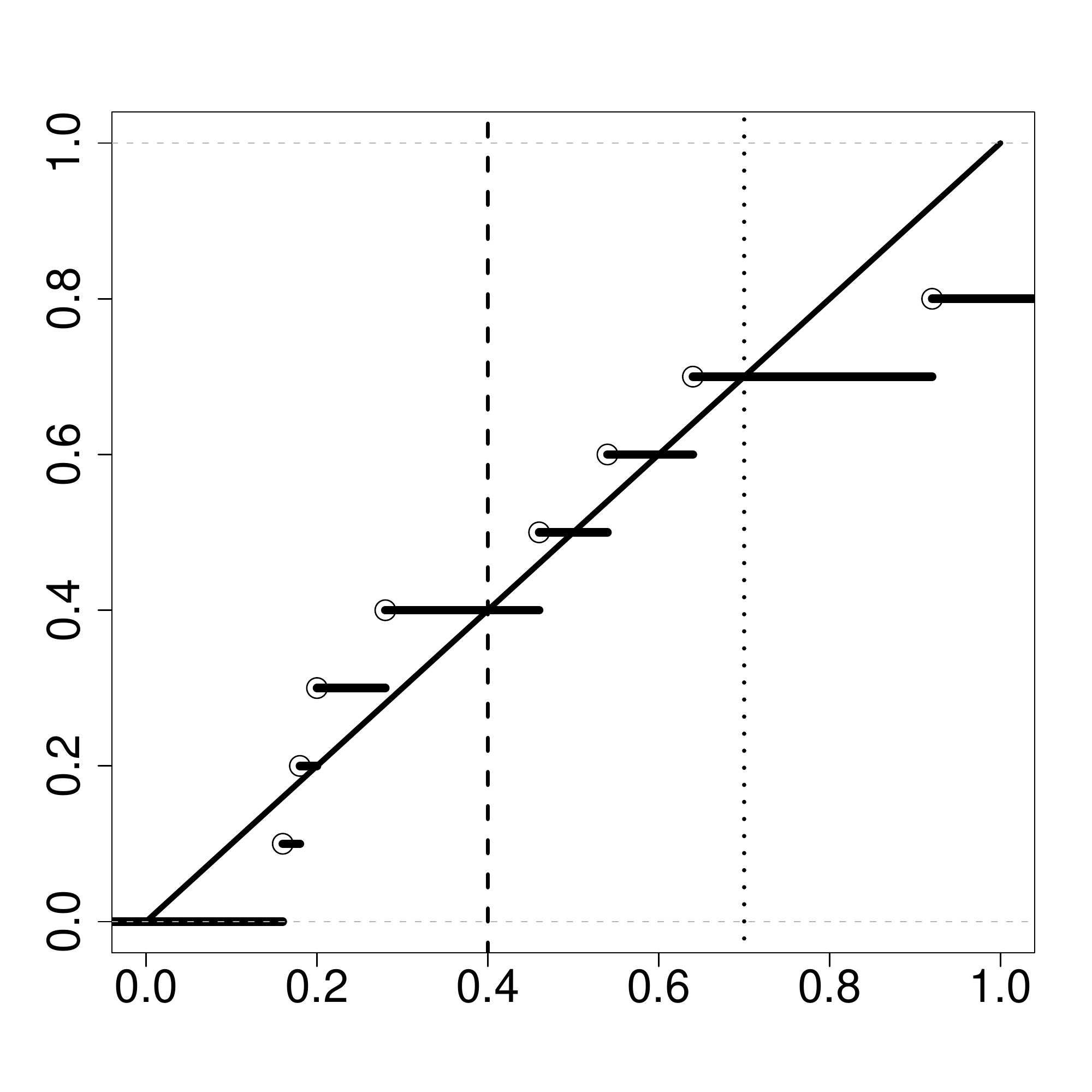}
  \end{tabular}
  \caption{Plot of $\mathbb{G}_m(\rho(\cdot))$ for $\rho(x)=0.5 x$ and $m=10$ $p$-values. The order $\lambda$
  of the SUD procedure is displayed by the dashed line while the value of $\wh{k}/m$ is displayed by the dotted line.
  Left: the SUD procedure of order $\lambda=8$ is such that $\wh{k}/m=\TSud(\lambda/m,\hat{\mathbb{G}}_m)=0.7$. Right: the SUD procedure of order $\lambda=4$ satisfies $\wh{k}/m=0.7$ but $\TSud(\lambda/m,\hat{\mathbb{G}}_m)=0.4$. \label{fig:contrex-SUD}}
\end{figure}

\subsection{Proof of Theorem~\ref{thm}}\label{sec:proof-mainthm}

Let us first prove the result in $\FM$ model. 
We recall $\hat{\mathbb{G}}_{m}(x):=m^{-1}\sum_{i=1}^{m} \ind{p_i\leq x}$, $G^{DU}_\zeta(x):=(1-\zeta) + \zeta x$ and we put $\hat{u}:=\hat{k}/m$, where $\hat{k}$ is defined by \eqref{equ-defSUD}. 
We can easily check from the definition \eqref{equ-Udef}
that if $G,G'$ are two nondecreasing functions such that
$\forall x \in [0,1], G(x) \geq G'(x)$, then $\TSud(\lambda,G) \geq \TSud(\lambda,G')$.
Based on the bound \eqref{equ-fund}, we deduce that $\forall \delta\in(0,1),$
\begin{align*}
\left\{\sup_{x\in[0,1]} |\hat{\mathbb{G}}_{m}(x)-G^{DU}_\zeta(x)| \leq  \delta -
\frac{1}{m} \right\}\subset \left\{u^-_\delta\leq \hat{u}\leq  u^+_\delta\right\}.
\end{align*}
As a consequence, in the $\mathrm{DU}(m,k)$ model, $k\in\{m_0-1,m_0\}$, by using $\hat{\mathbb{G}}_{m}(x)-G^{DU}_\zeta(x)= (k/m-\zeta) (1-\hat{\mathbb{G}}_{k}(x)) - \zeta (\hat{\mathbb{G}}_{k}(x)-x)$, we obtain
\begin{align}
\Omega_\delta(k)&:=\left\{\sup_{x\in[0,1]}|\hat{\mathbb{G}}_{k}(x)-x|   \leq   \zeta^{-1}(\delta -\nu -\frac{1}{m}) \right\} \subset \left\{u^-_\delta\leq \hat{u}\leq  u^+_\delta\right\}. \label{equ-borninfsup}
\end{align}
Remember that $p_1,\dots,p_{m_0}$ correspond to true nulls, hence,  when $k\in\{m_0-1,m_0\}$, $\hat{\mathbb{G}}_{k}$ involves only  variables which are i.i.d. uniform.
As a consequence, by using the DKW inequality with Massart's \citeyearpar{Mass1990} optimal constant, 
we have in the $\mathrm{DU}(m,k)$ model and for $k\in\{m_0-1,m_0\}$,
\begin{align}
\P_{DU(m,k)}[(\Omega_\delta(k))^c] &\leq 2\exp\left\{-2 k \left(\delta-\nu
-\frac{1}{m} \right)_+^2/\zeta^2 \right\}\nonumber\\
&\leq2\exp\left\{-2 m \left(\delta-\nu -\frac{1}{m} \right)_+^2 (1-\nu/\zeta)_+ \right\}\label{ineg-conc},
\end{align}
because $k/m\geq \zeta- \nu$ and $\zeta\leq 1$.

\subsubsection{Upper bound}

Let $q(x)=\rho(x)/x$ when $x\in (0,1]$ and $q(0)=\lim_{x\rightarrow 0^+} \rho(x)/x$ (the limit exists in $\R$ because $x\in(0,1]\rightarrow \rho(x)/x$ is non-decreasing).
Applying Theorem~4.3 of \citet{FDR2009}, we have 
\begin{align}
 \FDR(\SUD(\mbf{t}),m_0,F)&\leq  \frac{m_0}{m}\E_{\mathrm{DU}(m,m_0-1)} [q(\hat{u})]\nonumber\\
 &\leq  \frac{m_0}{m} \frac{\rho(u^+_\delta)}{u^+_\delta} +   \frac{m_0}{m} \P_{\mathrm{DU}(m,m_0-1)}[(\Omega_\delta(m_0-1))^c],\label{equ1}
 \end{align}
because $q$ is non-decreasing,  $u^+_\delta$ is positive and by \eqref{equ-borninfsup}. Next, by \eqref{ineg-conc}, 
we obtain the following upper-bound:
\begin{align}
 \FDR(\SUD(\mbf{t}),m_0,F) &\leq  \frac{m_0}{m} \frac{\rho(u^+_\delta)}{u^+_\delta} +   \frac{m_0}{m} 2\exp\left\{-2 m \left(\delta-\nu-\frac{1}{m}\right)_+^2 (1-\nu/\zeta)_+ \right\}.\label{upper-bound}
 \end{align}

\subsubsection{Lower bound}

In the model $\DU$ with $m_0<m$, we have $\hat{u}>0$ a.s. and thus 
\begin{align}
 \FDR(\SUD(\mbf{t}),m_0,F\equiv1)&= \frac{m_0}{m} \E_{\DU} \left(\frac{ \hat{\mathbb{G}}_{m_0}(\rho(\hat{u}))}{\hat{u}} \right)\nonumber\\
 &\geq\frac{m_0}{m} \E_{\DU} \left(\frac{ \hat{\mathbb{G}}_{m_0}(\rho(\hat{u}))}{\hat{u}}  \ind{\Omega_\delta(m_0)}\right)\nonumber\\
 &\geq \frac{m_0}{m} \E_{\DU} \left(\frac{ \hat{\mathbb{G}}_{m_0}(\rho(u^-_\delta))}{u^+_\delta}  \ind{\Omega_\delta(m_0)}\right)\nonumber\\
 &\geq \frac{m_0}{m} \frac{\rho(u^-_\delta)}{u^+_\delta}  - \frac{m_0}{m} \frac{1}{1-\zeta}\P_{\DU} \left[(\Omega_\delta(m_0))^c\right],\nonumber
   \end{align}
by \eqref{equ-borninfsup} and because $u^+_\delta\geq 1-\zeta$. From \eqref{ineg-conc}, we obtain the lower-bound
   \begin{align}
 \FDR(\SUD(\mbf{t}),m_0,F\equiv1)&\geq \frac{m_0}{m} \frac{\rho(u^-_\delta)}{u^+_\delta}  - \frac{m_0}{m} \frac{1}{1-\zeta}2\exp\left\{-2 m \left(\delta-\nu-\frac{1}{m}\right)_+^2 (1-\nu/\zeta)_+ \right\}.\label{lower-bound}
   \end{align}
 Finally, \eqref{upper-bound} and  \eqref{lower-bound} yield the result.  

\subsubsection{Proof for random mixture model}

In the $\RM$ model with $\pi_0=\zeta$, the distribution of $m_0$ is binomial with parameters $(m,\zeta)$. In particular, $\nu$ is random. However, we can write
\begin{align}
\E[ \FDP(\SUD(\mbf{t}),m_0)] \leq& \:\E\left[ \FDP(\SUD(\mbf{t}),m_0) \ind{\nu \leq \gamma} \right] \nonumber\\
&+  2\:\P\left(|m_0/m-\pi_0|>\gamma -1/m\right).\label{equ-interm}
  \end{align}
Additionally, using Hoeffding's \citeyearpar{Hoeff1963} inequality, we can write
\begin{align}\label{equ:Hoeff}\P\left(|m_0/m-\pi_0|>\gamma -1/m\right)\leq 2 e^{-2 m (\gamma -1/m)_+^2}.\end{align}
Combining \eqref{equ-interm} and \eqref{equ:Hoeff} with \eqref{equ-final} finishes the proof.

\subsection{Proof of Corollary~\ref{thmasymp}}\label{proof:thmasymp}

First consider the $\FM$ model with $m_0<m$ (the case $m_0=m$ is trivial). Let $\nu_m=\max_{k\in\{m_0-1,m_0\}}\{|k/m-\zeta_m|\}$ and consider $\delta_m \in (\nu_m, 1)$ that satisfies for large $m$,
\begin{equation}
2(1-\nu_m/\zeta_m) (\delta_m-\nu_m-1/m)^2 = (\log m)/m,
\end{equation}
so that $e^{-2 m \left(\delta_m-\nu_m-1/m\right)_+^2 (1-\nu_m/\zeta_m)_+ }=1/m$ for large $m$.
Since $\nu_m\leq 2/m$ by assumption, we have $\delta_m \propto  \sqrt{(\log m)/m}$.
From Theorem~\ref{thm}, it is sufficient to prove 
$$\frac{\rho(u^+_{\delta_m})-\rho(u^-_{\delta_m})}{u^+_{\delta_m}} =O(\delta_m/(1-\zeta_m)).$$
From Section~\ref{sec:finite}, this holds for the linear critical value function. This also holds for the AORC as soon as $\lambda_m/m < v_{\delta_m}$, which is the case for large $m$ by assumption.

The proof in the $\RM$ model is similar by taking additionally $\gamma_m \propto  \sqrt{(\log m)/m}$ such that 
$
2 e^{-2 m (\gamma_m -1/m)_+^2}=1/m.
$

\section*{Acknowledgements}

This work was supported by the French Agence Nationale de la Recherche (ANR grant references: ANR-09-JCJC-0027-01, ANR-PARCIMONIE, ANR-09-JCJC-0101-01) and the French ministry of foreign and european affairs (EGIDE - PROCOPE project number 21887 NJ).

\appendix

 \section{Formulas for FDP distribution} \label{fdp-formulas}

From Theorem~\ref{th-RV} and \eqref{SUD-comb}, we can compute the exact c.d.f. of the FDP of any SUD procedure in the following way, for each fixed number  $m\geq 2$ of hypotheses.

\begin{corollary}
Let $\lambda\in\{1,\dots,m\}$ and consider any threshold collection $\mbf{t}$. Fix an arbitrary $x\in(0,1)$. 
Then the following holds:
\begin{itemize}
\item[(i)] in the model $\RM$, for any $\pi_0\in[0,1]$,  $F\in\mathcal{F}$, we have
\begin{align}
\P( \FDP(\SUD(\mbf{t})) \leq x)=&\: \sum_{k=0}^{\lambda-1} \sum_{j=0}^{\lfloor x k \rfloor} \surm_{m,\pi_0,F}((t_\lambda\wedge t_j)_{1\leq j \leq m},k,j)  \nonumber\\
+&\sum_{k=\lambda}^m  \sum_{j=0}^{\lfloor x k \rfloor} \sdrm_{m,\pi_0,F}((t_\lambda\vee t_j)_{1\leq j \leq m},k,j) 
;\label{FDPsud_rm}
\end{align}
\item[(ii)] in the model $\FM$, for any $m_0\in\{0,\dots,m\}$,  $F\in\mathcal{F}$,  we have
\begin{align}
\P( \FDP(\SUD(\mbf{t})) \leq x)=&\: \sum_{k=0}^{\lambda-1} \sum_{j=0\vee (k-m+m_0)}^{m_0\wedge\lfloor x k \rfloor} \sufm_{m,m_0,F}((t_\lambda\wedge t_j)_{1\leq j \leq m},k,j)  \nonumber\\
+&\sum_{k=\lambda}^m  \sum_{j=0\vee (k-m+m_0)}^{m_0\wedge\lfloor x k \rfloor} \sdfm_{m,m_0,F}((t_\lambda\vee t_j)_{1\leq j \leq m},k,j) .\label{FDPsud_fm}
\end{align}
\end{itemize}
\end{corollary}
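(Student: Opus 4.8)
The plan is to combine the step-up/step-down decomposition \eqref{SUD-comb} with the joint-distribution formulas of Theorem~\ref{th-RV}. The starting observation is that the false discovery proportion depends on the rejection set $R$ only through the pair $(|R\cap\{1,\dots,m_0\}|,|R|)$, via $\FDP=j/(k\vee 1)$ when $|R\cap\{1,\dots,m_0\}|=j$ and $|R|=k$. Since $x\in(0,1)$ and $j,k$ are integers, for $k\geq 1$ we have $j/k\leq x \iff j\leq \lfloor xk\rfloor$, while for $k=0$ one has $j=0$ forced and $\FDP=0\leq x$, which is again captured by $j\leq\lfloor xk\rfloor=0$. Hence $\P(\FDP(\SUD(\mbf{t}))\leq x)=\sum_{k=0}^{m}\sum_{j=0}^{\lfloor xk\rfloor}\P(|R\cap\{1,\dots,m_0\}|=j,\,|R|=k)$ with $R=\SUD(\mbf{t})$, and it remains to evaluate each of these joint probabilities.

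Next I would split each joint probability according to the partition recorded after \eqref{SUD-comb}. On $\{p_{(\lambda)}>t_\lambda\}=\{|\SU((t_\lambda\wedge t_j)_{1\leq j\leq m})|<\lambda\}$ the SUD procedure coincides as a set with $\SU((t_\lambda\wedge t_j)_{j})$ and satisfies $|\SUD(\mbf{t})|<\lambda$ there; on the complement $\{p_{(\lambda)}\leq t_\lambda\}=\{|\SD((t_\lambda\vee t_j)_{j})|\geq\lambda\}$ it coincides with $\SD((t_\lambda\vee t_j)_{j})$ and satisfies $|\SUD(\mbf{t})|\geq\lambda$. Consequently, for any $k<\lambda$ the step-down part contributes $0$ to $\P(|R\cap\{1,\dots,m_0\}|=j,|R|=k)$, whereas for $k\geq\lambda$ the step-up part contributes $0$; this is exactly what produces the two index ranges $0\leq k\leq\lambda-1$ and $\lambda\leq k\leq m$ in the target formulas.

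The key step is then to drop the conditioning on the partition event. For $k<\lambda$, the event $\{|\SU((t_\lambda\wedge t_j)_{j})|=k\}$ is already contained in $\{|\SU((t_\lambda\wedge t_j)_{j})|<\lambda\}=\{p_{(\lambda)}>t_\lambda\}$, so intersecting with the partition event changes nothing and $\P(|\SUD(\mbf{t})\cap\{1,\dots,m_0\}|=j,\,|\SUD(\mbf{t})|=k)$ equals the unconditioned step-up joint probability for the capped collection $(t_\lambda\wedge t_j)_{j}$, which by Theorem~\ref{th-RV} (via \eqref{proba-RM} or \eqref{proba-FM}) is $\surm_{m,\pi_0,F}$, respectively $\sufm_{m,m_0,F}$. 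The symmetric argument, using $\{|\SD((t_\lambda\vee t_j)_{j})|=k\}\subset\{p_{(\lambda)}\leq t_\lambda\}$ for $k\geq\lambda$, yields the $\sdrm$, respectively $\sdfm$, terms evaluated at the floored collection $(t_\lambda\vee t_j)_{j}$. Substituting these into the double sum, and in the FM case restricting the inner index to the support $0\vee(k-m+m_0)\leq j\leq m_0\wedge\lfloor xk\rfloor$ dictated by Theorem~\ref{th-RV}, gives precisely \eqref{FDPsud_rm} and \eqref{FDPsud_fm}.

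The main obstacle is the bookkeeping in the third paragraph: one must verify that the cardinality constraints $|\SU(\cdot)|<\lambda$ and $|\SD(\cdot)|\geq\lambda$ line up exactly with the summation ranges $k<\lambda$ and $k\geq\lambda$, so that the indicator of the partition event can be removed without altering the joint law. This hinges on the two set identities stated just after \eqref{SUD-comb} together with the monotonicity of the modified threshold collections (for $j\leq\lambda$ one has $t_\lambda\wedge t_j=t_j$ and $t_\lambda\vee t_j=t_\lambda$, and vice versa for $j>\lambda$); once these are in place the corollary follows by direct substitution into Theorem~\ref{th-RV}.
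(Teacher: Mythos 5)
Your proof is correct and follows essentially the same route as the paper, which derives this corollary directly from Theorem~\ref{th-RV} together with the decomposition \eqref{SUD-comb} and the observation that $\{p_{(\lambda)}>t_\lambda\}$ and $\{p_{(\lambda)}\leq t_\lambda\}$ partition the probability space; your third paragraph simply makes explicit the bookkeeping (that $\{|R|=k\}$ for $k<\lambda$, resp.\ $k\geq\lambda$, is already contained in the relevant cell of the partition) that the paper leaves implicit. Nothing is missing.
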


\section{FDR(SD) can exceed FDR(SU) in an extreme configuration}\label{result-extrem}

\begin{lemma}
Consider the $\FM$ model with $F(x)=\ind{x\geq 1}$ (i.e., all the $p$-values under the alternative are constantly equal to $1$). Consider the threshold collection $\mbf{t}$ defined by $t_k=t_0$, $1\leq k\leq m-1$ and $t_m=1$, for some $t_0\in(0,1)$. Then we have for any $\lambda\in\{1,\dots,m-1\}$,
\begin{align*}
\FDR(\mbox{SUD}_{\lambda}(\mbf{t}))&= 1-(1-t_0)^{m_0};\\
\FDR( \mbox{SU}(\mbf{t})) &= m_0/m.
\end{align*}
In particular, $\FDR(\mbox{SD}(\mbf{t}))>\FDR( \mbox{SU}(\mbf{t})) $ for $t_0>1-(1-m_0/m)^{1/m_0}$.
\end{lemma}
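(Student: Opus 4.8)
The plan is to exploit the extreme rigidity of this configuration. Since $F$ is a point mass at $1$, the $m-m_0$ alternative $p$-values are all equal to $1$ and hence (a.s.) exceed the $m_0$ null $p$-values, which are i.i.d.\ uniform on $(0,1)$. Writing $N_0:=|\{i\leq m_0\telque p_i\leq t_0\}|\sim\mathcal{B}(m_0,t_0)$ for the number of nulls below the common level $t_0$, the ordered sample satisfies $p_{(k)}\leq t_0$ iff $k\leq N_0$ for $k\leq m_0$, while $p_{(m_0+1)}=\dots=p_{(m)}=1$. The first reduction I would make is that the false discovery proportion can take only three values: because $t_k=t_0$ for $k<m$ and $t_m=1$, the rejection threshold $t_{\wh{k}}$ equals $0$ (if $\wh{k}=0$), $t_0$ (if $1\leq\wh{k}\leq m-1$), or $1$ (if $\wh{k}=m$). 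Consequently, when $\wh{k}<m$ the rejection set consists exactly of the $N_0$ nulls below $t_0$, all false discoveries, so $\FDP=\ind{N_0\geq1}$; when $\wh{k}=m$ every hypothesis is rejected and $\FDP=m_0/m$. Thus the whole computation reduces to deciding, for each procedure, whether $\wh{k}=m$.

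For the step-up procedure ($\lambda=m$) this is immediate: $t_\lambda=t_m=1\geq p_{(m)}=1$, so by \eqref{equ-defSUD} we are in the first branch with index set reduced to $\{m\}$, and the condition $p_{(m)}\leq t_m$ holds, forcing $\wh{k}=m$ surely. Hence $\FDP=m_0/m$ deterministically and $\FDR(\SU(\mbf{t}))=m_0/m$.

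For $\SUD(\mbf{t})$ with $\lambda<m$ the core step is to prove that $\wh{k}=N_0$ in all cases. I would analyse the two branches of \eqref{equ-defSUD} and check they agree. If $p_{(\lambda)}>t_\lambda=t_0$, i.e.\ $\lambda>N_0$, the step-down branch gives $\wh{k}=\max\{k\leq\lambda\telque p_{(k)}\leq t_0\}=N_0$. If $p_{(\lambda)}\leq t_0$, i.e.\ $\lambda\leq N_0$, the step-up branch scans upward from $\lambda$ and the condition $p_{(k')}\leq t_{k'}=t_0$ holds exactly for $k'\leq N_0$; the chain therefore breaks at $k'=N_0+1$. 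The decisive point is that $N_0+1\leq m_0+1\leq m-1$ whenever $m-m_0\geq2$, so the break happens strictly below the top index $m$ (where $t_m=1$ would have let the chain through). Either way $\wh{k}=N_0<m$, whence $\FDP=\ind{N_0\geq1}$ and $\FDR(\SUD(\mbf{t}))=\P(N_0\geq1)=1-(1-t_0)^{m_0}$.

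The main obstacle, and the precise place where the two-population structure is used, is this last verification that the step-up chain cannot climb to $\wh{k}=m$: it rests on having at least two alternative $p$-values at $1$ (so that $p_{(m-1)}=1>t_0$ blocks the ascent), i.e.\ on $m-m_0\geq2$. When $m-m_0=1$ the lone alternative sits at position $m$, and on the event $\{N_0=m_0\}$ the step-up chain reaches $\wh{k}=m$, contributing $\FDP=m_0/m$ there and producing a correction of order $t_0^{m_0}/m$; this boundary case should be excluded, so the formula is to be read for $m-m_0\geq2$. Finally, the announced strict inequality for $\lambda=1$ is one line: $1-(1-t_0)^{m_0}>m_0/m\iff(1-t_0)^{m_0}<1-m_0/m\iff t_0>1-(1-m_0/m)^{1/m_0}$.
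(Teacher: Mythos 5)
Your proof is correct. The paper itself gives no argument for this lemma (it is explicitly ``left to the reader''), and your reduction --- observing that the rejection threshold can only be $0$, $t_0$ or $1$, so that everything hinges on whether $\wh{k}=m$, and then checking both branches of \eqref{equ-defSUD} to get $\wh{k}=N_0$ for $\lambda\leq m-1$ --- is exactly the intended straightforward computation. Your treatment of the step-up branch is the only delicate point and you handle it correctly: the chain is blocked at $N_0+1\leq m-1$ precisely because $p_{(m-1)}=1>t_0$, which requires $m-m_0\geq 2$. Your observation that the displayed formula actually fails when $m-m_0=1$ (where the event $\{N_0=m_0\}$ lets the step-up part climb to $\wh{k}=m$ and produces the correction $-t_0^{m_0}/m$) is a genuine, if minor, caveat missing from the statement of the lemma; it is harmless for the paper's illustration ($m=10$, $m_0=7$) and for the conclusion $\FDR(\SD(\mbf{t}))>\FDR(\SU(\mbf{t}))$, but the hypothesis $m-m_0\geq 2$ should indeed be read into the statement.
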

The proof is straightforward and is left to the reader. As an illustration, for $m=10$ and $m_0=7$, $1-(1-m_0/m)^{1/m_0}\simeq 0.158$.

\section{DU is an LFC for the $k$-FWER}\label{sec:FWER_LFC}

We state here for the sake of completeness a straightforward generalization of Lemma~1 of \citet{FG2009}
(see also Lemma~2.2 of \citealp{Gont2010}) concerning the LFCs of multiple testing procedures under
a class of type I criteria containing in particular the $k$-FWER
(but not the FDR, as pointed out in the introduction). This result should be considered as
already known by experts in the field, although we failed to locate a precise reference for it.
The setting considered assumes independence of $p$-values corresponding to true nulls, but 
is more general than the fixed mixture model, since 
the $p$-values corresponding to true null hypotheses are only assumed to be stochastically larger
than a uniform variable on [0,1]; also, the $p$-values corresponding to alternatives are not assumed to
be identically distributed nor independent.

\begin{lemma}
Let $m\geq 1$ and $m_0 \in \{0,\dots,m\}$ be fixed. Let $\bp=(p_1,\ldots,p_m)$ be a family $p$-values
with distribution by $P$ such that $(p_i)_{1\leq i \leq m_0}$ form an independent family 
of variables, each stochastically lower bounded by a uniform variable.
Assume that $\delta$ is a multiple testing procedure rejecting all hypotheses having $p$-value
less than a data-dependent threshold $t^*(\bp)$.
Let $R$ be a type I error criterion taking the form
\[
R(P,\delta) = \mbe_{\bp \sim P}[\phi(V_m(\delta(\bp)))],
\]
where $V_m$ is defined in \eqref{eq:defV} and $\phi$ is a function from $\mbn$ to $\mbr$.

Assume the two following conditions are satisfied:

\noindent (i) $t^*$ is a nonincreasing function of each $p$-value;\\
\noindent (ii) $\phi$ is nondecreasing.

Then it holds that
\[
R(P,\delta) \leq R(\DU,\delta),
\]
that is, $\DU$ is an LFC for $\delta$ among the set of distributions satisfying the properties
described above.
\end{lemma}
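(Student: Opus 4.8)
The plan is to prove the statement with a single coupling argument, reduced to a coordinatewise monotonicity property of the number of type I errors. Write $V_m(\bp) = |\{1 \le i \le m_0 : p_i \le t^*(\bp)\}|$, which coincides with $V_m(\delta(\bp))$ since $\delta$ rejects exactly the hypotheses whose $p$-value does not exceed $t^*$. The first thing I would establish is that, as a function of the vector $\bp \in [0,1]^m$, the map $\bp \mapsto V_m(\bp)$ is nonincreasing in each coordinate. To see this, lower a single entry $p_j$ to some $p_j' \le p_j$, leaving the others fixed. By hypothesis (i), $t^*$ is nonincreasing in each argument, so the threshold can only grow: $t^*(\bp') \ge t^*(\bp)$. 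Consequently, for every true null $i \le m_0$ with $i \ne j$ the indicator $\ind{p_i \le t^*}$ can only increase; and if $j \le m_0$ itself, then $p_j' \le p_j$ combined with the larger threshold makes $\ind{p_j \le t^*}$ nondecrease as well. Hence $V_m(\bp') \ge V_m(\bp)$.

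The second step is to build the coupling realizing the $\DU$ configuration below $\bp$. Since the true null $p$-values $(p_i)_{1 \le i \le m_0}$ are mutually independent and each is stochastically larger than a uniform variable, the standard characterization of stochastic dominance (Strassen's theorem), applied coordinate by coordinate, produces on a common probability space i.i.d.\ uniform variables $(U_i)_{1 \le i \le m_0}$ with $U_i \le p_i$ almost surely. I would then set $\tilde p_i = U_i$ for $i \le m_0$ and $\tilde p_i = 0$ for $i > m_0$. By construction $\tilde\bp \sim \DU$, and $\tilde\bp \le \bp$ coordinatewise, the alternatives being trivially dominated by $0$.

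Combining the two steps finishes the argument. The coordinatewise monotonicity gives $V_m(\tilde\bp) \ge V_m(\bp)$ almost surely on the coupled space, and since $\phi$ is nondecreasing by (ii), this yields $\phi(V_m(\bp)) \le \phi(V_m(\tilde\bp))$ almost surely. Taking expectations gives $R(P,\delta) = \E[\phi(V_m(\bp))] \le \E[\phi(V_m(\tilde\bp))] = R(\DU,\delta)$, which is the claim.

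The hard part, such as it is, lies entirely in the monotonicity of $V_m$: one must notice that lowering an alternative $p$-value helps only indirectly, by raising the data-dependent threshold $t^*$, while lowering a true null $p$-value helps both directly and through the threshold, and that both effects push $V_m$ in the same direction. This is precisely what makes assumption (i) the correct hypothesis, and it is the step where a careless argument could go wrong. The coupling step is routine, the only point requiring attention being that independence of the true nulls is exactly what allows the dominating uniforms to be chosen independent, so that $\tilde\bp$ has the $\DU$ law; no assumption on the joint law of the alternatives, or on their dependence with the nulls, is needed.
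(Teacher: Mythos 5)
Your proof is correct and follows essentially the same route as the paper's: coordinatewise monotonicity of $\phi(V_m(\delta(\cdot)))$ under (i)--(ii), combined with stochastic domination of the true-null $p$-values by independent uniforms. The only difference is organizational: the paper performs the reduction in two stages (first setting the alternative $p$-values to zero pointwise, then invoking a cited result --- Lemma A.11 as quoted by Gontscharuk (2010) --- to replace the null $p$-values by exact uniforms), whereas you carry out both reductions in a single explicit coupling, which makes the argument self-contained but is not a genuinely different idea.
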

\begin{proof}
Using (i) and (ii) together entails that $\bp \mapsto \phi(V_m(\delta(\bp)))$ is a nonincreasing 
function of each $p$-value. Denote $\bp_0=(p_1,\ldots,p_{m_0},0,\ldots,0)$ the $p$-value family 
obtained by replacing $p_i$ by 0 for $i> m_0$, and $P_0$ the distribution of $\bp_0$
when $\bp$ has distribution $P$.
 Obviously we have
\[
\mbe_{\bp \sim P}[\phi(V_m(\delta(\bp)))] \leq \mbe_{\bp \sim P}[\phi(V_m(\delta(\bp_0)))]
= \mbe_{\bp \sim P_0}[\phi(V_m(\delta(\bp)))]
\]
Now applying Lemma A.11 as cited by \citet{Gont2010}, we obtain
\[
\mbe_{\bp \sim P_0}[\phi(V_m(\delta(\bp)))] \leq \mbe_{\bp \sim \DU}[\phi(V_m(\delta(\bp)))]\,,
\]
and thus the conclusion.
\end{proof}
A straightforward (though less immediately interpretable) extension
of this result 
to procedures that are not necessarily threshold-based is to replace assumption
(i) by (i'): $\bp \mapsto V_m(\delta(\bp))$ is a non increasing function of
each $p$-value.

\bibliography{biblio}
\bibliographystyle{abbrvnat}

%



%


\vskip .65cm
\noindent
Institut f\"{u}r Mathematik,   Universit\"{a}t Potsdam,  Germany
\vskip 2pt
\noindent
E-mail: gilles.blanchard@math.uni-potsdam.de
\vskip 2pt
\noindent
Institut f\"{u}r Mathematik,   Humboldt-Universit\"{a}t,  Germany
\vskip 2pt
\noindent
E-mail: dickhaus@math.hu-berlin.de
\vskip 2pt
\noindent
LPMA, UPMC, Universit\'e Paris 6, France
\vskip 2pt
\noindent
E-mail: etienne.roquain@upmc.fr
\vskip 2pt
\noindent
LPMA, UPMC, Universit\'e Paris 6, France
\vskip 2pt
\noindent
E-mail: fanny.villers@upmc.fr
\vskip .3cm
\end{document}